\newcommand{\slimparagraph}[1]{\paragraph{#1}}
\theoremstyle{acmdefinition}
\newtheorem{remark}[theorem]{Remark}}
\let\doteqrel\doteq
\renewcommand*{\doteq}{\mathbin{\doteqrel}}
\newcommand*{\lnextbase}{\ocircle}
\newcommand*{\lglobbase}{\square}
\newcommand*{\lnextsup}[1]{\mathop{\lnextbase^{#1}}}
\newcommand*{\lguntil}[4]{#3 \mathbin{\mathcal{U}^{#1}_{#2}} #4}
\newcommand*{\luntil}[3]{#2 \mathbin{\mathcal{U}^{#1}} #3}
\newcommand*{\lglob}[1]{\mathop{\lglobbase^{#1}}}
\newcommand*{\lnext}{\mathop{\lnextbase}}
\newcommand*{\lluntil}[2]{#1 \mathbin{\mathcal{U}} #2}
\newcommand*{\llglob}{\mathop{\lglobbase}}
\newcommand*{\ldnext}{\lnextsup{d}}
\newcommand*{\lunext}{\lnextsup{u}}
\newcommand*{\lcnext}[1]{\mathop{\chi_F^{#1}}}
\newcommand*{\lcdnext}{\lcnext{d}}
\newcommand*{\lcunext}{\lcnext{u}}
\newcommand*{\lcduntil}[2]{{#1} \mathbin{\mathcal{U}_\chi^d} {#2}}
\newcommand*{\lcuuntil}[2]{{#1} \mathbin{\mathcal{U}_\chi^u} {#2}}
\newcommand*{\chain}{\chi}
\newcommand*{\powset}[1]{{\mathcal{P}(#1)}}
\newcommand*{\prf}{\pi}
\newcommand*{\pr}{\mathrel{\prf}}
\newcommand{\cmark}{\ding{51}}
\newcommand{\xmark}{\ding{55}}
\DeclareMathOperator{\pfx}{pf}
\newcommand*{\pvar}[3]{\llbracket {#1}, {#2} \, | \, {#3} \rrbracket}
\newcommand*{\nex}[1]{\llbracket {#1} \, \mathord{\uparrow} \rrbracket}
\newcommand*{\clos}[1]{\operatorname{Cl}({#1})}
\newcommand*{\clospn}[1]{\operatorname{Cl}_\mathit{pend}({#1})}
\newcommand*{\closst}[1]{\operatorname{Cl}_\mathit{st}({#1})}
\newcommand*{\atoms}[1]{\operatorname{Atoms}({#1})}
\DeclareMathOperator{\aconstr}{\mathscr{A}\mathscr{C}}
\DeclareMathOperator{\dconstr}{\mathscr{D}\mathscr{R}}
\DeclareMathOperator{\subf}{subf}
\DeclareMathOperator{\ssubf}{ssubf}
\newcommand*{\lcall}{\mathbf{call}}
\newcommand*{\lret}{\mathbf{ret}}
\newcommand*{\lstm}{\mathbf{stm}}
\newcommand*{\lobs}{\mathbf{obs}}
\newcommand*{\lqry}{\mathbf{qry}}
\newcommand{\shortstackrel}[3][.3ex]{%
  \mathrel{\vbox{\offinterlineskip\ialign{%
    \hfil##\hfil\cr
    $\scriptscriptstyle{#2}$\cr
    \noalign{\kern#1}
    $#3$\cr
}}}}
\newcommand{\apush}[1]{\shortstackrel{#1}{\rightarrow}}
\newcommand{\ashift}[1]{\shortstackrel{#1}{\dashrightarrow}}
\newcommand{\apop}[1]{\shortstackrel{#1}{\Rightarrow}}
\newcommand{\asupp}[1]{\stackrel{#1}{\leadsto}}
\newcommand{\suppedge}{\rightarrow}
\newcommand{\spush}{\shortstackrel[-.1ex]{\mathit{push}}{\longrightarrow}}
\newcommand{\sshift}{\shortstackrel[-.1ex]{\mathit{shift}}{\longrightarrow}}
\newcommand{\ssupp}{\shortstackrel[-.1ex]{\mathit{supp}}{\longrightarrow}}
\newcommand{\gedge}{\rightarrow}
\newcommand{\ochain}[3]{{}^{#1}[ #2 ]{}^{#3}}
\newcommand{\opchain}[2]{{}^{#1}[ #2 }
\newcommand{\config}[3]{\langle #1, \allowbreak #2, \allowbreak #3 \rangle}
\newcommand{\symb}[1]{\mathop{smb}(#1)}
\newcommand{\State}[1]{\mathop{st}(#1)}
\newcommand{\tp}{\mathop{top}}
\acrodef{ACP}{Algebra of Communicating Processes}
\acrodef{AP}{Atomic Proposition}
\acrodef{BDD}{Boolean Decision Diagram}
\acrodef{BPA}{Basic Process Algebra}
\acrodef{BSCC}{Bottom Strongly Connected Component}
\acrodef{CEGAR}{Counterexample-Guided Abstraction Refinement}
\acrodef{CFG}{Context-Free Grammar}
\acrodef{CFL}{Context-Free Language}
\acrodef{CPU}{Central Processing Unit}
\acrodef{CTL}{Computation Tree Logic}
\acrodef{DCFL}{De\-ter\-min\-is\-tic Context-Free Language}
\acrodef{DFS}{Depth-First Search}
\acrodef{DHP}{Downward Hierarchical Path}
\acrodef{DSP}{Downward Summary Path}
\acrodef{DS}{Downward Summary}
\acrodef{EF}{Ehrenfeucht-Fra\"iss\'e}
\acrodef{ERSM}{(Extended) Recursive State Machine}
\acrodef{ETR}{Existential first-order Theory of Real numbers}
\acrodef{FOL}{First-Order Logic}
\acrodef{FO}{First-Order}
\acrodef{FSA}{Finite-State Automaton}
\acrodef{FSM}{Finite-State Machine}
\acrodef{JDK}{Java Development Kit}
\acrodef{lhs}{left-hand side}
\acrodef{LIFO}{Last In First Out}
\acrodef{LR}{Left-Recursive}
\acrodef{LTL}{Linear Temporal Logic}
\acrodef{MC}{Model Checking}
\acrodef{MSOL}{Monadic Second-Order Logic}
\acrodef{MSO}{Monadic Second-Order}
\acrodef{NBA}{Nondeterministic B\"uchi Automaton}
\acrodef{NWA}{Nested Words Automaton}
\acrodef{NWTL}{Nested Words Temporal Logic}
\acrodef{OPA}{Operator Precedence Automaton}
\acrodef{OPG}{Operator Precedence Grammar}
\acrodef{OPL}{Operator Precedence Language}
\acrodef{OPM}{Operator Precedence Matrix}
\acrodef{OPTL}{Operator Precedence Temporal Logic}
\acrodef{OP}{Operator Precedence}
\acrodef{OVI}{Optimistic Value Iteration}
\acrodef{PDA}{Pushdown Automaton}
\acrodef{PDS}{Pushdown System}
\acrodef{pOPA}{Probabilistic Operator Precedence Automaton}
\acrodef{POMC}{Precedence Oriented Model Checker}
\acrodef{POTL}{Precedence Oriented Temporal Logic}
\acrodef{POTLF}[POTLf$\chain$]{Precedence Oriented Temporal Logic}
\acrodef{pPDA}{Probabilistic Pushdown Automaton}
\acrodef{PPL}{Probabilistic Programming Language}
\acrodef{PR}{Precedence Relation}
\acrodef{pVPA}{Probabilistic Visibly Pushdown Automaton}
\acrodef{RAM}{Random-Access Memory}
\acrodef{rhs}{right-hand side}
\acrodef{RMC}{Recursive Markov Chain}
\acrodef{RR}{Right-Recursive}
\acrodef{RSM}{Recursive State Machine}
\acrodef{SAT}{Satisfiability}
\acrodef{SCC}{Strongly Connected Component}
\acrodef{SMT}{Satisfiability Modulo Theories}
\acrodef{ST}{Syntax Tree}
\acrodef{TS}{Transition System}
\acrodef{UHP}{Upward Hierarchical Path}
\acrodef{UML}{Unified Modeling Language}
\acrodef{UOT}{Unranked Ordered Tree}
\acrodef{USP}{Upward Summary Path}
\acrodef{US}{Upward Summary}
\acrodef{VLTL}{Visibly Linear Temporal Logic}
\acrodef{VPA}{Visibly Pushdown Automaton}
\acrodef{VPL}{Visibly Pushdown Language}
\acrodef{OOPL}[$\omega$OPL]{Operator Precedence $\omega$-Language}
\acrodef{OPBA}[$\omega$OPBA]{Operator Precedence B\"uchi Automaton}
\acrodef{OVPL}[$\omega$VPL]{Visibly Pushdown $\omega$-Language}
\begin{document}

\title{Model Checking Probabilistic Operator Precedence Automata}

\author{Francesco Pontiggia}
\email{francesco.pontiggia@tuwien.ac.at}
\orcid{0000-0003-2569-6238}
\author{Ezio Bartocci}
\email{ezio.bartocci@tuwien.ac.at}
\orcid{0000-0002-8004-6601}
\author{Michele Chiari}
\email{michele.chiari@tuwien.ac.at}
\orcid{0000-0001-7742-9233}
\affiliation{%
  \institution{TU Wien}
  \city{Vienna}
  \country{Austria}
}

\begin{abstract}
We address the problem of model checking context-free specifications for probabilistic pushdown automata,
which has relevant applications in the verification of recursive probabilistic programs. 
Operator Precedence Languages (OPLs) are an expressive subclass of context-free languages
suitable for model checking recursive programs. The derived
Precedence Oriented Temporal Logic (POTL) can express fundamental OPL specifications
such as pre/post-conditions and exception safety.

We introduce \emph{probabilistic Operator Precedence Automata} (pOPA),
a class of probabilistic pushdown automata whose traces are OPLs,
and study their model checking problem against POTL specifications.
We identify a fragment of POTL, called POTLf$\chain$,
for which we develop an \textsc{exptime} algorithm for qualitative probabilistic model checking, 
and an \textsc{expspace} algorithm for the quantitative variant.
The algorithms rely on the property of \emph{separation}
of automata generated from POTLf$\chain$ formulas.
The same property allows us to employ these algorithms for model checking pOPA
against Linear Temporal Logic (LTL) specifications.
POTLf$\chain$ is then the first context-free logic for which
an optimal probabilistic model checking algorithm has been developed,
matching its \textsc{exptime} lower bound in complexity.
In comparison, the best known algorithm for probabilistic model checking of CaRet,
a prominent temporal logic based on Visibly Pushdown Languages (VPL),
is doubly exponential.

\end{abstract}

\acused{POTLF}

\begin{CCSXML}
<ccs2012>
   <concept>
       <concept_id>10003752.10003790.10011192</concept_id>
       <concept_desc>Theory of computation~Verification by model checking</concept_desc>
       <concept_significance>500</concept_significance>
       </concept>
   <concept>
       <concept_id>10003752.10003766.10003771</concept_id>
       <concept_desc>Theory of computation~Grammars and context-free languages</concept_desc>
       <concept_significance>300</concept_significance>
       </concept>
 </ccs2012>
\end{CCSXML}

\ccsdesc[500]{Theory of computation~Verification by model checking}
\ccsdesc[300]{Theory of computation~Grammars and context-free languages}

\keywords{Probabilistic Model Checking, Pushdown Model Checking, Operator Precedence Languages}


\maketitle

\section{Introduction}
\label{intro}
\acp{PDA} are a well-established formalism for model checking recursive programs~\cite{AlurBE18,AlurBEGRY05}.
Their stack can model a program's stack, representing the infinite state-space of its contents
with a finite structure.
\acp{PDA} can be checked against regular properties, including \ac{LTL}~\cite{Pnueli77} formulas,
but the recursive nature of programs they model often requires more expressive, non-regular specifications.
Examples of such requirements are function-local properties,
Hoare-style pre/post-conditions~\cite{Hoare69},
total and partial correctness~\cite{AlurEM04},
and stack inspection~\cite{JensenLT99}
(i.e., constraints on which functions can be active at a given point of the execution).
The temporal logics CaRet~\cite{AlurEM04}, based on \acp{VPL}~\cite{AluMad04},
 and \ac{POTL}~\cite{ChiariMP21a}, based on \acp{OPL}~\cite{Floyd1963}, can express these properties.
Both \acp{OPL} and \acp{VPL} are strict subclasses of deterministic context-free languages
that retain closure w.r.t.\ Boolean operations,
but \acp{OPL} are strictly more expressive than \acp{VPL} \cite{CrespiMandrioli12}.
Hence, \ac{POTL} is also able to express properties about exceptions~\cite{PontiggiaCP21}
such as \emph{exception safety}~\cite{Abrahams00} and the \emph{no-throw} guarantee.

\slimparagraph{Probabilistic Programs.} 
In addition to procedural constructs, \acp{PPL} provide primitives to sample from probability distributions.
Probabilistic programs implement randomized algorithms \cite{MotwaniR95} 
such as QuickSort~\cite{Hoare62}, and security and privacy protocols~\cite{BartheKOB13}. 
With growing popularity, they found application, under the name of \emph{queries}~\cite{GoodmanMRBT08}, in AI and machine learning generative models~\cite{Ghahramani15} for expressing conditional distributions clearly and concisely.
A query is implemented as a function, and its semantics is the probability distribution over program variables at the return statement, called the \emph{posterior distribution}.
The posterior distribution can be computed by performing Bayesian inference on the probabilistic program~\cite{vandeMeentPY18}.
To incorporate evidence of observed events in the model, the posterior distribution can be \emph{conditioned}~\cite{GordonHNR14} on observed data.
Conditioning is a first-class citizen in most \acp{PPL}:
it is represented \emph{within} programs with \emph{ad hoc} constructs
such as \texttt{observe} statements, which allow for forcing random variables to take particular values.
When using \emph{rejection sampling} semantics, a program contains statements of the form \texttt{observe (e)},
where \texttt{e} is a Boolean condition.
Only computations in which all \texttt{observe} conditions are true appear in the posterior distribution:
when a condition fails, the current computation is discarded and excluded from the posterior.
In these cases, we assume that \emph{the program is restarted}, hoping that the next run will satisfy all \texttt{observe} statements. 

Queries have recently been identified as a tool to model meta-reasoning
and planning in multi-agent systems~\cite{Evans17,ZhangA22,StuhlmullerG14}.
Queries can model reasoning patterns, in which observations represent beliefs, desires, goals, or choices.
Most modern \acp{PPL} implement queries as functions and,
just like ordinary programming languages, 
they support invoking them recursively~\cite{vandeMeentPY18,WoodMM14,GoodmanMRBT08,Goodman14,TolpinMYW16}.
Recursive queries, or \emph{nested queries}, enable \emph{reasoning about reasoning}.
A probabilistic program representing an intelligent agent samples from the conditional distribution given by another probabilistic program, implemented as a nested query.

For probabilistic programs representing multi-agent systems,
temporal properties concerning agent behaviors are also of interest, besides the posterior distribution.
Verifying formal properties on such programs is, however, a very challenging task,
as demonstrated by significant theoretical efforts~\cite{EtessamiY12,BrazdilEKK13,DubslaffBB12,WinklerGK22}.

\begin{wrapfigure}{r}{0.35\textwidth}
\vspace{-4ex}
\centering
\includegraphics[width=\linewidth]{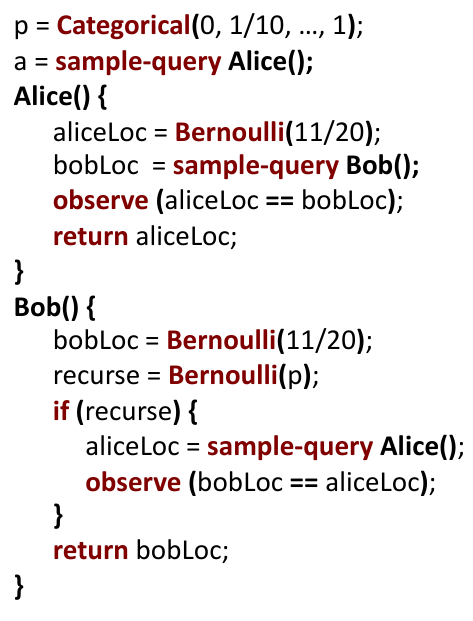}
\vspace{-4ex}
\captionof{figure}{A coordination game adapted from~\cite{StuhlmullerG14}.}
\label{fig:coordination_game}
\end{wrapfigure}
\slimparagraph{Motivating Example.}
The program of Fig.~\ref{fig:coordination_game}, introduced by \citet{StuhlmullerG14}, 
is an instance of a \emph{Schelling coordination game}~\cite{Schelling1980}. 
Two agents, Alice and Bob, have agreed to meet today, but they have not yet decided in which café. 
Alice reasons as follows:
she samples a location according to her preferences (a binary choice with bias 0.55), 
and then samples Bob's choice from his \emph{queried} behavior (procedure \texttt{Bob()}).
Finally, she conditions on the outcome of the two choices being the same.
Bob's procedure differs in that he decides probabilistically whether to further query Alice's behavior.
The two queries are then mutually recursive, modeling nested reasoning about each other's reasoning,
in a potentially infinite manner.
Global variable \texttt{p} parameterizes Bob's decision, thus controlling recursion depth.
The overall program represents the posterior distribution of Alice's choice, conditioned on her reasoning about Bob's reasoning about Alice's reasoning\dots

We model this program as a \ac{pPDA} \cite{EsparzaKM04,BrazdilEKK13},
the probabilistic counterpart of \acp{PDA}.
Procedure calls and returns are modeled as transitions that respectively push and pop a stack symbol;
queries also push one stack symbol, and false observations pop all stack symbols until the first query symbol,
effectively \emph{unwinding} the stack, and then re-start the query in a loop,
thus implementing \emph{rejection sampling}~\cite{Bishop07}.
We encode finite-domain program variables, parameters and return values into states and stack symbols.

We are interested in verification of \emph{temporal properties} of this program.
For example, what is the probability that a query to \texttt{Alice()} with \texttt{p} $\geq \lambda$ (\emph{pre-condition}) leads to a disagreement on the caf\'e choice (\emph{post-condition})?
This is equivalent to checking that the query encounters a non-satisfied \texttt{observe()} statement.
To answer, we need to reason about only the observations that affect the outer query, and skip events inside inner queries: 
an execution trace of this program has multiple nested calls to \texttt{Alice()}, each one either returning correctly, or being 
rejected by a failed observation. Only one statement terminates the first call to \texttt{Alice()}.
Thus, we must express our specification in a formalism that can skip parts of execution traces depending on stack behavior.
\ac{LTL} is not expressive enough, because it only captures regular properties~\cite{AlurEM04}.
We need a logic with context-free modalities, such as \ac{POTL}~\cite{ChiariMP21a},
which achieves this with the $\lcunext$ operator.
The requirement above corresponds to the following \ac{POTL} formula:
\begin{equation}
  \llglob \, (\lcall \land \text{Alice} \land p \geq \lambda \implies \neg \lcunext \lobs).
  \label{eq:disagreement}
\end{equation}
Moreover, what is the probability that caf\'e 1 is eventually chosen?
We can answer by checking the satisfaction probability of formula
\begin{equation}
\lcunext (\mathtt{aliceLoc} == 1).
\label{eq:aliceLoc1}
\end{equation}

\slimparagraph{Our approach (Fig.~\ref{fig:qualitative-workflow-diagram}).}
We study the problem of probabilistic model checking of a \ac{pPDA} against specifications in a fragment of \ac{POTL}.
\ac{POTL} captures the first-order definable fragment of \acp{OPL}.
For model checking to be decidable, we introduce a new class of \ac{pPDA} whose traces are \acp{OPL}, called \emph{\acp{pOPA}}.
\acp{pOPA} can model behaviors of modern \acp{PPL}---for example, a \ac{pOPA} corresponds to the motivating example of Fig.~\ref{fig:coordination_game}.

\begin{figure}[tb]
\centering
\begin{tikzpicture}
  [ node distance=0pt, font=\scriptsize, >=latex,
    action/.style={draw, rectangle},
    cond/.style={draw, chamfered rectangle, chamfered rectangle xsep=5em, chamfered rectangle ysep=0pt},
    round/.style={draw, circle},
    scale=.7
  ]
\node (phi) [action, align=center] {\acs{POTLF} formula\\(Sec.~\ref{sec:potl})};
\node (opba) [action, right=10pt of phi, align=center] {separated \acs{OPBA}\\(Sec.~\ref{sec:separated-opba})};
\node (sg) [action, right=10pt of opba, align=center] {Support Graph\\(Sec.~\ref{sec:opl})};
\node (prog) [action, below=7pt of phi.south west, anchor=north west, align=center] {Program};
\node (popa) [action, right=38pt of prog, align=center] {\acs{pOPA}\\(Sec.~\ref{sec:popa})};
\node (sc) [action, right=38pt of popa, align=center] {Support Chain\\(Sec.~\ref{sec:support-chain})};
\node (prod) [round, inner sep=2pt, right=6pt of sc, yshift=12pt] {$\times$};
\node (g) [action, right=10pt of prod, align=center] {Graph $G$\\(Sec.~\ref{sec:qualitative-mc})};
\node (ast) [action, right=30pt of g, align=center] {AS?};
\node (yes) [round, right=130pt of sg, align=center] {\cmark};
\node (no) [round, below=10pt of yes, align=center] {\xmark};
\node (quant) [action, right=10pt of no, align=center] {quant. \\ bounds};
\path (phi) edge[->] (opba)
      (opba) edge[->] (sg)
      (prog) edge[->] (popa)
      (popa) edge[->] (sc); 
\path[draw, ->] (sg.east) -| (prod.north);
\path[draw, ->] (sc.east) -| (prod.south);
\path (prod) edge[->] (g)
      (g) edge[->] node[align=center] {\acs{SCC}\\analysis} (ast);
\path (ast) edge[->] (yes)
      (ast) edge[->] (no)
      (no) edge[->] (quant);
\end{tikzpicture}
\caption{Overview of model checking of probabilistic programs against \acs{POTLF}.}
\label{fig:qualitative-workflow-diagram}
\end{figure}

\ac{POTL} formulas can be translated to \acp{OPBA}, the class of automata that accepts \acp{OPL} \cite{ChiariMPP23}.
To perform non-probabilistic \ac{POTL} model checking, one must build the \emph{support graph}~\cite{ChiariMPP23},
a finite graph with edges representing fragments of runs of an \ac{OPBA}
in between the transitions pushing and popping the same stack symbol.
We introduce the \emph{support chain} as the probabilistic counterpart of the support graph:
it is a finite Markov chain that encodes all runs of a \ac{pOPA} and preserves its probability distribution.

A \ac{pOPA} $\mathcal{A}$ and an \ac{OPBA} $\mathcal{B}$ decide whether to make a push or a pop move depending on \emph{\acp{PR}}
between labels: if they agree on \acp{PR}, their stacks are synchronized,
which allows us to define a synchronized product $G$ of the support chain of $\mathcal{A}$ and the support graph of $\mathcal{B}$.
If $\mathcal{B}$ is deterministic, then a simple reachability analysis of the accepting \acp{BSCC} of $G$ suffices. However, \acp{OPBA} obtained from \ac{POTL} formulas are in general not deterministic.

We thus exploit a property called \emph{separation}~\cite{CartonM03}.
Separation in B\"uchi automata enables checking Markov Chains~\cite{CouvreurSS03} and \acp{RMC}~\cite{YannakakisE05}
against \acs{LTL} specifications in singly exponential time in formula length,
i.e., without the additional exponential blowup due to automata determinization.
A B\"uchi automaton is separated iff the languages it accepts starting from different states are disjoint.
The Vardi-Wolper construction~\cite{VardiW94} for \acs{LTL} yields exponential-size separated B\"uchi automata~\cite{Wilke17}. 
As we shall see, things get more complicated for pushdown automata.
We study for the first time separation for pushdown automata 
and define \emph{separated} \ac{OPBA}.
We identify a fragment of \acs{POTL} called \acs{POTLF} that can be encoded as separated \ac{OPBA}.
We then identify three conditions that characterize \acp{SCC} of $G$ that hold for a \acs{SCC} of $G$ if and only if the \acs{SCC} subsumes \ac{pOPA} runs that are accepted by the input \emph{separated} \ac{OPBA}.
Thus, we obtain an algorithm for qualitative model checking of \acs{POTLF} formulas.
We then show how to extend it to solve quantitative problems.

\slimparagraph{Complexity.} 
Building the support chain requires computing the \emph{termination probabilities} of $\mathcal{A}$, 
These can be encoded in the \acf{ETR}, which is decidable in \textsc{pspace}.
Thus, we obtain an algorithm that runs in space polynomial in the size of $\mathcal{A}$,
and time exponential in formula length for qualitative  model checking.
Quantitative model checking involves additionally solving on $G$ a set of equations strictly resembling those for termination probabilities, 
hence the algorithm runs in space exponential in formula length.
We prove that the bound for qualitative model checking is optimal.


\slimparagraph{Related Work.}
Termination and model checking of \emph{regular} specifications for \acp{pPDA} 
and the equivalent formalism of \acp{RMC} are studied in \cite{EtessamiY12,BrazdilEKK13}.
\citet{DubslaffBB12} prove that, upon imposing a visibility condition on \ac{pPDA}, model checking against Visibly Pushdown specifications is decidable.
\citet{WinklerGK22} introduce a model checking algorithm for the CaRet~\cite{AlurEM04} temporal logic 
with the purpose of model checking recursive probabilistic programs without conditioning.
They model programs through \acp{pVPA}, which are \acp{pPDA} whose states
can only have outgoing transitions of one type: either pushing, popping, or leaving the stack unchanged.
\acp{VPL}, on which CaRet is based, are strictly less expressive than \acp{OPL}~\cite{CrespiMandrioli12}.
Thus, the visibility conditions imposed on \acp{pPDA} in these works make them less expressive than \acp{pOPA}:
in particular, they are not able to model effectively nested queries and conditioning constructs.
Moreover, \citet{WinklerGK22} proceed by determinizing automata encoding CaRet formulas,
thus obtaining a 2-\textsc{exptime} (resp.\ \textsc{2-expspace}) complexity of qualitative (resp.\ quantitative) model checking,
as opposed to our \textsc{exptime} (resp.\ \textsc{expspace}) optimal bound (cf.\ Table~\ref{tab:related-work}).

\ac{POTL}~\cite{ChiariMP21a} is a linear-time temporal logic that captures the first-order fragment~\cite{ChiariMP21b} of \acp{OPL}~\cite{Floyd1963}.
The greater generality of \acp{OPL} results in \ac{POTL} being more expressive than \acs{VPL}-based logics \cite{ChiariMP21b}:
CaRet is defined on a one-to-one nesting relation \cite{jacm/AlurM09},
while the nesting relation of \ac{POTL} can be one-to-many or many-to-one.
This allows for expressing properties on exceptions~\cite{PontiggiaCP21}, as well as, most notably in our probabilistic setting, observations and conditioning.

However, the relation between CaRet and \acs{POTLF} is left as an open problem.
The \acs{POTLF} fragment does not contain past operators, 
while the only existing translation of CaRet formulas to \ac{POTL}
uses past operators~\cite{ChiariMP21b}, and hence does not hold for \acs{POTLF}.
On the other hand, despite the best of our efforts, we could not devise any CaRet formula which provably cannot be expressed in \acs{POTLF}.
Due to this technical nuance, we can only state that \acs{POTLF} is ``at least incomparable'' with CaRet: \acs{POTLF}
is still equipped with a more general one-to-many or many-to-one nesting relation.
\acs{POTLF} formulas from the motivating example cannot be expressed in CaRet.
The same issue holds regarding the relation between \ac{POTLF} and \ac{LTL}.
All existing CaRet formulas in the literature on probabilistic model checking~\cite{WinklerGK22,WinklerGK23} are expressible in \ac{POTLF}.

Non-probabilistic \acs{POTL} model checking is \textsc{exptime}-complete~\cite{ChiariMP21b},
and has been implemented through an automata construction and graph-theoretic analyses \cite{ChiariMPP23}.
In this paper, we exploit the construction from \cite{ChiariMPP23}
after showing it yields separated automata for \acs{POTLF}.

\begin{table}
\caption{Complexity of probabilistic pushdown model checking w.r.t.\ the specification formalism.}
\label{tab:related-work}
\centering
\begin{tabular}{l @{\hspace{1em}} l @{\hspace{1em}} l @{\hspace{1em}} l @{\hspace{1em}} l}
\toprule
Specification formalism & Model & Qualitative & Quantitative & Reference \\
\midrule
\ac{LTL} & \ac{RMC} & \textsc{exptime}-complete & \textsc{expspace} & \cite{YannakakisE05,EtessamiY12} \\
\ac{LTL} & \ac{pOPA} & \textsc{exptime}-complete & \textsc{expspace} & Section \ref{ssec:hardness} \\
CaRet & \ac{pVPA} & 2-\textsc{exptime} (\textsc{exptime}-hard) & 2-\textsc{expspace} & \cite{WinklerGK22} \\
Deterministic \ac{OPBA} & \ac{pOPA} & \textsc{pspace}  & \textsc{pspace} & Remark \ref{rem:deterministic-opba}  \\ 
Separated \ac{OPBA} & \ac{pOPA}    & \textsc{pspace}  & \textsc{pspace} & Section \ref{ssec:hardness} \\ 
\acs{POTLF}  & \ac{pOPA}           & \textsc{exptime}-complete & \textsc{expspace} & Section \ref{ssec:hardness} \\
\bottomrule
\end{tabular}
\end{table}


\slimparagraph{Contribution.}
We introduce
(a) a class of \ac{pPDA} whose traces are \acp{OPL} to model recursive probabilistic programs with \emph{observe} statements;
(b) the class of \emph{separated} \ac{OPBA};
(c) \acs{POTLF}, a fragment of \acs{POTL} that can be encoded with separated \ac{OPBA};
(d) a probabilistic model checking algorithm for \ac{LTL} and \acs{POTLF},
which we prove to be \textsc{exptime}-complete for the qualitative problem,
and in \textsc{expspace} for the quantitative problem.

\slimparagraph{Organization.}
We give an overview of \acp{OPL} and \acs{POTL} in Section~\ref{sec:background},
introduce \ac{pOPA} in Section~\ref{sec:popa},
and describe our model checking algorithm in Section~\ref{sec:mc}.
We conclude the paper in Section~\ref{sec:conclusions}.
Missing proofs and technicalities are reported in the appendix.

\section{Background}
\label{sec:background}
In Section~\ref{sec:opl}, we introduce \acp{OPL} and the automata that accept them,
which are the basis of our verification framework.
In Section~\ref{sec:potl}, we present \acs{POTLF}, the temporal logic that we use to express formal requirements as \acp{OPL}.

\subsection{Operator Precedence Languages}
\label{sec:opl}

Finite-word \acp{OPL} were originally introduced in the context of programming language parsing \cite{Floyd1963,MP18}.
Recently, infinite-word \acp{OPL} (\acsp{OOPL}) have been characterized through automata \cite{LonatiEtAl2015}.
\acused{OOPL}
We recall here this automata-theoretic characterization of \acp{OOPL}, which is more suitable for the model checking context.

Let $S$ be a finite set: $S^*$ (resp. $S^\omega$) denotes the set of finite (infinite) words on $S$.
We denote the empty word by $\varepsilon$.
Given a finite alphabet $\Sigma$, we define three \emph{\acfp{PR}}:
for $a, b \in \Sigma$, if $a \lessdot b$ we say $a$ \emph{yields precedence} to $b$,
if $a \doteq b$ we say $a$ is \emph{equal in precedence} to $b$,
and if $a \gtrdot b$ then $a$ \emph{takes precedence} from $b$.
We also use a delimiter $\#$ at the beginning of each infinite word,
and write $\Sigma_\# = \Sigma \cup \{\#\}$.
\begin{definition}[\cite{MP18}]
An \acf{OPM} $M$ is a total function $M : \Sigma_\#^2 \rightarrow \{\lessdot, \doteq, \gtrdot\}$
such that $M(\#, a) = \lessdot$ and $M(a, \#) = \gtrdot$ for all $a \in \Sigma$.
If $M$ is an \ac{OPM} on a finite alphabet $\Sigma$, then $(\Sigma, M)$ is an \emph{\acs{OP} alphabet}.
\end{definition}

\acp{OOPL} are accepted by a class of pushdown automata (\acp{OPBA}) that decide whether to push,
pop, or update stack symbols based on \acp{PR} among input symbols.
Thus, the context-free structure of OP words is completely determined by the \ac{OPM}.
\begin{definition}[\cite{LonatiEtAl2015}]
\label{def:opba}
An \acf{OPBA} is a tuple
$\mathcal A = (\Sigma, \allowbreak M, \allowbreak Q, \allowbreak I,
\allowbreak F, \allowbreak \delta)$
where $(\Sigma, M)$ is an \acs{OP} alphabet, $Q$ is a finite set of states,
$I \subseteq Q$ and $F \subseteq Q$ are resp.\ the sets of initial and final states,
and $\delta$ is a triple of transition relations
$\delta_\mathit{push}, \delta_\mathit{shift} \subseteq Q \times \Sigma \times Q$ and
$\delta_\mathit{pop} \subseteq Q \times Q \times Q$.
\end{definition}
\ac{OPBA} have a fixed set of stack symbols $\Gamma_\bot = \Gamma \cup \{\bot\}$:
$\bot$ is the initial stack symbol, and other symbols are in $\Gamma = \Sigma \times Q$;
we set $\symb{\bot} = \#$ and $\symb{[a, r]} = a$ for $[a, r] \in \Gamma$. 
A \emph{configuration} is a triple $\config{w}{q}{B}$ where $w \in \Sigma^\omega$ is the input word,
$q \in Q$ is the current state, and $B = \beta_1 \beta_2 \dots \bot$, with $\beta_1, \beta_2, \dots \in \Gamma$,
is the stack content.
We set $\tp(B) = \beta_1$, and $\symb{B} = \symb{\tp(B)}$.
\acp{OPBA} perform moves of three kinds:

\noindent \textbf{push ($q \apush{a} p$):} $\config{aw}{q}{B} \vdash \config{w}{p}{[a, q] B}$ if $B = \bot$ or
  $\tp(B) = [b, r]$ and $a \lessdot b$, and $(q, a, p) \in \delta_\mathit{push}$;\\
\noindent \textbf{shift ($q \ashift{a} p$):} $\config{aw}{q}{[b, r] B} \vdash \config{w}{p}{[a, r] B}$ if $b \doteq a$
  and $(q, a, p) \in \delta_\mathit{shift}$;\\
\noindent \textbf{pop ($q \apop{r} p$):} $\config{aw}{q}{[b, r] B} \vdash \config{aw}{p}{B}$ if $b \gtrdot a$
  and $(q, r, p) \in \delta_\mathit{pop}$.

Push moves add a new symbol on top of the stack, while pop moves remove the topmost symbol,
and shift moves only update the terminal character in the topmost symbol.
Only push and shift moves read an input symbol, while pop moves are effectively $\varepsilon$-moves.
The \ac{PR} between the topmost stack symbol and the next input symbol determines the next move:
the stack behavior thus only depends on the input word rather than on the transition relation.

A \emph{run} on an $\omega$-word $w$ is an infinite sequence of configurations
$\config{w_0}{q_0}{B_0} \vdash \config{w_1}{q_1}{B_1} \dots$.
A run is \emph{final} if for infinitely many indices $i \geq 0$ we have $q_i \in F$,
and \emph{initial} if $q_0 \in I$ and $B_0 = \bot$;
if a run is both initial and final, it is called \emph{accepting}.
An $\omega$-word $w$ is accepted by an \ac{OPBA} $\mathcal{A}$ if $\mathcal{A}$ has an accepting run on $w$.
By $L_\mathcal{A}(q, B)$
we denote the set of words $w \in \Sigma^\omega$ such that
$\mathcal{A}$ has a final run starting from $\config{w}{q}{B}$
in which no symbol in $B$ is ever popped;
the language accepted by $\mathcal{A}$ is $L_\mathcal{A} = \cup_{q \in I} L_\mathcal{A}(q, \bot)$.
A language $L \subseteq \Sigma^\omega$ is an \ac{OOPL} if $L = L_\mathcal{A}$ for some \ac{OPBA} $\mathcal{A}$.
Unlike generic pushdown automata, \acp{OPBA}---and thus \acp{OOPL}---%
are closed by the Boolean operations~\cite{LonatiEtAl2015}.

\begin{wrapfigure}[10]{r}{5.5cm}%
\centering
\(
\begin{array}{r | c c c c c}
         & \lcall   & \lret   & \lqry    & \lobs    & \lstm \\
\hline
\lcall   & \lessdot & \doteq  & \lessdot & \gtrdot  & \lessdot \\
\lret    & \gtrdot  & \gtrdot & \gtrdot  & \gtrdot  & \gtrdot \\
\lqry    & \lessdot & \doteq  & \lessdot & \lessdot & \lessdot \\
\lobs    & \gtrdot  & \gtrdot & \gtrdot  & \gtrdot  & \gtrdot \\
\lstm    & \gtrdot  & \gtrdot & \gtrdot  & \gtrdot & \gtrdot \\
\end{array}
\)
\captionof{figure}{\acs{OPM} $M_\lcall$, omitting $\#$.}
\label{fig:opm}
\end{wrapfigure}
We devise the \ac{OPM} in Fig.~\ref{fig:opm} to represent traces of probabilistic programs with observations.
$\lcall$ and $\lret$ represent respectively function calls and returns,
$\lstm$ statements that do not affect the stack (e.g., assignments),
and $\lqry$ and $\lobs$ are query statements and triggered (false) observations.
The \acp{PR} are assigned so that an \ac{OPBA} always reads $\lcall$ with a push move
($\lcall \lessdot \lcall$, etc.),
and performs a pop move after reading $\lret$ through a shift move
($\lcall \doteq \lret$ and $\lret$ is in the $\gtrdot$ relation with other symbols):
this way, the \ac{OPBA} stack mimics the program's stack.
Moreover, $\lobs$ triggers pop moves that unwind the stack until a symbol with $\lqry$
is reached, in line with their rejection sampling semantics~\cite{OlmedoGJKKM18}.

\begin{example}[Running example]
\label{running-ex:1}
Fig.~\ref{fig:opba-example} (top left) shows \ac{OPBA} $\mathcal{B}_\lcall$ defined on this \ac{OPM}.
States $q_0$ and $q_1$ are initial, while $q_1$, $q_2$, and $q_3$ are final.
Let $L_D$ be the Dyck language on $\{\lcall, \lret\}$, i.e.,
the language of all finite words such that all prefixes contain no more $\lret$s than $\lcall$s,
and $\lcall$s occur the same number of times as $\lret$s.
$\mathcal{B}_\lcall$ accepts the language $L_1 = (\lcall^* L_D)^\omega$
by looping between $q_0$, $q_1$ and $q_2$;
once in $q_2$, it can nondeterministically guess that the rest of the word is $\lcall^\omega$,
thus accepting $L_2 = (\lcall^* L_D)^* \lcall^\omega$.
Hence, $\mathcal{B}_\lcall$ accepts the language $L_1 \cup L_2$.   
\end{example}

\begin{figure}[tb]
\centering
\begin{tikzpicture}
  [node distance=45pt, every state/.style={minimum size=0pt, inner sep=2pt}, >=latex, font=\scriptsize]
\node[state, initial by arrow, initial text=] (q0) {$q_0$};
\node[state] (q1) [right of=q0, initial right, initial text=, accepting] {$q_1$};
\node[state] (q2) [below of=q1, accepting] {$q_2$};
\node[state] (q3) [left of=q2, accepting] {$q_3$};
\path[->] (q0) edge[out=200, in=225, loop, below] node{$\lcall$} (q0)
          (q0) edge[double, out=250, in=275, loop, below] node{$q_0,q_1$} (q0)
          (q0) edge[above] node{$\lcall$} (q1)
          (q1) edge[double, out=20, in=45, loop, right] node{$q_0,q_1$} (q1)
          (q1) edge[out=250, in=110, left, pos=.2] node{$\lcall$} (q2)
          (q2) edge[dashed, out=70, in=290, right] node{$\lret$} (q1)
          (q2) edge[dashed, left, pos=.3] node{$\lret$} (q0)
          (q2) edge[dashed, out=315, in=340, loop, right] node[yshift=3]{$\lret$} (q2)
          (q2) edge[double, out=20, in=45, loop, right] node{$q_0,q_1$} (q2)
          (q2) edge[dashed, above] node{$\lret$} (q3)
          (q3) edge[out=160, in=135, loop, left] node[yshift=2]{$\lcall$} (q3)
          (q3) edge[out=200, in=225, double, loop, left] node[yshift=3]{$q_0,q_1$} (q3);
\end{tikzpicture}%
\begin{tikzpicture}
  [state/.style={draw, ellipse, inner sep=1pt}, >=latex, font=\scriptsize,
   support/.style={decoration={snake, amplitude=1pt, segment length=3pt}, decorate}
  ]
\node[state, initial by arrow, initial text=] (n0) {$q_0, \#, \lcall$};
\node[state] (n1) [right=15pt of n0] {$q_0, \lcall, \lcall$};
\node[state] (n2) [right=15pt of n1, accepting] {$q_3, \lcall, \lcall$};
\node[state] (n3) [above of=n1, accepting] {$q_1, \lcall, \lcall$};
\node[state, initial by arrow, initial text=, accepting] (n4) [below of=n0] {$q_1, \#, \lcall$};
\node[state] (n5) [right=15pt of n4, accepting] {$q_3, \#, \lcall$};
\path[->] (n0) edge (n1)
          (n0) edge (n3)
          (n0) edge[support] (n3)
          (n0) edge[support, out=120, in=60, loop] (n0)
          (n0) edge[support, out=240, in=120] (n4)
          (n0) edge[support] (n5)
          (n1) edge[loop below] (n1)
          (n1) edge[support, loop below] (n1)
          (n1) edge[out=60, in=300] (n3)
          (n1) edge[support, out=60, in=300] (n3)
          (n1) edge[support] (n2)
          (n2) edge[loop above] (n2)
          (n3) edge[support, out=170, in=190, loop] (n3)
          (n3) edge[support, out=240, in=120] (n1)
          (n3) edge[support] (n2)
          (n4) edge[support, out=60, in=300] (n0)
          (n4) edge[support] (n5)
          (n4) edge[support, out=165, in=145, loop] (n4)
          (n5) edge (n2);
\end{tikzpicture}
\hspace{1em}
\begin{adjustbox}{max width=.7\textwidth}
\begin{tikzpicture}
\matrix (m) [matrix of math nodes, column sep=-4, row sep=-4]
{
\lcall     & \lcall              & \lret               & -                   & \lcall             & \lcall             & \lcall           & \lcall \\
           &                     &                     &                     &                    &                    &                  & {[\lcall, q_3]} \\
           &                     &                     &                     &                    &                    & {[\lcall, q_3]}  & {[\lcall, q_3]} \\
           &                     & {[\lcall, q_1]}     & {[\lret, q_1]}      &                    & {[\lcall, q_3]}    & {[\lcall, q_3]}  & {[\lcall, q_3]} \\
           & {[\lcall, q_0]}     & {[\lcall, q_0]}     & {[\lcall, q_0]}     & {[\lcall, q_0]}    & {[\lcall, q_0]}    & {[\lcall, q_0]}  & {[\lcall, q_0]} \\
\bot       & \bot                & \bot                & \bot                & \bot               & \bot               & \bot             & \bot \\
q_0        & q_1                 & q_2                 & q_3                 & q_3                & q_3                & q_3              & q_3 \\
\lessdot   & \lessdot            & \doteq              & \gtrdot             & \lessdot           & \lessdot           & \lessdot         & \lessdot \\
};
\end{tikzpicture}
\end{adjustbox}
\caption{Top left: \ac{OPBA} $\mathcal{B}_\lcall$.
Top right: support graph of $\mathcal{B}_\lcall$.
Call edges are solid, shift edges dashed, and pop edges double.
In the support graph, support edges are wavy, there are no shift edges,
and some edges are both push and support.
Bottom:
prefix of a run of $\mathcal{B}_\lcall$ showing, from the top, the input word,
the stack growing upwards, the current state, and the \acs{PR} between the topmost stack symbol and the look-ahead.
}
\label{fig:opba-example}
\end{figure}

\emph{Chains} formalize how \acp{OPM} define the context-free structure of words accepted by an \ac{OPBA}.
Whether an \ac{OPBA} reads an input symbol with a push, a shift, or a pop move only depends on the \acp{PR}
between symbols in the word prefix read so far.
Thus, the context-free structure of a word is solely determined by the \ac{OPM}.
We can elicit this context-free structure by introducing the concept of \emph{chains}.
Intuitively, a chain is the sub-word read by an \ac{OPBA} in between the pushing of a stack symbol and its corresponding pop move.
More formally:
\begin{definition}[\cite{LonatiEtAl2015}]
\label{def:chain}
A \emph{closed simple chain}
$
\ochain {c_0} {c_1 c_2 \dots c_\ell} {c_{\ell+1}}
$
is a finite word $c_0 c_1 c_2 \dots c_\ell c_{\ell+1}$
such that
$c_0 \in \Sigma_\#$,
$c_i \in \Sigma$ for $1 \leq i \leq \ell+1$,
and $c_0 \lessdot c_1 \doteq c_2 \dots c_{\ell-1} \doteq c_\ell \gtrdot c_{\ell+1}$.
Symbols $c_0$ and $c_{\ell+1}$ are called  the \emph{left} and \emph{right context} of the chain,
while the string in between is called the \emph{body}.

A \emph{closed composed chain}
$\ochain {c_0} {s_0 c_1 s_1 c_2 \dots c_\ell s_\ell} {c_{\ell+1}}$
is a finite word
$c_0 s_0 c_1 s_1 c_2  \dots \allowbreak c_\ell \allowbreak s_\ell \allowbreak c_{\ell+1}$,
such that
$\ochain {c_0}{c_1 c_2 \dots c_\ell}{c_{\ell+1}}$ is a simple chain, and
$s_i \in \Sigma^*$ is either the empty string
or is such that $\ochain {c_i} {s_i} {c_{i+1}}$ is a closed chain,
for $0 \leq i \leq \ell$.

An \emph{open simple chain} 
$\opchain {c_0} {c_1 c_2 c_3 \dots}$
is an $\omega$-word $c_0 c_1 c_2 c_3 \dots$ such that
$c_0 \in \Sigma_\#$,
$c_i \in \Sigma$ for $i \geq 1$,
and $c_0 \lessdot c_1 \doteq c_2 \doteq c_3 \dots$.

An \emph{open composed chain} is an $\omega$-word that can be of two forms:
\begin{itemize}
\item $\opchain{c_0}{s_0 c_1 s_1 c_2 \dots}$, where the \acp{PR} are as in closed chains
and $s_i \in \Sigma^*$ is either the empty string
or is such that $\ochain {c_i} {s_i} {c_{i+1}}$ is a closed chain, for $i \geq 0$;
\item
$\opchain{c_0}{s_0 c_1 s_1 c_2 \dots c_\ell s_\ell}$,
where $s_i \in \Sigma^*$ is either the empty string
or is such that $\ochain {c_i} {s_i} {c_{i+1}}$ is a closed chain, for $0 \leq i \leq \ell-1$,
and $s_\ell$ is an open chain.
\end{itemize}
Open chains have no right context.
\end{definition}

The portion of a run that reads a chain body is called the chain's \emph{support}.
It corresponds to the portion of a run from the move pushing a stack symbol to the one popping it.
\begin{definition}[\cite{LonatiEtAl2015}]
\label{def:support}
\sloppy
Given an \ac{OPBA} $\mathcal A$,
a \emph{support} for a simple chain
$\ochain {c_0} {c_1 c_2 \dots c_\ell} {c_{\ell+1}}$
is a path in $\mathcal A$ of the form
$q_0
\apush{c_1}{q_1}
\ashift{c_2}{}
\dots
\ashift{} q_{\ell-1}
\ashift{c_{\ell}}{q_\ell}
\apop{q_0} {q_{\ell+1}}$.
The pop move is executed because of relation $c_\ell \gtrdot c_{\ell+1}$,
and its label is $q_0$, the state pushed at the beginning.

A \emph{support for a composed chain}
$\ochain {c_0} {s_0 c_1 s_1 c_2 \dots c_\ell s_\ell} {c_{\ell+1}}$
is a path in $\mathcal A$ of the form
\(
q_0
\asupp{s_0}{q'_0}
\apush{c_1}{q_1}
\asupp{s_1}{q'_1}
\ashift{c_2}{}
\dots
\ashift{c_\ell} {q_\ell}
\asupp{s_\ell}{q'_\ell}
\apop{q'_0}{q_{\ell+1}}
\)
where, for every $0 \leq i \leq \ell$:
if $s_i = \varepsilon$, then $q'_i = q_i$,
else $q_i \asupp{s_i} q'_i$ is a support for $\ochain {c_i} {s_i} {c_{i+1}}$.
We write $q_0 \asupp{x} q_{\ell+1}$ with $x = s_0 c_1 s_1 c_2 \dots c_\ell s_\ell$
if such a support exists.

Supports for open chains, called \emph{open supports}, are identical, except they do not end with a pop move:
the symbol pushed at the beginning of the support remains in the stack forever.
\end{definition}
\fussy

$\mathcal{B}_\lcall$ from Fig.~\ref{fig:opba-example} (left), for instance,
reads the body of the closed simple chain $\ochain{\lcall}{\lcall \, \lret}{\lcall}$
with the support $q_1 \apush{\lcall} q_2 \ashift{\lret} q_3 \apop{q_1} q_3$,
and that of the closed composed chain $\ochain{\lcall}{\lcall \, \lcall \, \lret \, \lret}{\lcall}$
with $q_0 \apush{\lcall} q_1 \asupp{s} q_2 \ashift{\lret} q_3 \apop{q_0} q_3$,
where $s = \lcall \, \lret$.
All closed supports read chains in $L_D$, i.e., words of balanced $\lcall$s and $\lret$s.
In general, each support corresponds to a stack symbol. 
The run in Fig.~\ref{fig:opba-example} (bottom) contains the closed support $\ochain{\lcall}{\lcall \, \lret}{\lcall}$,
and several open supports:
e.g., the open chain $\opchain{\lcall}{\lcall \dots}$ starting in the last-but-one
position has support $q_3 \apush{\lcall} q_3 \asupp{\lcall \dots}$.
The whole input word is also an open chain:
$\lcall [ \lcall \, \lret ] \lcall [ \lcall [ \lcall \dots$
and is read by an open support $q_0 \apush{\lcall} q_1 \asupp{s}$ where $s$ is the rest of the word.

Reachability in an \acp{OPBA} can be solved by building its \emph{support graph},
which describes its limit behavior.
Nodes of the support graph are \emph{semi-configurations} in $\mathcal{C} = Q \times \Sigma_\# \times \Sigma$.
Let $d = (q, b, a) \in \mathcal{C}$.
$d$ represents all configurations in which $q$ is the current state,
$b$ is the input symbol of the topmost stack symbol, and $a$ is a look-ahead.
$d$ is \emph{initial} iff $q \in I$ and $b = \#$.
We define $\State{d} = q$.
Edges of the support graph represent closed supports, or push (resp.\ shift) moves that push (resp.\ update)
stack symbols that will never be popped throughout a run.
\begin{definition}[\cite{ChiariMPP23}]
\label{def:support-graph-opba}
Given a \ac{OPBA}
$\mathcal A = (\Sigma, \allowbreak M, \allowbreak Q, \allowbreak I,
\allowbreak F, \allowbreak \delta)$,
its \emph{support graph} is a pair
$(\mathcal{C}, \mathcal{E})$ where $\mathcal{E} \subseteq \mathcal{C}^2$
is partitioned into the following three sets:
\begin{itemize}
\item
\(
E_\mathit{push} =
\{
  ((q, b, a), (p, a, \ell)) \in \mathcal{C}^2 \mid
  b \lessdot a
  \land (q, a, p) \in \delta_\mathit{push}
\}
\)
\item
\(
E_\mathit{shift} =
\{
  ((q, b, a), (p, a, \ell)) \in \mathcal{C}^2 \mid
  b \doteq a
  \land (q, a, p) \in \delta_\mathit{shift}
\}
\)
\item
\(
E_\mathit{supp} =
\{
  ((q, b, a), (p, b, \ell)) \in \mathcal{C}^2 \mid
  q \asupp{ax} p
  \text{ for some chain } \ochain{b}{ax}{\ell}
\}
\)
\end{itemize}
\end{definition}
We write $c_1 \spush c_2$ (resp.\ $c_1 \sshift c_2$, $c_1 \ssupp c_2$)
meaning that the support graph has an edge
in $E_\mathit{push}$ (resp.\ $E_\mathit{shift}$, $E_\mathit{supp}$)
between semi-configurations $c_1$ and $c_2$.
In the following, we assume familiarity with common graph-theoretic concepts such as
\ac{SCC} and \ac{BSCC} \cite{BaierK08}.

\begin{definition}
\label{def:trim-support-graph}
We define a \emph{final} \ac{SCC} as an \ac{SCC} of the support graph that
contains at least one node whose state is final,
or a support edge that represents a support of a simple or composed chain
that includes a final state.
We call \emph{trim} a support graph deprived of nodes from which
a final \ac{SCC} is not reachable.
\end{definition}
$\mathcal{A}$ is non-empty iff a final \ac{SCC} is reachable from an initial semi-configuration.

We say that a semi-configuration $(q, b, a)$ is \emph{pending} in a run $\rho$
if it is part of an open support, i.e., there is a configuration
$\config{ax}{q}{\beta B} \in \rho$, with $\symb{\beta} = b$,
such that $\beta$ is never popped in the rest of the run.
The trim support graph contains all and only pending semi-configurations.

We define a function $\sigma$ that relates runs to paths in the trim support graph.
Given run $\rho = \rho_0 \rho_1 \dots$ with $\rho_0 = \config{w}{q_0}{\bot}$,
$\sigma(\rho) = \sigma_0 \sigma_1 \dots$
is the path obtained by removing all closed chain supports from $\rho$,
leaving only their starting (before the push) and ending (after the pop) configurations,
and by then converting all configurations to semi-configurations.
Precisely, let $\rho_i = \config{a_i w_i}{q_i}{B_i}$ for all $i \geq 0$.
We have $\sigma_0 = (q_0, \#, a_0)$ and
if $\sigma$ maps $\rho_0 \dots \rho_i$ to $\sigma_0 \dots \sigma_i$,
then $\sigma_{i+1}$ can be determined inductively as follows:
if $\symb{B_i} \lessdot a_i$ and the pushed stack symbol $[a_i, q_i]$
is never popped in $\rho$, or if $\symb{B_i} \doteq a_i$,
then $\sigma_{i+1} = (q_{i+1}, \symb{\tp(B_{i+1})}, a_{i+1})$;
if $\symb{B_i} \lessdot a_i$ and $\rho$ contains a support $q_i \asupp{a_ix} q_{i+k}$
then $\sigma_{i+1} = (q_{i+k}, \symb{\tp(B_{i+k})}, a_{i+k})$.

\begin{example}[Running example, cont. \ref{running-ex:1}]
\label{running-ex:2}
Fig.~\ref{fig:opba-example} (top right) shows $\mathcal{B}_\lcall$'s trim support graph.
Recall that support edges represent strings in $L_D$
and are all final, because supports must reach $q_2$ to be closed.
Words whose runs end up in the \ac{SCC} made of the sole $(q_3, \lcall, \lcall)$ are in $L_2$,
while those remaining in other \acp{SCC} are in $L_1$.
The run in Fig.~\ref{fig:opba-example} (right) is an example of the former, since its $\sigma$-image is
$(q_0, \#, \lcall) (q_1, \lcall, \lcall) (q_3, \lcall, \lcall)^\omega$.
\end{example}

\subsection{Precedence Oriented Temporal Logic}
\label{sec:potl}

\acused{POTLF}
We report the fragment \acs{POTLF} of \acs{POTL}~\cite{ChiariMP21a} for which we study model checking.
For $t \in \{d, u\}$, and $\mathrm{a} \in AP$, where $AP$ is a finite set of atomic propositions,
the syntax of \acs{POTLF} the following:
\[
    \varphi \coloneqq \mathrm{a}
    \mid \neg \varphi
    \mid \varphi \lor \varphi
    \mid \lnextsup{t} \varphi
    \mid \lcnext{t} \varphi
    \mid \lguntil{t}{\chi}{\varphi}{\varphi}.
\]

The semantics is defined on \emph{\acs{OP} words}.
An \acs{OP} word is a tuple $w = (\mathbb{N}, \allowbreak <, \allowbreak M_{AP}, \allowbreak P)$,
where $\mathbb{N}$ is the set of natural numbers,
$<$ is the usual linear order on them,
$M_{AP}$ is an \ac{OPM} defined on $\powset{AP}$,
and $P : AP \rightarrow \powset{\mathbb{N}}$ is a function associating each atomic
proposition to the set of word positions where it holds.
By convenience, we define the \ac{OPM} on a subset $AP_s$ of $AP$,
whose elements we report in \textbf{bold} face and call \emph{structural labels},
and we extend it to subsets of $AP$ containing exactly one structural label
so that for $a, b \subset AP$ and $\sim \in \{\lessdot, \doteq, \gtrdot\}$
we have $a \sim b$ iff $\mathbf{l}_1 \sim \mathbf{l}_2$ with
$\mathbf{l}_1 \in a \cap AP_s$ and $\mathbf{l}_2 \in b \cap AP_s$.
For $i, j \in \mathbb{N}$ we write $i \sim j$ if
$i \in P(\mathbf{l}_1)$ and $j \in P(\mathbf{l}_2)$ and $\mathbf{l}_1 \sim \mathbf{l}_2$.
The semantics is defined on the \emph{chain} relation $\chain$,
which is induced by \ac{OPM} $M_{AP}$ and the labeling defined by $P$,
such that $\chain(i, j)$ for $i, j \in \mathbb{N}$ iff $i$ and $j$ are the left and right
contexts of the same chain (cf.\ Def.~\ref{def:chain}).

While the \acs{LTL} until is defined on paths of consecutive positions,
\acs{POTLF} \emph{summary} until operators follow both the linear ordering relation
and the $\chain$ relation, skipping parts of a word in the latter case.
The downward version of these paths navigates down the nesting structure of the $\chain$ relation,
descending towards inner functions,
while the upward version only goes up, towards containing functions frames.
We define the resulting type of paths as follows:
\begin{definition}[\cite{ChiariMP21b}]
\label{def:summary}
The \emph{downward summary path} between positions $i$ and $j$, denoted $\pi_\chain^d(w, i, j)$,
is a set of positions $i = i_1 < i_2 < \dots < i_n = j$ such that, for each $1 \leq p < n$,
\[
i_{p+1} =
\begin{cases}
  k & \text{if $k = \max\{ h \mid h \leq j \land \chain(i_p,h) \land (i_p \lessdot h \lor i_p \doteq h)\}$ exists;} \\
  i_p + 1 & \text{otherwise, if $i_p \lessdot (i_p + 1)$ or $i_p \doteq (i_p + 1)$.}
\end{cases}
\]
We write $\pi_\chain^d(w, i, j) = \emptyset$ if no such path exists.
The upward counterpart $\pi_\chain^u(w, i, j)$ is defined by substituting $\gtrdot$ for $\lessdot$.
\end{definition}

The semantics of \acs{POTLF} formulas is defined on single word positions.
Let $w$ be an \acs{OP} word, and $\mathrm{a} \in AP$;
we set $\sim^d = \mathord{\lessdot}$ and $\sim^u = \mathord{\gtrdot}$.
Then, for any position $i \in \mathbb{N}$ of $w$, $t \in \{d, u\}$:
\begin{itemize}
  \item $(w, i) \models \mathrm{a}$ iff $i \in P(\mathrm{a})$;
  \item $(w, i) \models \neg\varphi$ iff $(w,i)\not\models\varphi$;
  \item $(w, i) \models \varphi_1\lor\varphi_2$ iff $(w,i)\models\varphi_1$ or
    $(w,i)\models\varphi_2$;
  \item $(w,i) \models \lnextsup{t} \varphi$ iff
    $(w,i+1) \models \varphi$ and $i \sim^t (i+1)$ or $i \doteq (i+1)$;
  \item $(w,i) \models \lcnext{t} \varphi$
    iff $\exists j > i$ such that $\chain(i,j)$,
    ($i \sim^t j$ or $i \doteq j$), and $(w,j) \models \varphi$;
  \item $(w,i) \models \lguntil{t}{\chi}{\varphi_1}{\varphi_2}$ iff
    $\exists j \geq i$ such that $\pi_\chain^t(w, i, j) \neq \emptyset$,
    $(w, j) \models \varphi_2$ and $\forall j' \in \pi_\chain^t(w, i, j)$ such that $j' < j$
    we have $(w, j') \models \varphi_1$;
  \item $(w,i) \models \llglob \varphi$ iff $\forall j \geq i$ we have $(w,j) \models \varphi$.
\end{itemize}
We additionally employ $\land$ and $\implies$ with the usual semantics.
We define the language denoted by a formula $\varphi$ as
$L_\varphi = \{w \in \powset{AP}^\omega \mid (w, 1) \models \varphi\}$.

The following \acs{OP} word, whose $\chain$ relation is shown by edges,
represents an execution trace:\\
\begin{minipage}{\textwidth}
\centering
\begin{tikzpicture}
\matrix (m) [matrix of math nodes, column sep=-4, row sep=-4]
{
  \#
  & \color{blue} \lessdot & \lcall
  & \color{blue} \lessdot & \lqry
  & \color{blue} \lessdot & \lcall
  & \color{blue} \lessdot & \lcall, \mathrm{B}
  & \color{purple} \gtrdot & \lobs
  & \color{purple} \gtrdot & \lcall
  & \color{blue} \lessdot & \lcall
  & \color{orange} \doteq & \lret
  & \color{purple} \gtrdot & \lret \dots \\
  0
  & & 1
  & & 2
  & & 3
  & & 4
  & & 5
  & & 6
  & & 7
  & & 8
  & & 9 \\
};
\draw[blue] (m-1-5) to [out=25, in=155, pos=.8] node[yshift=-3pt]{$\lessdot$} (m-1-13);
\draw[blue] (m-1-5) to [out=25, in=155, pos=.2] node[yshift=-2pt]{$\lessdot$} (m-1-11);
\draw[purple] (m-1-7) to [out=25, in=155] node[yshift=-2pt]{$\gtrdot$} (m-1-11);
\draw[orange] (m-1-13) to [out=25, in=155] node[yshift=-4pt]{$\doteq$} (m-1-19);
\end{tikzpicture}
\end{minipage}

With \ac{OPM} $M_\lcall$, each $\lcall$ in a trace is in the $\chain$ relation with the $\lret$
of the same function invocation (e.g., pos.\ 6 and 7) and with the $\lcall$s to functions it invokes,
which are nested into the invoking $\lcall$-$\lret$ pair.
If a function is interrupted by a failed observe statement,
its call is in the $\chain$ relation with the $\lobs$ event
(e.g., pos.\ 3 and 5).
An $\lobs$ may terminate multiple functions,
so multiple $\lcall$s may be in the $\chain$ relation with it.

We briefly give an intuition of \ac{POTLF} semantics;
we refer the reader to \cite{ChiariMP21b} for a better presentation.
The $d$ operators can follow $\lessdot$ and $\doteq$ edges,
going down toward nested function calls.
Operator $\ldnext$ follows successor edges, so e.g., $\ldnext \lcall$ holds in position 2.
Operator $\lcdnext$ follows the $\chain$ relation,
and can therefore skip whole function bodies.
When evaluated in a $\lcall$, $\lcdnext (\lret \land \psi)$ states that $\psi$
holds when the function returns normally, expressing a post-condition.
The downward until operator iterates $\ldnext$ and $\lcdnext$,
and $\pi_\chain^d$ paths can only enter function frames, and not exit.
They can be employed to express specifications local to a function:
if $\lcduntil{\top}{\psi}$ is evaluated on a $\lcall$ position,
it means that $\psi$ holds somewhere in that $\lcall$'s function frame%
---it is an \ac{LTL} \emph{eventually} limited to a function invocation.
For instance, $\lcduntil{\top}{\mathrm{B}}$
holds in positions 1 to 4.
If $i$ and $j$ are $\lcall$s, $\pi_\chain^d(w, i, j)$ contains all $\lcall$s
to functions that are in the stack when the one in $j$ is invoked.
Thus, a formula like $\lcduntil{\neg \mathrm{A}}{\mathrm{B}}$,
where $\mathrm{A}$ and $\mathrm{B}$ are two functions,
means that $\mathrm{B}$ is invoked sometime while the current function is active,
and $\mathrm{A}$ is not on the stack.
E.g., it holds in position 1 due to path 1-2-3-4.

The $u$ operators are symmetric, but they navigate traces up towards outer functions.
For instance, $\lcunext \lobs$ holds in pos.\ 3 because $\chain(3,5)$ and $\lobs$ holds in 5,
and $\lret \gtrdot \lobs$.
Since $\lcall \doteq \lret$, the $\lcunext$ operator can be used to express pre/post-conditions,
both when a function terminates normally, and or due to a false \texttt{observe} statement.
Thus, when evaluated on a $\lcall$ to a function $\mathtt{f}$,
formula \eqref{eq:aliceLoc1} $\lcunext (\mathtt{aliceLoc} == 1)$
is true iff variable $\mathtt{aliceLoc}$ equals 1 when $\mathtt{f}$ terminates.
In the motivating example (Fig.~\ref{fig:coordination_game}),
we evaluate it on the first function call,
thus checking whether $\mathtt{aliceLoc} == 1$ at the end of the program.

We mix \acs{POTLF} operators with the more familiar \ac{LTL} operators,
which can be checked with our approach too.
For instance, in formula \eqref{eq:disagreement} from Section~\ref{intro},
we use the \ac{LTL} \emph{globally} (or \emph{always}) operator $\llglob \varphi$,
meaning that $\varphi$ holds forever after the position in which it is evaluated.
Thus the formula means that in all time instants (globally) that are $\lcall$s to function $\text{Alice}$,
if $p \geq \lambda$ then that function instance is not terminated by a triggered observation,
because no \texttt{observe} statement within that function call is false.


\section{Probabilistic Operator Precedence Automata}
\label{sec:popa}
We introduce \acf{pOPA}, a class of \ac{pPDA} whose sets of traces are \acp{OOPL}.
In the following, we denote as $\mathfrak{D}(S) = \{f : S \rightarrow [0,1] \mid \sum_{s \in S} f(s) = 1\}$
the set of probability distributions on a finite set $S$.

\begin{definition}
A \ac{pOPA} is a tuple
$\mathcal A = (\Sigma, \allowbreak M, \allowbreak Q, \allowbreak u_0,
\allowbreak \delta, \allowbreak \Lambda)$ where:
$(\Sigma, M)$ is an \acs{OP} alphabet;
$Q$ is a finite set of states (disjoint from $\Sigma$);
$u_0$ is the initial state;
$\Lambda : Q \rightarrow \Sigma$ is a state labelling function; and
$\delta$ is a triple of transition functions
$\delta_{\mathit{push}} : Q \rightarrow \mathfrak{D}(Q)$,
$\delta_{\mathit{shift}} : Q \rightarrow \mathfrak{D}(Q)$, and
$\delta_{\mathit{pop}} : (Q \times Q) \rightarrow \mathfrak{D}(Q)$,
such that pop moves have the following condition, for all $u, s, v \in Q$:
\begin{equation}
\delta_\mathit{pop}(u, s)(v) > 0 \implies
  \forall a \in \Sigma : a \gtrdot \Lambda(u) \implies a \gtrdot \Lambda(v).
\label{pop-cond}
\end{equation}
\end{definition}

Besides the randomized transition relation,
the main variation we introduce on Def.~\ref{def:opba}
is the labeling of states instead of transitions.
This is a standard approach in probabilistic model checking%
---i.e., flat finite-state 
Markov Chains do not read input strings, their traces are sequences of state labels---%
and it is required to retain the model checking problem decidable.
State labels are analogous to the partition in call/return states
in probabilistic \acp{VPA}~\cite{DubslaffBB12,WinklerGK22} which is, however,
more limiting, as it forces each state to either push or pop stack symbols.
Instead, all \ac{pOPA} states can push, shift, or pop,
depending on the \ac{PR} between the topmost stack symbol's and their label.
Condition \eqref{pop-cond} on pop moves is needed for labelings of \ac{pOPA}
runs to be \acp{OOPL} (cf.\ Prop.~\ref{prop:popa-opba}).
\ac{OPBA} use a look-ahead to decide what kind of move to do next,
but \ac{pOPA} use the current state label.
Thanks to condition \eqref{pop-cond}, all sequences of pop moves made by a \ac{pOPA}
are also performed by an \ac{OPBA} using the label of the last \ac{pOPA} state
in the sequence as its look-ahead.

The semantics of a \ac{pOPA} $\mathcal{A}$ is defined through
an infinite Markov chain $\Delta(\mathcal{A})$
whose set of vertices is $Q \times (\Gamma^* \{\bot\})$
where, as in \ac{OPBA}, $\bot$ is the initial stack symbol,
which can never be pushed or popped,
and $\Gamma = \Sigma \times Q$ is the set of stack symbols.
The transition relation reflects the three kinds of moves of \ac{pOPA}.
For any stack contents $A \in \Gamma^* \{\bot\}$:\\
\textbf{push:} $(u, A) \apush{x} (v, [\Lambda(u), u] A)$
  if $\symb{A} \lessdot \Lambda(u)$
  and $\delta_{\mathit{push}}(u)(v) = x$;\\
\textbf{shift:} $(u, [a, s] A) \apush{x} (v, [\Lambda(u), s] A)$
  if $a \doteq \Lambda(u)$
  and $\delta_{\mathit{shift}}(u)(v) = x$;\\
\textbf{pop:} $(u, [a, s] A) \apush{x} (v, A)$
  if $a \gtrdot \Lambda(u)$
  and $\delta_{\mathit{pop}}(u, s)(v) = x$.\smallskip\\
$\symb{A}$ is defined as for \ac{OPBA}.
A run of $\mathcal{A}$ is a path in $\Delta(\mathcal{A})$ that starts in $(u_0, \bot)$,
where $u_0$ is the initial state.
Since $\symb{\bot} = \#$, the first move is always a push.
We call $\mathit{Runs}(\mathcal{A})$ the set of such runs.
The probability space over $\mathit{Runs}(\mathcal{A})$
is obtained by the classic cylinder set construction for Markov chains~\cite{BaierK08}.

\begin{figure}[tb]
\begin{tikzpicture}
  [node distance=40pt, state/.style=state with output, every state/.style={minimum size=0pt, inner sep=2pt}, >=latex, font=\scriptsize]
\node[state, initial by arrow, initial text=] (u0) {$u_0$ \nodepart{lower} $\lcall$};
\node[state] (u1) [right of=u0] {$u_1$ \nodepart{lower} $\lcall$};
\node[state] (u2) [below of=u1] {$u_2$ \nodepart{lower} $\lret$};
\node[state] (u3) [below of=u0] {$u_3$ \nodepart{lower} $\lcall$};
\path[->] (u0) edge[above] node{$1$} (u1)
          (u1) edge[out=250, in=110, left, pos=.8] node[xshift=2]{$\nicefrac{1}{3}$} (u2)
          (u1) edge[above, in=125, out=150, loop, pos=1] node[xshift=.5em]{$\nicefrac{2}{3}$} (u1)
          (u2) edge[dashed, right] node[xshift=-2]{$1$} (u1)
          (u1) edge[double, in=35, out=60, loop, below, pos=.8] node[xshift=4pt]{$u_1\nicefrac{1}{2}$} (u1)
          (u1) edge[double, out=300, in=60, right, pos=.5] node{$u_1\nicefrac{1}{2}$} (u2)
          (u1) edge[double, left, pos=.3] node{$u_01$} (u3)
          (u3) edge[out=190, in=165, loop, above, pos=.8] node{$1$} (u3);
\end{tikzpicture}%
\begin{adjustbox}{max width=.7\textwidth}
\begin{tikzpicture}
\matrix (m) [matrix of math nodes, column sep=-4, row sep=-4]
{
\lcall     & \lcall              & \lret               & -                   & \lcall             & \lcall             & \lcall           & \lcall \\
           &                     &                     &                     &                    &                    &                  & {[\lcall, u_1]} \\
           &                     &                     &                     &                    &                    & {[\lcall, u_1]}  & {[\lcall, u_1]} \\
           &                     & {[\lcall, u_1]}     & {[\lret, u_1]}      &                    & {[\lcall, u_1]}    & {[\lcall, u_1]}  & {[\lcall, u_1]} \\
           & {[\lcall, u_0]}     & {[\lcall, u_0]}     & {[\lcall, u_0]}     & {[\lcall, u_0]}    & {[\lcall, u_0]}    & {[\lcall, u_0]}  & {[\lcall, u_0]} \\
\bot       & \bot                & \bot                & \bot                & \bot               & \bot               & \bot             & \bot \\
u_0        & u_1                 & u_2                 & u_1                 & u_1                & u_1                & u_1              & u_1 \\
\lessdot   & \lessdot            & \doteq              & \gtrdot             & \lessdot           & \lessdot           & \lessdot         & \lessdot \\
};
\end{tikzpicture}
\end{adjustbox}
\caption{Left: \ac{pOPA} $\mathcal{A}_\lcall$.
Push, shift and pop moves are depicted resp.\ as solid, dashed and double arrows, labeled with their probability and, just for pop moves, with the state they pop.
Unreachable transitions are omitted (e.g., push edges from $u_2$). 
Right: prefix of a run of $\mathcal{A}_\lcall$ showing, from the top, its labeling through $\Lambda_\varepsilon$,
the stack growing upwards, the current state, and the \acs{PR} between the topmost stack symbol and the current state label.}
\label{fig:popa-example}
\end{figure}

Given a run $\rho = (u_0, A_0) (u_1, A_1) \dots$ of $\mathcal{A}$,
we define its labeling
$\Lambda_\varepsilon(\rho) = \Lambda_\varepsilon(u_0, A_0) \Lambda_\varepsilon(u_1, A_1) \dots$,
where $\Lambda_\varepsilon(u, A) = \varepsilon$ if $\symb{A} \gtrdot \Lambda(u)$,
and $\Lambda_\varepsilon(u, A) = \Lambda(u)$ otherwise.
The concept of \emph{chain support} can be defined for \ac{pOPA}
as in Def.~\ref{def:support} by replacing input symbols with state labels.
Labels link \ac{pOPA} to \ac{OOPL}:%
\begin{proposition}
\label{prop:popa-opba}
The set of $\Lambda_\varepsilon$-labelings of all runs of a \ac{pOPA} is an \ac{OOPL}.
\end{proposition}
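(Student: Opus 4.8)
The plan is to exploit the characterization of \acp{OOPL} as exactly the languages recognized by \acp{OPBA} (Def.~\ref{def:opba}), and to construct, from a given \ac{pOPA} $\mathcal{A} = (\Sigma, M, Q, u_0, \delta, \Lambda)$, an \ac{OPBA} $\mathcal{B}$ over the same \acs{OP} alphabet $(\Sigma, M)$ whose language is exactly $\{\Lambda_\varepsilon(\rho) \mid \rho \in \mathit{Runs}(\mathcal{A})\}$. Since every run of $\mathcal{A}$ is infinite and, because pops never outnumber pushes, performs infinitely many push/shift moves, each labeling is an $\omega$-word, so an \ac{OPBA} is the right target. The guiding idea is that the symbol $\Lambda(u)$ emitted by a push or shift from state $u$ plays, for $\mathcal{B}$, the role of the symbol read on that move, while pop moves (which emit $\varepsilon$) become the $\varepsilon$-pops of $\mathcal{B}$. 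As a first attempt I would take $\mathcal{B}$ to push $[\Lambda(u),u]$ exactly where $\mathcal{A}$ does, declare all states final (so $\mathcal{B}$ accepts a word iff it admits some infinite run on it), and let push/shift/pop transitions mirror the support of $\delta_\mathit{push},\delta_\mathit{shift},\delta_\mathit{pop}$.

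The subtlety that governs the whole proof is that $\mathcal{A}$ selects its move type by comparing the topmost stack terminal with the label $\Lambda(u)$ of the current state, whereas $\mathcal{B}$, by definition, compares it with its look-ahead. On a push or shift these agree, since the emitted/read symbol is $\Lambda(u)$ itself. They diverge on maximal blocks of consecutive pops: $\mathcal{A}$ pops through states $u_0, \dots, u_{M-1}$ using their labels, whereas $\mathcal{B}$ has a single look-ahead, namely the label $\Lambda(u_M)$ of the next pushing/shifting state. This is precisely where condition~\eqref{pop-cond} enters: iterating it along the block shows that the set $\{a \mid a \gtrdot \Lambda(u_t)\}$ only grows with $t$, so each pop guard $b_t \gtrdot \Lambda(u_t)$ of $\mathcal{A}$ propagates to $b_t \gtrdot \Lambda(u_M)$, i.e.\ the look-ahead of $\mathcal{B}$ legitimately fires every pop in the block (cf.\ Def.~\ref{def:support}). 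This yields the inclusion of $\mathcal{A}$'s labelings into $L_\mathcal{B}$.

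I expect the converse inclusion to be the main obstacle: for it I need $\mathcal{B}$ to pop \emph{only} when $\mathcal{A}$ would, i.e.\ only when the topmost terminal $b$ satisfies $b \gtrdot \Lambda(u)$, but \eqref{pop-cond} propagates $\gtrdot$ forward only, so the look-ahead guard $b \gtrdot a$ alone is strictly weaker and the naive $\mathcal{B}$ over-accepts (it can close a chain where $\mathcal{A}$ would instead push). I would resolve this by enriching the states of $\mathcal{B}$ to $Q \times \Sigma_\#$, the second component tracking the current topmost terminal: it is set to $\Lambda(u)$ on push/shift and restored from the popped symbol on pop, which is sound because the terminal beneath a symbol cannot change while that symbol is buried, so it can be stored in the state component of the pushed stack symbol. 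With $b$ thus available, I restrict pop transitions to enriched states $(u,b)$ with $b \gtrdot \Lambda(u)$; by the forward propagation such a pop stays enabled by the actual look-ahead, while any word whose \acs{OP} structure forces a pop at a state whose label forbids it leaves $\mathcal{B}$ stuck. A routine induction maintaining the invariant that $\mathcal{B}$ and $\mathcal{A}$ share state, stack, and read prefix then puts runs of $\mathcal{B}$ in bijection with runs of $\mathcal{A}$ on the same labeling, giving $L_\mathcal{B} = \{\Lambda_\varepsilon(\rho) \mid \rho \in \mathit{Runs}(\mathcal{A})\}$ and hence the claim.
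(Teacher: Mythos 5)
Your proposal is correct, and its first half coincides with the paper's own proof: the paper builds the automaton exactly as your ``first attempt'' does, by moving each state label $\Lambda(u)$ onto the outgoing push/shift transitions of $u$ and keeping the pop transitions as they are, and it uses condition~\eqref{pop-cond} in the same forward-chaining way you do, to show that along a maximal pop block every guard $b_t \gtrdot \Lambda(u_t)$ of $\mathcal{A}$ propagates to $b_t \gtrdot \Lambda(u_M)$, the look-ahead of $\mathcal{B}$, so every labeling of a run of $\mathcal{A}$ is accepted by $\mathcal{B}$.

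Where you genuinely depart from the paper is the converse inclusion, and your departure is warranted: the paper's one-paragraph proof stops at the forward direction and never checks that the naive automaton does not over-accept, whereas your worry is real, not hypothetical. Since $\delta_\mathit{pop}$ is a total function, the naive $\mathcal{B}$ inherits pop transitions $(u,s,v)$ even for pairs at which $\mathcal{A}$ can never pop, and these fire under the look-ahead guard $b \gtrdot a$, which condition~\eqref{pop-cond} does not reduce to $\mathcal{A}$'s guard $b \gtrdot \Lambda(u)$. Concretely: take $\Sigma = \{a,b,c\}$ with $b \lessdot c$, $c \lessdot c$, and $x \gtrdot a$ for every $x \in \Sigma$; states $q, u, v$ labeled $b, c, a$ respectively; Dirac pushes $q \mapsto u$, $u \mapsto u$, $v \mapsto v$; and every pop distribution Dirac at $v$ (condition~\eqref{pop-cond} then holds trivially, because every symbol takes precedence from $a = \Lambda(v)$). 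The unique run of this \mbox{pOPA} emits $b c^\omega$, but the naive automaton accepts $b a^\omega$: after pushing on $b$ it sees look-ahead $a$, has $b \gtrdot a$, and pops $[b,q]$ at $u$ --- a move that $\Lambda(u) = c$, with $b \lessdot c$, forbids in $\mathcal{A}$ --- and then cycles on $v$ reading $a^\omega$. So your enrichment of states to $Q \times \Sigma_\#$, restricting pops to pairs $(u,b)$ with $b \gtrdot \Lambda(u)$, is not an optional refinement but precisely what makes $L_\mathcal{B} = \Lambda_\varepsilon(\mathit{Runs}(\mathcal{A}))$ an equality rather than a superset inclusion; and your justification for the bookkeeping (a buried stack symbol cannot be modified, so the terminal lying below a pushed symbol may be cached in its state component and restored on pop) is exactly right. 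In short: same key lemma as the paper, but your two-sided argument closes a gap that the paper's sketch leaves open.
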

\begin{proof}
We can build an \ac{OPBA} accepting the language of $\Lambda_\varepsilon$-labelings
of runs of a \ac{pOPA} by taking the same set of states and transitions,
and simply moving state labels to input symbols of outgoing transitions.
The soundness of this construction for push and shift moves is trivial;
for pop moves we need condition~\eqref{pop-cond}.
In fact, pop moves of \ac{OPBA} use the next input symbol as a look-ahead
for checking the current \ac{PR},
so for a sequence of consecutive pop moves of the \ac{pOPA} such as
\(
(u_1, [a_1, s_1] \dots [a_n, s_n]A)
\dots (u_n, [a_n, s_n]A) (u_{n+1}, A)
\)
the \ac{OPBA} has a corresponding sequence
\(
\config{\Lambda(u_n) \dots}{u_1}{[a_1, s_1] \dots [a_n, s_n]A} \allowbreak
\dots \allowbreak
\config{\Lambda(u_n) \dots}{u_n}{[a_n, s_n]A} \allowbreak
\config{\Lambda(u_n) \dots}{u_{n+1}}{A}
\)
because $a_i \gtrdot \Lambda(u_i)$ implies $a_{i-1} \gtrdot \Lambda(u_{i-1})$
for all $2 \leq i \leq n$,
and therefore $a_i \gtrdot \Lambda(u_n)$ for all $i$.
\end{proof}

\acp{pOPA} model recursive probabilistic programs similarly to how \ac{OPBA} \cite{ChiariMPP23}
and pushdown automata~\cite{AlurBE18} model deterministic programs.
With \ac{OPM} $M_\lcall$, push moves from states labeled with $\lcall$
simulate the allocation of a function frame on the program stack,
and pop moves performed after $\lret$ mimic the removal of the frame after the function returns.
States represent valuations of program variables,
which must have finite domains.
Observations can be simulated through their rejection sampling semantics
by adding transitions that restart a query after an observation fails, in a loop.
\begin{wrapfigure}[8]{r}{4cm}%
\includegraphics[width=3cm]{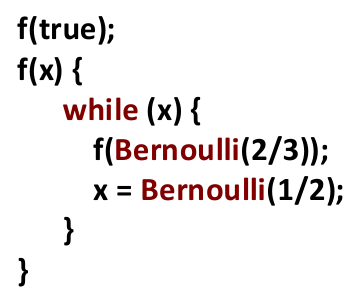}%
\vspace{-3ex}
\captionof{figure}{Program $P_\mathtt{f}$.}
\label{fig:example-prog}
\end{wrapfigure}

\begin{example}[Running example, cont. \ref{running-ex:2}]
\label{running-ex:3}
Fig.~\ref{fig:popa-example} shows the \ac{pOPA} encoding program $P_\mathtt{f}$ (Fig.~\ref{fig:example-prog})
and one of its runs.
$P_\mathtt{f}$ consists of a function \texttt{f} that calls itself recursively in loop
that breaks with probability $\nicefrac{1}{2}$.
Recursive calls start the loop with probability $\nicefrac{2}{3}$.
States $u_1$ and $u_2$ model \texttt{f}: the two push moves originating in $u_1$
decide whether the next call to \texttt{f} enters the loop or returns, and pop moves whether the loop continues.
State $u_0$ models the initial call to \texttt{f}, and $u_3$ is a sink state.
\end{example}

\slimparagraph{Problem statement.}
In this paper, we solve the probabilistic \acs{POTLF} model checking problem:
\begin{definition}
\label{def:potl-mc-problem}
Let $AP$ be a finite set of atomic propositions,
$(\powset{AP}, M_{AP})$ an \acs{OP} alphabet,
$\mathcal{A}$ a \ac{pOPA} and $\varphi$ a \acs{POTLF} formula
on the same alphabet, and $\varrho \in [0, 1]$ a rational number.
The \emph{qualitative (resp.\ quantitative) probabilistic \acs{POTLF} model checking}
problem amounts to deciding if
\[
P(\Lambda_\varepsilon(\mathit{Runs}(\mathcal{A})) \cap L_\varphi) \geq \varrho
\]
where $\varrho = 1$ (resp.\ $\varrho$ is a rational number in $[0, 1]$).
\end{definition}

\slimparagraph{Termination Probabilities.}
To answer reachability questions for \acp{pOPA}, we must compute
the \emph{termination probabilities} of \emph{semi-con\-fi\-gu\-rations} in $Q \times \Gamma_\bot$.
We define $\pvar{u}{\alpha}{v}$ with $u, v \in Q$, $\alpha \in \Gamma$
as the probability that a \ac{pOPA} in state $u$ with $\alpha$ on top of its stack
eventually pops $\alpha$ and reaches state $v$.
More formally, it is the probability that the \ac{pOPA} has a sequence of transitions of the form
\(
(u, \alpha A) \dots (v, A)
\)
in which no symbol in $A \in \Gamma^*\{\bot\}$ is popped.

As \acp{pOPA} are a subclass of \acp{pPDA}, we exploit
previous results on \ac{pPDA} termination probabilities~\cite{BrazdilEKK13}.
\ac{pOPA} termination probabilities are the least non-negative solutions of the equation system
$\mathbf{v} = f(\mathbf{v})$, where $\mathbf{v}$ is the vector of triples
$\pvar{u}{\alpha}{v}$ for all $u, v \in Q$, $\alpha \in \Gamma$ and $f(\pvar{u}{\alpha}{v})$ is given by
\[
\begin{cases}
\sum_{r, t \in Q} \delta_\mathit{push}(u)(r) \pvar{r}{[\Lambda(u), u]}{t} \pvar{t}{\alpha}{v} & \text{if $\alpha = \bot$, or $\alpha = [a, s]$ and $a \lessdot \Lambda(u)$} \\
\sum_{r \in Q} \delta_\mathit{shift}(u)(r) \pvar{r}{[\Lambda(u), s]}{v} & \text{if $\alpha = [a, s]$ and $a \doteq \Lambda(u)$} \\
\delta_\mathit{pop}(u, s)(v) & \text{if $\alpha = [a, s]$ and $a \gtrdot \Lambda(u)$}
\end{cases}
\]

For the system to be well defined, we assume w.l.o.g.\ that initial state $u_0$
is not reachable from other states.
\begin{example}[Running example, cont \ref{running-ex:3}]
\label{running-ex:4}
Some termination probabilities of $\mathcal{A}_\lcall$ (Fig.~\ref{fig:popa-example}) are:
\(
\pvar{u_1}{[\lcall, u_0]}{u_3} = \nicefrac{1}{2},
\)
\(
\pvar{u_1}{[\lcall, u_1]}{u_1} = \pvar{u_1}{[\lcall, u_1]}{u_2} = \nicefrac{1}{4},
\)
and
\(
\pvar{u_2}{[\lcall, u_0]}{u_1} =
\pvar{u_1}{[\lret, u_0]}{u_1} =
1.
\)
\end{example}


While these values are rational, in general
solutions of $\mathbf{v} = f(\mathbf{v})$ are algebraic numbers
that may be irrational~\cite{EtessamiY09}.
Questions such as whether a value in $\mathbf{v}$ is $<$, $>$, or $=$
to a rational constant can be answered by encoding the system
in \ac{ETR}~\cite{EtessamiY09},
which is decidable in polynomial space and exponential time~\cite{Canny88,Renegar92}.
Solutions can also be approximated by numerical methods~\cite{EtessamiY09,WojtczakE07,WinklerK23a}.


\section{Model Checking \texorpdfstring{\acs{POTLF}}{POTLfX} against \texorpdfstring{\acs{pOPA}}{pOPA}}
\label{sec:mc}
A B\"uchi automaton is \emph{separated} iff the languages it accepts starting
from different states are disjoint~\cite{CartonM03,Wilke17}.
Separated automata have a \emph{backward-deterministic} (or \emph{co-deterministic}) transition relation,
a property that has been exploited to obtain optimal---i.e., singly exponential---%
algorithms for model checking Markov chains against \acs{LTL} properties~\cite{CouvreurSS03}.
An \acs{LTL} formula can be translated into a separated B\"uchi automaton of exponential size,
which can then be directly employed for probabilistic model checking.
Alternative techniques go through one additional exponential blowup
caused by determinization of the B\"uchi automaton,
or translate the formula directly into a deterministic Rabin automaton
of size doubly exponential in formula length~\cite{BaierK08}.

In this section, we introduce an algorithm that exploits separation to model check \acp{pOPA} against \ac{POTLF} and \acs{LTL} specifications.
The rest of the section is organized as follows:
in Section~\ref{sec:separated-opba}, we define a separation property for \ac{OPBA};
in Section~\ref{sec:support-chain}, we introduce the support chain,
a Markov chain that characterizes the limit behavior of \ac{pOPA};
in Section~\ref{sec:qualitative-mc}, we describe our algorithm for qualitative model checking of \acp{pOPA} against separated \acp{OPBA} and prove its correctness;
in Section~\ref{sec:quantitative-mc}, we give the quantitative model checking algorithm;
in Section~\ref{ssec:hardness}, we summarize our results and their complexity;
finally, in Section~\ref{sec:analysis-motivating-example},
we employ our algorithm to analyze the motivating example from Section~\ref{intro}.



\subsection{Separated \texorpdfstring{\acp{OPBA}}{omegaOPBA}}
\label{sec:separated-opba}

In the following definitions, we fix an \acs{OP} alphabet $(\Sigma, M)$ and identify $\bot\bot$ with just $\bot$.
We denote the language of finite words read by an \ac{OPBA} $\mathcal{A}$ that starts from state $q$ and stack contents $B$,
and reaches state $q'$ with an empty stack as
$L^f_\mathcal{A}(q, B, q') = \{x \in \Sigma^* \mid \config{x \#}{q}{B} \vdash^* \config{\#}{q'}{\bot}\}$.

We call an \ac{OPBA} $\mathcal{A}$ \emph{trim} if for every transition
there exists a word $w \in \Sigma^\omega$ such that $\mathcal{A}$
visits such transition during a final run on $w$.
We call \emph{complete} an \ac{OPBA} that has a final run on every $\omega$-word in $\Sigma^\omega$.

\begin{definition}
\label{def:separated}
An \ac{OPBA} is \emph{separated} iff
for every word $w \in \Sigma^\omega$,
stack symbols $\beta_1, \beta_2 \in \Gamma_\bot$
such that $\symb{\beta_1} = \symb{\beta_2}$,
and states $q_1, q_2 \in Q$,
$w \in L_\mathcal{A}(q_1, \beta_1 \bot)$ and $w \in L_\mathcal{A}(q_2, \beta_2 \bot)$
implies $q_1 = q_2$.
\end{definition}

While separated B\"uchi automata have a backward-deterministic transition relation,
separated \ac{OPBA} have a backward-deterministic support graph.
The definition of backward-determinism for support graphs is, however, sightly more complex,
because support edges are labeled with words rather than characters.
\begin{definition}
\label{def:backward-det}
The support graph of an \ac{OPBA} $\mathcal{A}$ is backward deterministic if, given a node $d = (q, b, \ell)$:
\begin{itemize}
\item for each $b' \in \Sigma$, there is at most one state $q' \in Q$ such that
  the support graph contains an edge $(q', b', b) \spush d$ if $b' \lessdot b$,
  and $(q', b', b) \sshift d$ if $b' \doteq b$;
\item for each chain $\ochain{b}{ax}{\ell}$ with $ax \in \Sigma^+$,
  there is at most one state $q' \in Q$ such that the support graph contains an edge $(q', b, a) \ssupp d$,
  and $\mathcal{A}$ has a run going from configuration
  $\config{ax\ell}{q'}{\beta \bot}$ to $\config{\ell}{q}{\beta \bot}$
  for some stack symbol $\beta$ such that $\symb{\beta} = b$
  and $\beta$ is never popped during the run.
\end{itemize}
\end{definition}

\begin{proposition}
\label{prop:sep-backward}
If a trim \ac{OPBA} is separated, then its trim support graph is backward deterministic.
\end{proposition}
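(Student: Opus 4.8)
The plan is to reduce backward determinism of the trim support graph (Def.~\ref{def:backward-det}) directly to the separation property (Def.~\ref{def:separated}). Fix a node $d = (q, b, \ell)$ of the trim support graph. Two preliminary facts drive the whole argument. First, since $d$ lies in the \emph{trim} support graph, a final \ac{SCC} is reachable from it; by the correspondence $\sigma$ between support-graph paths and runs, there is a final run of $\mathcal{A}$ in which $d$ is pending, i.e.\ a configuration $\config{\ell w'}{q}{\beta B}$ occurs with $\symb{\beta} = b$ and $\beta$ never popped afterwards. Since the part of the stack below $\beta$ is then never inspected, the suffix of that run witnesses $w_d \in L_\mathcal{A}(q, \beta\bot)$ for $w_d = \ell w'$. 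Second, the state component of a stack symbol is read only by the pop move that removes it, so whenever $\beta$ is never popped $L_\mathcal{A}(q, \beta\bot)$ depends only on $\symb{\beta}$, not on the state stored in $\beta$; hence $w_d \in L_\mathcal{A}(q, \gamma\bot)$ for \emph{every} $\gamma \in \Gamma_\bot$ with $\symb{\gamma} = b$. The rest of the proof consists in extending $w_d$ backwards along an incoming edge so that the two candidate predecessors accept a common word from stacks with equal top symbol, and then invoking separation.

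For the push/shift bullet, suppose $(q_1', b', b) \spush d$ and $(q_2', b', b) \spush d$ (the shift case is identical with $\doteq$ in place of $\lessdot$). Each edge gives $b' \lessdot b$ and $(q_i', b, q) \in \delta_\mathit{push}$, so on the stack $[b', s]\bot$ for any fixed state $s$ the automaton reads $b$ with a push move $\config{b\, w_d}{q_i'}{[b', s]\bot} \vdash \config{w_d}{q}{[b, q_i'] [b', s]\bot}$ and then follows the final run witnessing $w_d \in L_\mathcal{A}(q, [b, q_i']\bot)$, which never pops $[b, q_i']$ and hence never pops $[b', s]$. Thus $b\, w_d \in L_\mathcal{A}(q_1', [b', s]\bot)$ and $b\, w_d \in L_\mathcal{A}(q_2', [b', s]\bot)$ with identical stacks, so separation forces $q_1' = q_2'$.

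For the support bullet, fix a chain $\ochain{b}{ax}{\ell}$ and suppose both $(q_1', b, a) \ssupp d$ and $(q_2', b, a) \ssupp d$ satisfy the accompanying run condition of Def.~\ref{def:backward-det}: for $i = 1,2$ there is $\beta_i$ with $\symb{\beta_i} = b$ and a run $\config{ax\ell}{q_i'}{\beta_i\bot} \vdash^* \config{\ell}{q}{\beta_i\bot}$ that never pops $\beta_i$. Concatenating this support traversal (which reads $ax$ and returns to $q$ with look-ahead $\ell$) with the continuation $w_d = \ell w'$ yields a final run from $\config{ax\, w_d}{q_i'}{\beta_i\bot}$ that never pops $\beta_i$, using $w_d \in L_\mathcal{A}(q, \beta_i\bot)$ granted by the state-part independence above. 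Hence $ax\, w_d$ lies in both $L_\mathcal{A}(q_1', \beta_1\bot)$ and $L_\mathcal{A}(q_2', \beta_2\bot)$; since $\symb{\beta_1} = \symb{\beta_2} = b$, separation again gives $q_1' = q_2'$.

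The move bookkeeping is routine; the two places that need care are the justification of the continuation word $w_d$ and the in-place nature of shift moves. The existence of $w_d$ rests on the correctness of the support-graph construction (Definitions~\ref{def:support-graph-opba} and~\ref{def:trim-support-graph})---specifically, that every node of the trim graph is pending in some \emph{final} run and that ``no symbol of $B$ is popped'' is equivalent to the stack never shrinking below $|B|$. I expect this reconnection of the graph-level semantics to concrete runs to be the main obstacle. The shift subtlety is minor: a shift rewrites the top symbol in place without removing the frame, so ``$\beta$ never popped'' is preserved across it, and the state-part independence observation lets a single $w_d$ serve predecessors whose push or shift moves store different states in the top symbol.
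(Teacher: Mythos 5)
Your proof is correct and follows essentially the same strategy as the paper's: use trimness of the support graph to obtain a common continuation word accepted from $d$, extend it backwards along the incoming push/shift move or chain support, and invoke separation (Def.~\ref{def:separated}) to force the two candidate predecessor states to coincide. The only difference is presentational: you make explicit the ``state-part independence'' of $L_\mathcal{A}(q,\beta\bot)$ when $\beta$ is never popped, a fact the paper uses implicitly when it asserts a single word $bw$ (resp.\ $axw$) lies in both languages.
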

\begin{proof}
We fix an \ac{OPBA} $\mathcal{B}$.
Let $d = (q, b, \ell)$ be a node in the trim support graph of $\mathcal{B}$,
and let $b' \in \Sigma$.
Suppose by contradiction that there are two states $p_1$ and $p_2$
such that there are edges $(p_1, b', b) \spush d$ and $(p_2, b', b) \spush d$
in the trim support graph.
Since the support graph is trim, $d$ can reach a final \ac{SCC},
and so do $(p_1, b', b)$ and $(p_2, b', b)$.
Thus, there is a word $b w \in \Sigma^\omega$
such that $bw \in L_\mathcal{B}(p_1, \beta_1 \bot)$ and
$bw \in L_\mathcal{B}(p_2, \beta_2 \bot)$ for some stack symbols $\beta_1, \beta_2$
such that $\symb{\beta_1} = \symb{\beta_2} = b'$,
and the support starting with $b$ is open in the final run on $bw$.
By Definition~\ref{def:separated} we have $p_1 = p_2$.
The case for shift edges is analogous.

Suppose, again by contradiction, that there are two states $p_1$ and $p_2$
such that there are edges $(p_1, b, a) \ssupp d$ and $(p_2, b, a) \ssupp d$
in the trim support graph and $\mathcal{B}$ has two runs going resp.\ from configuration
$\config{ax\ell}{p_1}{\beta_1 \bot}$ to $\config{\ell}{q}{\beta_1 \bot}$ and
$\config{ax\ell}{p_2}{\beta_2 \bot}$ to $\config{\ell}{q}{\beta_2 \bot}$
for some stack symbols $\beta_1, \beta_2$ such that $\symb{\beta_1} = \symb{\beta_2} = b$
and $\beta_1$ and $\beta_2$ are never popped during both runs.
Then $\mathcal{B}$ has two supports of the form
$p_1 \apush{a} p'_1 \ashift{} \dots p''_1 \apop{p_1} q$ and
$p_2 \apush{a} p'_2 \ashift{} \dots p''_2 \apop{p_1} q$.
Since the support graph is trim, there exists an $\omega$-word $w \in \Sigma^\omega$
starting with $\ell$ such that $w \in L_\mathcal{B}(q, \beta \bot)$
for some stack symbol $\beta$ with $\symb{\beta} = b$.
Then, while reading word $axw$, $\mathcal{B}$ can read $ax$ with both supports.
Thus, $axw \in L_\mathcal{B}(p_1, \beta \bot)$ and $axw \in L_\mathcal{B}(p_2, \beta \bot)$,
which contradicts Definition~\ref{def:separated}.
\end{proof}

\begin{example}[Running example, cont. \ref{running-ex:4}]
\label{running-ex:5}
The \ac{OPBA} in Fig.~\ref{fig:opba-example} is separated and, in fact,
its support graph is backward deterministic. 
\end{example}

\acs{POTL} formulas can be encoded as \ac{OPBA} with the construction by~\citet{ChiariMPP23}.
We prove that, when restricted to \acs{POTLF}, the resulting \ac{OPBA} are separated:
\begin{theorem}
\label{thm:potlf-separated-opba}
Given a \acs{POTLF} formula $\varphi$, a complete separated \ac{OPBA} $\mathcal{B}$
of size exponential in the length of $\varphi$ can be built such that $L_\mathcal{B} = L_\varphi$.
\end{theorem}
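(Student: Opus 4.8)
The plan is to take the \ac{OPBA} construction of \citet{ChiariMPP23}, which turns a \acs{POTL} formula $\varphi$ into an \ac{OPBA} $\mathcal{B}$ with $L_\mathcal{B} = L_\varphi$ and a number of states exponential in $|\varphi|$, and to show that on the future fragment \acs{POTLF} it already yields a separated automaton (and is easily made complete). In that construction the states are \emph{atoms}: maximally consistent subsets of a closure $\clos{\varphi}$ whose size is linear in $|\varphi|$, so the exponential size bound is immediate. The transition relation and the B\"uchi acceptance are designed so that the construction satisfies a semantic invariant, which I would first isolate explicitly: along any final run that reads a word $w$ starting from a configuration $\config{w}{q}{\beta\bot}$ in which $\beta$ is never popped, the state $q$ equals \emph{exactly} the set $\{\psi \in \clos{\varphi} \mid (w,1) \models \psi\}$ of closure formulas that hold at the first read position. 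The two inclusions come respectively from the one-step unfolding encoded in $\delta$ and from the B\"uchi condition, which forces every asserted until-eventuality to be fulfilled.

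The technical core is a suffix-determinacy lemma: for \acs{POTLF}, the truth value $(w,i) \models \psi$ of every $\psi \in \clos{\varphi}$ depends only on the labels of $w$ at positions $\geq i$. I would prove it by structural induction. Atomic propositions and the Boolean cases are immediate; $\lnextsup{t}\psi'$ only inspects position $i+1$ and the \ac{PR} between $i$ and $i+1$; $\lcnext{t}\psi'$ only inspects some $j$ with $\chain(i,j)$ and $j>i$; and $\lguntil{t}{\chain}{\psi_1}{\psi_2}$ quantifies solely over the positions $i=i_1<\dots<i_n=j$ of a summary path, all of which are $\geq i$. The only non-routine point is that the restriction of $\chain$ to pairs $(i',j')$ with $i \leq i' < j'$ is itself fixed by the labels at positions $\geq i$, because a chain with left context at $i'$ is delimited purely by the precedence pattern of the symbols it spans, and these all lie in $[i',j']$. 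Here is where the absence of past operators in \acs{POTLF} is essential: no subformula ever refers to a position $< i$, so nothing in the suffix-determined truth set can depend on the prefix, in particular on the top stack symbol $\symb{\beta}$, which only governs the move \emph{type} taken at the first step.

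Separation then follows immediately. Assume $w \in L_\mathcal{B}(q_1, \beta_1\bot)$ and $w \in L_\mathcal{B}(q_2, \beta_2\bot)$ with $\symb{\beta_1} = \symb{\beta_2}$. Each hypothesis supplies a final run reading $w$ with $\beta_j$ never popped, so the invariant forces $q_j = \{\psi \in \clos{\varphi} \mid (w,1)\models\psi\}$; by the suffix-determinacy lemma this set is a function of $w$ alone, so $q_1 = q_2$, which is exactly Def.~\ref{def:separated}. This deduction is purely local to the atoms and needs no trimming. For \emph{completeness} I would exhibit, for an arbitrary $w \in \Sigma^\omega$, the canonical run that at each position uses the atom equal to $\{\psi \mid (w,i)\models\psi\}$: consecutive such atoms are always joined by a transition of $\mathcal{B}$, and the B\"uchi condition is met because an until formula sits in one of these atoms iff it genuinely holds there, so every asserted eventuality is fulfilled. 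This run exists and is final for every $w$, giving completeness; should the base construction fail to be locally complete, adding exactly these canonical transitions completes it while preserving both $L_\mathcal{B} = L_\varphi$ and separation.

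I expect the main obstacle to be pinning down the \citet{ChiariMPP23} invariant in the strong form used above, namely that a final run forces the state to be \emph{equal} to (not merely a consistent superset of) the true-formula set, and checking that every auxiliary component the construction carries to simulate the stack-based semantics becomes suffix-determined once past operators are dropped. Verifying this faithfully, together with the forward-locality claim for $\chain$, is where the real work lies; once they are in place, the separation and completeness arguments are short.
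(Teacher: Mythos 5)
There is a genuine gap, and it sits exactly where you defer ``the real work.'' Your plan rests on the premise that the states of the \citet{ChiariMPP23} construction are atoms, so that along a final run the state equals the truth set $\{\psi \in \clos{\varphi} \mid (w,1) \models \psi\}$, a function of the suffix $w$ alone. But in that construction a state is a triple $(\Phi_c, \Phi_p, \Phi_s)$: an atom $\Phi_c$ \emph{plus} a pending part $\Phi_p$ (temporal obligations such as $\zeta_L$ and $\lcnext{\prf} \psi$) and an in-stack part $\Phi_s$. Your suffix-determinacy lemma is correct, and it is indeed what the paper uses (Corollary~\ref{cor:current-determined}) to conclude $\Phi^1_c = \Phi^2_c$ --- but that is a short step, and it pins down only $\Phi_c$. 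For the other two components your invariant is simply false: $\zeta_L \in \Phi_p$ holds iff the first move is a push, i.e., iff $\symb{\beta} \lessdot a$ where $a$ labels the first position of $w$, so the state provably depends on the stack-top label and not on the suffix alone (harmless for Def.~\ref{def:separated}, which fixes $\symb{\beta_1} = \symb{\beta_2}$, but fatal to the invariant as you state it). Likewise the obligations $\lcnext{\prf}\psi \in \Phi_p$ are bookkeeping about chains whose left context lies in the past, not truth values at the current position; they cannot be dispensed with, because verifying a chain-next formula requires matching the chain's right context through pop moves. Showing that on final runs at pending semi-configurations these components are nonetheless determined by the pair $(w, \symb{\beta})$ --- with $\Phi_s = \emptyset$ forced by Corollary~\ref{cor:no-in-stack} --- is the actual content of the paper's proof (Theorem~\ref{thm:potlf-separated-appendix}), and it is carried out operator by operator for $\zeta_L$, $\lcnext{\doteq}\psi$, $\lcnext{\lessdot}\psi$, $\lcnext{\gtrdot}\psi$, using the $\delta$-rules together with the B\"uchi acceptance sets.

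A second, concrete failure point: your premise that the construction of \citet{ChiariMPP23} ``already yields'' a separated automaton on \acs{POTLF} is false. The paper must strengthen the construction with an additional rule (rule~\ref{rule:lcnext-lessdot-shift}, forbidding $\lcnext{\lessdot}\psi$ in the pending part at shift moves), explicitly flagged in a footnote as absent from \cite{ChiariMPP23} and \emph{required} for backward determinism; without it, two states differing only in $\Phi_p$ can both admit final runs on the same word from stack symbols with equal labels, violating separation. So the separation argument cannot be ``purely local to the atoms'': it needs both a modification of the automaton and the component-wise analysis above. Your completeness argument, by contrast, is essentially the paper's (initial states are not required to contain $\varphi$, so every word has a final canonical run), and the exponential size bound is unproblematic.
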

\begin{proof}[Proof (sketch)]
The \ac{OPBA} $\mathcal{B}$ is such that its states contain the set of subformulas of $\varphi$
that hold in the word position it is about to read, plus some bookkeeping information.
We prove by induction on the syntactic structure of $\varphi$ that,
given an edge $(q', b', \ell') \suppedge (q, b, \ell)$ of the support graph,
whether a subformula appears in $q'$ is fully determined by formulas
and bookkeeping information in $q$, together with $b$.
We leave the full proof to Appendix~\ref{sec:opba-construction-proof}.
\end{proof}

The same construction yields non-separated \ac{OPBA} for other \acs{POTL} operators.
We do not know if an alternative construction that yields separated \ac{OPBA} exists,
but we note that it would have to differ significantly from Vardi-Wolper-style tableaux ones
such as those for \acs{POTL}~\cite{ChiariMPP23}, CaRet~\cite{AlurEM04}, and \acs{NWTL}~\cite{lmcs/AlurABEIL08}.

\citet{CouvreurSS03} give a construction that translates purely future \acs{LTL} formulas into separated B\"uchi automata.
It can be easily adapted to \acp{OPBA}:
the sets of all, initial, and final states are the same as for B\"uchi automata,
and we add all push, shift, and pop transitions allowed by the \ac{OPM} according to the rules given in \cite{CouvreurSS03}.
With an argument similar to the proof of Theorem~\ref{thm:potlf-separated-opba},
we can show that \acp{OPBA} obtained in this way are separated.
\begin{theorem}
\label{thm:ltl-separated-opba}
Given a future \acs{LTL} formula $\varphi$, a complete separated \ac{OPBA} $\mathcal{B}$
of size exponential in the length of $\varphi$ can be built such that $L_\mathcal{B} = L_\varphi$.
\end{theorem}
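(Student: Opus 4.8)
The plan is to start from the construction of~\citet{CouvreurSS03}, which turns a purely future \acs{LTL} formula $\varphi$ into a separated B\"uchi automaton $\mathcal{A}_\varphi$ of size exponential in $|\varphi|$, whose states are the fully evaluated, locally consistent subsets of the closure of $\varphi$. The defining invariant of that construction is that, along any run on a word $w$, the state visited just before reading the $i$-th letter is exactly the set of closure subformulas satisfied by the suffix $w_{\geq i}$; since \acs{LTL} is future-only, this set depends on the suffix alone, which is precisely what makes $\mathcal{A}_\varphi$ backward deterministic. I would then lift $\mathcal{A}_\varphi$ to an \ac{OPBA} $\mathcal{B}$ exactly as sketched in the excerpt: keep the same state set, initial and final states, put every B\"uchi transition $(q,a,q')$ into both $\delta_\mathit{push}$ and $\delta_\mathit{shift}$ (so that it fires with whichever move the \ac{OPM} prescribes for the current top of stack and letter $a$), and let pop moves act as the identity on the state component, i.e.\ take $\delta_\mathit{pop} = \{(q,r,q) \mid q,r \in Q\}$.

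The correctness argument $L_\mathcal{B} = L_\varphi$ rests on one structural fact about \acp{OPBA}: they read the input strictly left to right, since push and shift consume the leftmost remaining letter while pop moves are $\varepsilon$-moves that strictly decrease the stack height and hence cannot be taken infinitely often without an intervening read. Consequently, projecting any run of $\mathcal{B}$ onto its read-moves recovers a run of $\mathcal{A}_\varphi$ on the same word, and conversely the canonical run of $\mathcal{A}_\varphi$ lifts to a run of $\mathcal{B}$ by inserting identity pops. Because the identity pops never change the state, the sequence of states visited just before each read-move coincides with the state sequence of the B\"uchi run, so the B\"uchi acceptance condition is preserved in both directions and the two accepted languages coincide. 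Completeness is inherited for free: every $w$ admits the canonical run starting from the unique state matching the obligations of $(w,0)$.

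Separation is the real content, and I would prove it as the analogue of Theorem~\ref{thm:potlf-separated-opba}. The goal is the invariant that, in any final run of $\mathcal{B}$, the state immediately before reading a position $i$ equals the set of closure subformulas true at $(w,i)$. Read-moves discharge this by applying the local-consistency and next rules of the \acs{LTL} tableau, which are correct because consecutive read-moves visit consecutive word positions; the identity pops carry the obligation set unchanged across the $\varepsilon$-gaps between reads. Once the invariant holds, separation is immediate: if $w \in L_\mathcal{B}(q_1, \beta_1\bot)$ and $w \in L_\mathcal{B}(q_2, \beta_2\bot)$ with $\symb{\beta_1} = \symb{\beta_2}$, then both $q_1$ and $q_2$ are the ``before position $0$'' state of a valid run on $w$, hence both equal the canonical obligation set of $(w,0)$, forcing $q_1 = q_2$ as required by Definition~\ref{def:separated}; equivalently, one may phrase this via backward determinism of the support graph (Definition~\ref{def:backward-det}), mirroring Proposition~\ref{prop:sep-backward}.

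The hard part will be exactly the reconciliation of the stack-based, nonlinear control flow of the \ac{OPBA}---push/shift/pop and, in the support graph, support edges that skip entire chain bodies---with the linear semantics of \acs{LTL}. Two observations defuse it: positions are nevertheless consumed in strict linear order, and pops are the identity, so no obligation is lost or spuriously generated across a support. This makes the propagation of obligations along a support edge $(q',b',\ell') \suppedge (q,b,\ell)$ determined by $q$ and $b$ alone, just as in the \acs{POTLF} proof, but with the crucial simplification that \acs{LTL} has no chain-following modality (no $\lcnext{t}$ and no chain until) whose obligations would have to be threaded through the skipped chain body; only linear-next obligations propagate, and those follow the order the \ac{OPBA} already respects.
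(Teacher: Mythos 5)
Your proposal is correct and follows essentially the same route as the paper: lift the separated B\"uchi automaton of \citet{CouvreurSS03} to an \ac{OPBA} by reusing its states, initial and final sets, letting push/shift moves carry the B\"uchi transitions as scheduled by the \ac{OPM}, and arguing separation as in Theorem~\ref{thm:potlf-separated-opba}. The paper leaves this as a brief sketch, and your instantiation of the pop transitions as state-preserving ($\delta_\mathit{pop}=\{(q,r,q)\mid q,r\in Q\}$) is exactly the right reading of its ``pop transitions allowed by the \ac{OPM}'', since pops read no input and \acs{LTL} states carry no stack-dependent obligations.
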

Theorems \ref{thm:potlf-separated-opba} and \ref{thm:ltl-separated-opba}
allow us to use the algorithms for model checking \acp{pOPA} against \acp{OPBA} specifications
that we give in Sections \ref{sec:qualitative-mc} and \ref{sec:quantitative-mc}
to check \ac{POTLF} and \ac{LTL} properties.

In non-probabilistic model checking, \acp{OPBA} are checked for emptiness by simply looking for accepting \acp{SCC} \cite{ChiariMPP23}.
Checking them against \acp{pOPA}  by exploiting separation requires considerably different algorithms,
which we describe in the rest of this section.

\subsection{The Support Chain}
\label{sec:support-chain}

Next, we introduce the \emph{support chain},
a finite Markov chain that describes the behavior of non-terminating \ac{pOPA} runs
while preserving their probability distribution.
This concept is analogous to the \emph{summary Markov chain} of \cite{YannakakisE05},
as well as the \emph{step chain} in \cite{WinklerGK22},
but first appeared in~\cite{EsparzaKM04}.

The support chain of a \acp{pOPA} is similar to the support graph of a \acp{OPBA}
(Def.~\ref{def:support-graph-opba}), but augmented with probabilities.
First, we reintroduce the concept of support graph, adapting it to \acp{pOPA}:
\begin{definition}
The \emph{support graph} of a \ac{pOPA}
$\mathcal{A} = (\Sigma, \allowbreak M, \allowbreak Q, \allowbreak u_0,
\allowbreak \delta, \allowbreak \Lambda)$ is a pair
$(\mathcal{C}, \mathcal{E})$ where $\mathcal{C} \subseteq Q \times \Gamma_\bot$
is the set of \emph{semi-configurations} and
$\mathcal{E}$ is the following triple of subsets of $\mathcal{C}^2$:
\begin{itemize}
\item
\(
E_\mathit{push} =
\{
  ((u, \alpha), (v, [\Lambda(u), u])) \in \mathcal{C}^2 \mid
  \symb{\alpha} \lessdot \Lambda(u)
  \land \delta_\mathit{push}(u)(v) > 0
\}
\)
\item
\(
E_\mathit{shift} =
\{
  ((u, [a, s]), (v, [\Lambda(u), s])) \in \mathcal{C}^2 \mid
  a \doteq \Lambda(u)
  \land \delta_\mathit{shift}(u)(v) > 0
\}
\)
\item
\(
E_\mathit{supp} =
\{
  ((u, \alpha), (v, \alpha)) \in \mathcal{C}^2 \mid
  \symb{\alpha} \lessdot \Lambda(u)
  \text{ and $\mathcal{A}$ has a support $u \asupp{}{} v$}
\}
\)
\end{itemize}
\end{definition}
We write $c \shortstackrel[-.1ex]{\mathit{type}}{\longrightarrow} c'$
iff $(c, c') \in E_\mathit{type}$ where $c, c' \in \mathcal{C}$ and
$\mathit{type}$ is one of $\mathit{push}$, $\mathit{shift}$, or $\mathit{supp}$.
Note that $E_\mathit{shift}$ is always disjoint with $E_\mathit{push}$ and $E_\mathit{supp}$,
but $E_\mathit{push} \cap E_\mathit{supp}$ may contain edges of the form
$(u, [\Lambda(u), u]) \suppedge (v, [\Lambda(u), u])$.

A \ac{pOPA} semi-configuration $(u, \alpha)$ is \emph{pending}
if there is positive probability that $\alpha$ is never popped
after a run visits a configuration $(u, \alpha A)$ for some stack contents $A$.
We build the \emph{support chain} of $\mathcal{A}$ starting from its support graph,
by removing all vertices (and incident edges) that are not pending.
The probabilities of the remaining edges are then conditioned on the event that
a run of the \ac{pOPA} visiting them never pops the stack symbols
in the semi-configurations they link.
The probability that a semi-configuration $(s, \alpha)$ is pending is defined as
\[\nex{s, \alpha} = 1 - \sum_{v \in Q} \pvar{s}{\alpha}{v}.\]

\begin{definition}
The \emph{support chain} $M_\mathcal{A}$ is a Markov chain with state set
$Q_{M_\mathcal{A}} = \{c \in \mathcal{C} \mid \nex{c} > 0\}$,
initial state $(u_0, \bot)$, 
and transition relation $\delta_{M_\mathcal{A}}$ defined as:
\begin{itemize}
\item if $(u, \alpha) \sshift (v, \alpha')$, then
  $\delta_{M_\mathcal{A}}(u, \alpha)(v, \alpha') = \delta_{\mathit{shift}}(u)(v) \nex{v, \alpha'} / \nex{u, \alpha}$;
\item otherwise,
  \(
  \delta_{M_\mathcal{A}}(u, \alpha)(v, \alpha') =
    (P_\mathit{push} + P_\mathit{supp}) \nex{v, \alpha'} / \nex{u, \alpha}
  \)
  where
  \begin{itemize}
    \item $P_\mathit{push} = \delta_{\mathit{push}}(u)(v)$ if $(u, \alpha) \spush (v, \alpha')$,
          and $P_\mathit{push} = 0$ otherwise;
    \item $P_\mathit{supp} = \sum_{v' \in V} \delta_{\mathit{push}}(u)(v') \pvar{v'}{[\Lambda(u), u]}{v}$
          if $\alpha = \alpha'$ and $(u, \alpha) \ssupp (v, \alpha)$,\\
          and $P_\mathit{supp} = 0$ otherwise;
  \end{itemize}
  where $V = \{v' \in Q \mid (u, \alpha) \spush (v', [\Lambda(u), u])\}$.
\end{itemize}
\end{definition}

Given a run $\rho = \rho_0 \rho_1 \dots$ of $\mathcal{A}$,
we define $\sigma(\rho) = \sigma_0 \sigma_1 \dots$
by replacing configurations with semi-configurations and short-cutting supports,
in the same way as for \ac{OPBA} runs.
Formally, let $\rho_i = (u_i, A_i)$ for all $i \geq 0$,
and let $k$ be the lowest index such that $\tp(A_k)$ is never popped
(i.e., $(u_k, \tp(A_k))$ is pending).
We have $\sigma_0 = (q_k, \tp(A_k))$ and, inductively,
if $\sigma$ maps $\rho_0 \dots \rho_i$ to $\sigma_0 \dots \sigma_i$,
then $\sigma_{i+1}$ can be determined as follows:
if $\symb{A_i} \lessdot \Lambda(u_i)$ and the pushed stack symbol $\tp(A_{i+1}) = [\Lambda(u_i), u_i]$
is never popped in $\rho$, or if $\symb{A_i} \doteq \Lambda(u_i)$,
then $\sigma_{i+1} = (u_{i+1}, \tp(A_{i+1}))$;
if $\symb{A_i} \lessdot \Lambda(u_i)$ and $\rho$ contains a support $u_i \ssupp u_{i+j}$
then $\sigma_{i+1} = (u_{i+j}, \tp(A_{i+j}))$.

While $\rho$ is a path in the infinite Markov chain $\Delta(\mathcal{A})$,
$\sigma(\rho)$ is a path in the \emph{finite} Markov chain $M_\mathcal{A}$.
The transition probabilities of $M_\mathcal{A}$ are defined so that the probability
distribution of runs of $\mathcal{A}$ is preserved under $\sigma$:
in Theorem~\ref{thm:support-chain} we prove that runs of $\mathcal{A}$
whose $\sigma$-image is not a path in $M_\mathcal{A}$ form a set of measure 0,
while all other sets of runs are mapped by $\sigma$ to sets of the same measure in $M_\mathcal{A}$.
This allows us to analyze the limit behavior of \acp{pOPA}
by using techniques for Markov chains on $M_\mathcal{A}$.
In particular, \acp{BSCC} of $M_\mathcal{A}$ determine the set of states visited infinitely
often by runs of $\mathcal{A}$ (and their labelings).

\begin{theorem}
\label{thm:support-chain}
Let $(\Omega, \mathcal{F}, P)$ be the probability space of $\Delta(\mathcal{A})$.
$M_\mathcal{A}$ is a Markov chain with probability space
$(\Omega', \mathcal{F}', P')$ such that
$P(\Omega \setminus \sigma^{-1}(\Omega')) = 0$, and
given a set $F' \in \mathcal{F}'$, we have $F = \sigma^{-1}(F') \in \mathcal{F}$ and $P'(F') = P(F)$.
\end{theorem}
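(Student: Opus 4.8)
The plan is to establish the measure-preservation claim by defining a coupling between cylinder sets of $\Delta(\mathcal{A})$ and cylinder sets of $M_\mathcal{A}$, and then invoking a standard extension argument from the generating semiring of cylinders to the full $\sigma$-algebras. First I would make precise the map $\sigma$ as a partial function on $\mathit{Runs}(\mathcal{A})$: it is defined exactly on runs $\rho$ in which some stack symbol is eventually never popped, i.e. runs that reach a pending semi-configuration and hence have a well-defined image $\sigma_0$. The complement—runs in which every stack symbol is eventually popped—must be shown to have measure zero; this is the assertion $P(\Omega \setminus \sigma^{-1}(\Omega')) = 0$. I would prove this by relating it to the termination probabilities $\pvar{u}{\alpha}{v}$: a run in which the symbol pushed at step $k$ is always eventually popped corresponds to the event that the \ac{pOPA}, started fresh, almost surely terminates from every reachable semi-configuration, which by the definition $\nex{s,\alpha} = 1 - \sum_{v} \pvar{s}{\alpha}{v}$ is precisely the complement of the pending event. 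A König's-lemma / Borel--Cantelli style argument on the infinite stack then shows that with probability one some symbol survives forever, so $\sigma$ is almost-everywhere defined.

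The core of the argument is a one-step probability identity. Given a pending semi-configuration $(u,\alpha)$ and a successor $(v,\alpha')$ in $M_\mathcal{A}$, I would show that the conditional probability, given that $(u,\alpha)$ is pending, that the next semi-configuration in $\sigma(\rho)$ is $(v,\alpha')$ equals exactly $\delta_{M_\mathcal{A}}(u,\alpha)(v,\alpha')$. This is where the definition of the support chain does its work: the factor $\nex{v,\alpha'}/\nex{u,\alpha}$ is the Bayesian conditioning on the event that $\alpha'$ (equivalently $\alpha$, in the shift and support cases) survives, while the numerator $P_\mathit{push} + P_\mathit{supp}$ aggregates the two ways $\sigma$ can advance: a single push move whose pushed symbol is never popped ($P_\mathit{push} = \delta_\mathit{push}(u)(v)$), and a push-followed-by-a-full-support that returns to the same stack level, weighted by the termination probability $\pvar{v'}{[\Lambda(u),u]}{v}$ summed over first-push targets $v'$ (this is $P_\mathit{supp}$). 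The shift case is separate because a shift cannot open a support. I would verify that these probabilities, conditioned on pendency, sum to one over all successors $(v,\alpha')$—this normalization is equivalent to the statement that $\nex{u,\alpha}$ equals the total outgoing mass, and follows from the fixpoint equation $\mathbf{v} = f(\mathbf{v})$ that defines the termination probabilities.

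With the one-step identity in hand, I would extend to arbitrary finite cylinders by induction on cylinder length, using the Markov (memorylessness) property of $\Delta(\mathcal{A})$ and the fact that the conditioning event "this symbol is never popped" depends only on the future, so it factorizes along the path. This gives $P'(\mathrm{Cyl}(\sigma_0\dots\sigma_n)) = P(\sigma^{-1}(\mathrm{Cyl}(\sigma_0\dots\sigma_n)))$ for every finite prefix of $M_\mathcal{A}$. Since cylinders form a generating semiring and both measures are probability measures agreeing on it, Carathéodory's extension/uniqueness theorem yields $P'(F') = P(\sigma^{-1}(F'))$ for all $F' \in \mathcal{F}'$, together with measurability of $\sigma^{-1}(F')$. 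The main obstacle I anticipate is the $P_\mathit{supp}$ bookkeeping: correctly arguing that a step of $\sigma$ that short-cuts a closed support has conditional probability given by the convolution $\sum_{v'} \delta_\mathit{push}(u)(v')\,\pvar{v'}{[\Lambda(u),u]}{v}$ requires carefully decomposing the run into the push, the terminating inner support, and the pop, and checking that the pop-condition \eqref{pop-cond} guarantees the resulting semi-configuration $(v,\alpha)$ is consistent—so that the support-chain edge $(u,\alpha)\ssupp(v,\alpha)$ is exactly the event being measured, with no double counting against the overlapping push edge.
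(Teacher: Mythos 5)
Your second and third paragraphs reproduce, essentially verbatim, the strategy of the paper's own proof (Lemmas~\ref{lemma:support-is-mc}--\ref{lemma:support-chain-prob}): the paper proves the same one-step identity, $P(J_{1,c_{n+1}} \mid [c_n \uparrow]) = P(J_{1,c_{n+1}})\,\nex{c_{n+1}}/\nex{c_n}$, via Bayes' theorem and the Markov property of $\Delta(\mathcal{A})$, computes $P(J_{1,c_{n+1}})$ case-by-case as $\delta_\mathit{shift}(u_n)(u_{n+1})$, $\delta_\mathit{push}(u_n)(u_{n+1})$, or the convolution $\sum_{v'}\delta_\mathit{push}(u_n)(v')\pvar{v'}{[\Lambda(u_n),u_n]}{u_{n+1}}$, verifies normalization from the termination-probability fixpoint (Lemma~\ref{lemma:support-is-mc}), and then inducts on cylinder length; your handling of overlapping push/support edges as disjoint events whose probabilities are summed is also the paper's. (A small inaccuracy: condition~\eqref{pop-cond} plays no role in this argument---it is needed for Proposition~\ref{prop:popa-opba}, not for the measure transfer.)

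The genuine gap is in your first paragraph, i.e., the claim $P(\Omega \setminus \sigma^{-1}(\Omega')) = 0$. You characterize the exceptional set as ``runs in which every stack symbol is eventually popped'' and propose to prove that almost surely some symbol survives forever. Both halves of this are wrong. First, $\sigma$ is total, not partial: $\bot$ is never popped, so $\sigma_0 = (u_0,\bot)$ always exists, and a run in which every non-$\bot$ symbol is eventually popped has a perfectly well-defined $\sigma$-image, namely a sequence of semi-configurations with stack symbol $\bot$ linked by support edges. Such runs can even have probability one: a \ac{pOPA} that deterministically pushes one symbol and immediately pops it, forever (e.g., with all states labeled $\lstm$ under $M_\lcall$), has \emph{all} its runs of this form, so the lemma you propose---that almost surely some non-bottom symbol survives---is false in general, and no K\"onig/Borel--Cantelli argument can establish it. Second, the actual exceptional set is different: it consists of runs $\rho$ whose image $\sigma(\rho)$ visits a semi-configuration $c$ whose stack symbol happens never to be popped \emph{in that particular run}, but for which the probability of this event is zero, $\nex{c} = 0$; such a $c$ is not a state of $M_\mathcal{A}$, so $\sigma(\rho) \notin \Omega'$. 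The paper's Lemma~\ref{lemma:support-chain-zero} handles this by decomposing the exceptional set according to the shortest prefix $c_0 \dots c_i c_{i+1}$ of $\sigma(\rho)$ with $\nex{c_{i+1}} = 0$: each such event has probability bounded by $\nex{c_{i+1}} = 0$, and there are only countably many prefixes. Without this (or an equivalent) argument, the base case of your cylinder induction---$P(\sigma^{-1}(\Omega')) = 1$, needed before Carath\'eodory extension applies---is unsupported, so the gap propagates into the main proof.
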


We leave the full proof to Appendix~\ref{sec:support-chain-proof}.

\begin{example}[Running example, cont. \ref{running-ex:5}]
\label{running-ex:6}
Fig.~\ref{fig:supp-chain} (left) shows the support chain of $\mathcal{A}_\lcall$.
It has two \acp{BSCC}:
$\{c_2\}$ represents runs in which $\mathcal{A}_\lcall$ cycles between $u_1$ and $u_2$ forever
(such as the one in Fig~\ref{fig:popa-example}),
and $\{c_4\}$ runs in which stack symbol $[\lcall, q_0]$ is popped,
and the \ac{pOPA} ends up in $u_3$.
We have $\nex{c_0} = \nex{c_3} = \nex{c_4} = 1$, and
$\nex{c_1} = \nex{c_2} = \nicefrac{1}{2}$.
\end{example}

\begin{figure}[bt]
\centering
\begin{tikzpicture}
  [state/.style={draw, ellipse, inner sep=1pt}, >=latex, font=\scriptsize,
   support/.style={decoration={snake, amplitude=1pt, segment length=3pt}, decorate}
  ]
\node[state, initial by arrow, initial text=, label={above:$c_0$}] (n0) {$u_0, \bot$};
\node[state] (n1) [right=15pt of n0, label={above:$c_1$}] {$u_1, [\lcall, u_0]$};
\node[state] (n2) [below of=n1, label={below:$c_2$}] {$u_1, [\lcall, u_1]$};
\node[state] (n7) [below of=n0, label={below:$c_3$}] {$u_3, \bot$};
\node[state] (n8) [left=15pt of n7, label={below:$c_4$}] {$u_3, [\lcall, u_3]$};
\path[->] (n0) edge[above] node{$\nicefrac{1}{2}$} (n1)
          (n1) edge[support, out=10, in=350, loop, left] node[xshift=1pt]{$\nicefrac{1}{3}$} (n1)
          (n1) edge[left] node{$\nicefrac{2}{3}$} (n2)
          (n2) edge[support, out=10, in=350, loop, left] node{$1$} (n2)
          (n2) edge[out=10, in=350, loop] (n2)
          (n0) edge[support, right] node{$\nicefrac{1}{2}$} (n7)
          (n7) edge[above] node{$1$} (n8)
          (n8) edge[loop above, above] node{$1$} (n8);
\end{tikzpicture}%
\begin{tikzpicture}
  [state/.style={draw, ellipse, inner sep=1pt}, >=latex, font=\scriptsize,
   final/.style={very thick}
  ]
\node[state, initial by arrow, initial text=] (n0) {$c_0,q_0$};
\node[state] (n6) [right=15pt of n0, final] {$c_1,q_1$};
\node[state] (n7) [below of=n6, final] {$c_2,q_1$};
\node[state] (n1) [right=15pt of n6] {$c_1,q_0$};
\node[state] (n2) [below of=n1] {$c_2,q_0$};
\node[state] (n3) [below of=n2, final] {$c_2,q_3$};
\node[state] (n4) [below of=n0, final] {$c_3,q_3$};
\node[state] (n5) [below of=n4, final] {$c_4,q_3$};
\path[->] (n0) edge[out=20, in=160] (n1)
          (n0) edge (n6)
          (n6) edge[out=20, in=165, below, pos=.4] node[yshift=1pt]{$e_1$} (n1)
          (n1) edge[final, out=195, in=340, below, pos=.9] node[yshift=1pt]{$e_2$} (n6)
          (n6) edge[out=240, in=215, loop, left] node{$e_1$} (n6)
          (n1) edge[out=10, in=350, loop, final, above] node{$e_2$} (n1)
          (n1) edge[right] node{$e_3$} (n2)
          (n1) edge[right] node{$e_3$} (n7)
          (n2) edge[out=10, in=350, loop, final, above] node{$e_2,e_3$} (n2)
          (n2) edge[final, out=195, in=340, below, pos=.1] node[yshift=1pt]{$e_2,e_3$} (n7)
          (n7) edge[out=20, in=165, below, pos=.4] node[yshift=1pt]{$e_1$} (n2)
          (n7) edge[out=240, in=215, loop, left] node{$e_1$} (n7)
          (n7) edge[below] node{$e_1$} (n3)
          (n2) edge[final, right] node{$e_2$} (n3)
          (n3) edge[out=10, in=350, loop, above] node{$e_3$} (n3)
          (n0) edge[final] (n4)
          (n4) edge (n5)
          (n5) edge[out=170, in=190, loop] (n5);
\end{tikzpicture}
\caption{Left: support chain of \ac{pOPA} $\mathcal{A}_\lcall$ (Fig.~\ref{fig:popa-example}).
  Push and support edges are depicted resp.\ with solid and snake-shaped lines
  (the self-loop on $c_2$ is both a push and a support edge).
  Right: graph $G$ obtained from $\mathcal{A}_\lcall$ and $\mathcal{B}_\lcall$ (Fig.~\ref{fig:opba-example}).
  Final nodes and edges are drawn with a thick line.
}
\label{fig:supp-chain}
\end{figure}

\begin{remark}
\label{rem:deterministic-opba}
Theorem~\ref{thm:support-chain} allows us to apply some standard techniques
for the analysis of flat Markov chains to \acp{pOPA}.
For instance, let $\mathcal{D}$ be a \emph{deterministic} \ac{OPBA},
i.e., such that its transition relations are functions
$\delta_\mathit{push}, \delta_\mathit{shift} : Q \times \Sigma \rightarrow Q$ and
$\delta_\mathit{pop} : Q \times Q \rightarrow Q$.
The word $\Lambda_\varepsilon(\rho)$ labeling each run $\rho$ of $\mathcal{A}$
induces exactly one run $\tau$ in $\mathcal{D}$.
Moreover, we can define a bijection between $\sigma$-images of runs of $\mathcal{A}$ and $\mathcal{D}$.
This suggests a straightforward algorithm for checking \acp{pOPA} against
specifications given as deterministic \acp{OPBA},
analogous to the one for checking Markov chains against deterministic B\"uchi automata
(see e.g.\ \cite{BaierK08}),
consisting of building a synchronized product between $\mathcal{A}$'s support chain
and $\mathcal{D}$'s support graph.
Since complexity is dominated by building the support chain, we can state
\begin{theorem}
Given a deterministic \ac{OPBA} $\mathcal{D}$ and a \ac{pOPA} $\mathcal{A}$,
the problem of deciding whether
$P(\Lambda_\varepsilon(\mathit{Runs}(\mathcal{A})) \cap L_\mathcal{D}) \geq \varrho$
lies in \textsc{pspace}.
\end{theorem}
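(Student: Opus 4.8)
The plan is to mirror the classical algorithm for checking a flat Markov chain against a deterministic Büchi automaton \cite{BaierK08}, lifting it to the pushdown setting through the support chain and support graph abstractions. First I would invoke Theorem~\ref{thm:support-chain} to replace the infinite Markov chain $\Delta(\mathcal{A})$ by the finite support chain $M_\mathcal{A}$: since $\sigma$ preserves the probability distribution up to a null set, it suffices to reason about $\sigma$-images of runs, which are paths in $M_\mathcal{A}$. In parallel I would build the support graph of $\mathcal{D}$ as in Def.~\ref{def:support-graph-opba}. Because $\mathcal{A}$ and $\mathcal{D}$ share the same \ac{OPM}, the look-ahead used by $\mathcal{D}$ and the state labels used by $\mathcal{A}$ induce the very same sequence of push/shift/pop move types along $\Lambda_\varepsilon(\rho)$, so their stacks stay synchronized; determinism of $\mathcal{D}$ then guarantees that $\Lambda_\varepsilon(\rho)$ drives $\mathcal{D}$ through a \emph{single} run, whose $\sigma$-image is a path in $\mathcal{D}$'s support graph.

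The core step is to form a synchronized product $G$ of $M_\mathcal{A}$ and the support graph of $\mathcal{D}$, whose vertices pair a semi-configuration $(u, \alpha)$ of $M_\mathcal{A}$ with a compatible state of $\mathcal{D}$. Push and shift edges are matched directly, as both models advance on one symbol $\Lambda(u)$ and the \ac{OPM} fixes the move type. A support edge $(u, \alpha) \ssupp (v, \alpha)$ of $M_\mathcal{A}$ is matched with the support edges of $\mathcal{D}$'s support graph reading a compatible chain body; here determinism of $\mathcal{D}$ is exactly what keeps $G$ a genuine Markov chain, since all randomness resides in $M_\mathcal{A}$ and the $\mathcal{D}$-component is a deterministic function of the $\mathcal{A}$-labeling. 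I would set the edge probabilities of $G$ to those of $M_\mathcal{A}$, so that projecting $G$-paths onto their first component recovers $M_\mathcal{A}$; the bijection between $\sigma$-images of runs of $\mathcal{A}$ and runs of $\mathcal{D}$ makes this projection measure-preserving, and hence $P(\Lambda_\varepsilon(\mathit{Runs}(\mathcal{A})) \cap L_\mathcal{D})$ equals the probability in $G$ of the set of paths whose $\mathcal{D}$-component forms an accepting run.

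Next I would translate the Büchi acceptance condition into reachability on $G$: following Def.~\ref{def:trim-support-graph}, I would mark a \ac{BSCC} of $G$ as \emph{accepting} when it contains a node whose $\mathcal{D}$-state is final, or a support edge whose underlying support in $\mathcal{D}$ traverses a final state. Standard Markov chain theory then yields that the acceptance probability equals the probability of reaching an accepting \ac{BSCC}, so the qualitative question ($\varrho = 1$) reduces to checking that every \ac{BSCC} reachable from the initial vertex is accepting, a purely graph-theoretic test, while the quantitative question reduces to a reachability probability in the finite Markov chain $G$. For the complexity bound, $G$ has size polynomial in $|\mathcal{A}|$ and $|\mathcal{D}|$ (determinism of $\mathcal{D}$ incurs no blow-up), and its construction and analysis are dominated by the support chain: deciding which semi-configurations are pending and computing the edge probabilities requires the termination probabilities $\pvar{u}{\alpha}{v}$ and the values $\nex{\cdot}$, which are the least solution of $\mathbf{v} = f(\mathbf{v})$ and are encodable in \ac{ETR}. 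The whole decision, whether the reachability probability in $G$ is $\geq \varrho$, can be written as a single polynomial-size \ac{ETR} sentence constraining the termination-probability variables to be the least solution and the $G$-reachability variables to satisfy their defining linear system; since \ac{ETR} is decidable in \textsc{pspace} \cite{Canny88,Renegar92}, the problem lies in \textsc{pspace}.

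I expect the main obstacle to be the faithful matching of support edges. A single support edge $(u, \alpha) \ssupp (v, \alpha)$ of $M_\mathcal{A}$ aggregates supports reading \emph{many} different chain bodies, which $\mathcal{D}$ may consume into different states, so the product must refine this edge according to $\mathcal{D}$'s target state while correctly redistributing the aggregated support probability; and the argument that final states hidden inside closed supports of $\mathcal{D}$ are still witnessed by the \emph{accepting} \ac{BSCC} condition (rather than only those appearing at product nodes) is precisely the delicate point that Def.~\ref{def:trim-support-graph} is designed to address.
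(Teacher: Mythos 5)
Your proposal is correct and takes essentially the same route as the paper's own argument in Remark~\ref{rem:deterministic-opba}: pass to the finite support chain via Theorem~\ref{thm:support-chain}, build the synchronized product of $M_\mathcal{A}$ with $\mathcal{D}$'s support graph, decide acceptance by \ac{BSCC} analysis, and discharge the numerical part (termination probabilities, pending semi-configurations, reachability) through \ac{ETR}, which gives the \textsc{pspace} bound. The support-edge splitting you flag as the main obstacle is real but harmless here: because $\mathcal{D}$ is deterministic, the refined edge probabilities are exactly termination probabilities of the polynomial-size synchronized product $\hat{\mathcal{A}}$ of $\mathcal{A}$ and $\mathcal{D}$ (the same object the paper constructs for quantitative model checking in Section~\ref{sec:quantitative-mc}), so they remain \ac{ETR}-encodable and the complexity claim is unaffected.
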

The construction by~\citet{ChiariMPP23}, however, yields \acp{OPBA} that are not deterministic.
To perform probabilistic model checking of \ac{POTLF} specifications we must thus rely on separation.
\end{remark}

\subsection{Qualitative Model Checking}
\label{sec:qualitative-mc}

In this section, we present the qualitative model checking algorithm,
and then show in Section~\ref{sec:quantitative-mc} the necessary additional steps
needed to solve the quantitative problem.

\subsubsection{Algorithm}
\label{sec:qualitative-mc-construction}

We fix a \ac{pOPA}
$\mathcal{A} = (\Sigma, \allowbreak M, \allowbreak Q_\mathcal{A}, \allowbreak u_0,
\allowbreak \delta^\mathcal{A}, \allowbreak \Lambda)$
encoding the program, and a complete separated \ac{OPBA}
$\mathcal{B} = (\Sigma, \allowbreak M, \allowbreak Q_\mathcal{B}, \allowbreak I,
\allowbreak F, \allowbreak \delta^\mathcal{B})$
encoding the specification, on the same \acs{OP} alphabet $(\Sigma, M)$.
We first check whether
$P(\Lambda_\varepsilon(\mathit{Runs}(\mathcal{A})) \cap L_\mathcal{B}) = 1$.

\slimparagraph{Building graph $G$.}
The first step consists of building a graph $G$ that can be thought of a \emph{synchronized product}
between the support chain of $\mathcal{A}$ and the support graph of $\mathcal{B}$.

Let $\rho = (u_0, A_0) (u_1, A_1) \dots$ be a suffix of a run of $\mathcal{A}$,
and $\sigma(\rho) = (u_{i_0}, \alpha_{i_0}) \allowbreak (u_{i_1}, \alpha_{i_1}) \allowbreak \dots$
be a path in $\mathcal{A}$'s support graph, where $i_0 = 0$,
all $i_k$ for $k > 0$ are increasing indices of positions in $\rho$,
and each $\alpha_{i_h}$, $h \geq 0$, is the topmost symbol of $A_{i_h}$.
Consider a run \(\tau = \config{x_0x_1\dots}{p_0}{B_0} \config{x_1x_2\dots}{p_1}{B_1} \dots\)
of $\mathcal{B}$ on $\Lambda(\rho) = x_0x_1\dots$.
Since $\mathcal{A}$ and $\mathcal{B}$ share the same OP alphabet,
$\rho$ and $\tau$ are synchronized, i.e., they both perform the same kind of move
(push, shift, or pop) at the same time, and for each $i > 0$,
if we let $A_i = \alpha_0 \alpha_1 \dots \alpha_{n_i} \bot$ and $B_i = \beta_0 \beta_1 \dots \beta_{n_i} \bot$
we have $\symb{\alpha_j} = \symb{\beta_j}$ for all $0 \leq j \leq n_i$.
Thus, $\sigma(\rho)$ and $\sigma(\tau)$ are such that edges between
each pair of consecutive semi-configurations are of the same kind (push, shift or support).

Let $\sigma(\tau) = (p_{i_0}, b_{i_0}, \ell_{i_0}) (p_{i_1}, b_{i_1}, \ell_{i_1}) \dots$,
where $b_{i_h} = \symb{B_{i_h}}$ and $\ell_{i_h}$ is the look-ahead for all $h \geq 0$.
Suppose we know all of $\sigma(\tau)$ except $d_{i_0} = (p_{i_0}, b_{i_0}, \ell_{i_0})$.
Clearly we have $b_{i_0} = \symb{\alpha_{i_0}}$ and $\ell_{i_0} = \Lambda(u_{i_0})$.
Moreover, by Prop.~\ref{prop:sep-backward}
the support graph of $\mathcal{B}$ is backward deterministic,
and we can determine $p_{i_0}$ from $b_{i_0}$, $\ell_{i_0}$,
$(p_{i_1}, b_{i_1}, \ell_{i_1})$ and,
if $(u_{i_0}, \alpha_{i_0})$ and $(u_{i_1}, \alpha_{i_1})$ are linked by a support edge,
from the underlying chain, according to Def.~\ref{def:backward-det}.

Thus, we build a graph $G$ as the synchronized product of
the support chain $M_\mathcal{A}$ of $\mathcal{A}$ and the support graph of $\mathcal{B}$.
Let $(\mathcal{C}, \mathcal{E})$ be the support graph of $\mathcal{A}$.
Nodes of $G$ are pairs $((u, \alpha), p) \in \mathcal{C} \times Q_\mathcal{B}$.
For each node $((u_1, \alpha_1), p_1)$ of $G$,
we add an edge in $G$ between
$((u_0, \alpha_0), p_0)$ and $((u_1, \alpha_1), p_1)$
by choosing $p_0$ according to Def.~\ref{def:backward-det},
with $b = \Lambda(u_0)$ and $b' = \symb{\alpha_0}$.
While this is trivial for push and shift edges,
for support edges we can find $p_0$ by means of reachability techniques for \acp{OPBA}.
We build a synchronized product $\hat{\mathcal{A}}$ between $\mathcal{A}$ and $\mathcal{B}$,
which is an \ac{OPBA}
with states in $Q_\mathcal{A} \times Q_\mathcal{B}$, and transition relation $\hat{\delta}$ such that
\begin{itemize}
\item $((u, p), \Lambda(u), (u', p')) \in \hat{\delta}_\mathit{push}$ if
    $\delta^\mathcal{A}_\mathit{push}(u)(u') > 0$,
    and $(p, \Lambda(u), p') \in \delta^\mathcal{B}_\mathit{push}$;
\item $((u, p), \Lambda(u), (u', p')) \in \hat{\delta}_\mathit{shift}$ if
    $\delta^\mathcal{A}_\mathit{shift}(u)(u') > 0$,
    and $(p, \Lambda(u), p') \in \delta^\mathcal{B}_\mathit{shift}$;
\item $((u, p), (u'', p''), (u', p')) \in \hat{\delta}_\mathit{pop}$ if
    $\delta^\mathcal{A}_\mathit{pop}(u, u'')(u') > 0$,
    and $(p, p'', p') \in \delta^\mathcal{B}_\mathit{pop}$.
\end{itemize}
We then build the support graph of $\hat{\mathcal{A}}$ and,
iff it contains a support edge between $((u_0, p_0), \#, \Lambda(u_0))$
and $((u_1, p_1), \#, \ell)$ for any $\ell \in \Sigma$,
then we know that $((u_1, p_1), \bot)$ is reachable from $((u_0, p_0), \bot)$.
Thus, we add to $G$ an edge between $((u_0, \alpha_1), p_0)$ and $((u_1, \alpha_1), p_1)$.
If any final state of $\mathcal{B}$ is visited while checking the support,
the edge is marked as \emph{final}.

\slimparagraph{Analyzing graph $G$.}
After defining graph $G$, we give three properties that can be effectively checked
to determine whether the language generated by a \ac{BSCC} of $M_\mathcal{A}$
is accepted by $\mathcal{B}$.

We define the projection on the first component of a node $(c, p)$ of $G$ as $\pi_\mathcal{A}(c, p) = c$.
We extend $\pi_\mathcal{A}$ to sets of nodes and paths of $G$ in the obvious way.
The following properties hold:

\begin{lemma}
\label{lemma:paths-G-supp}
1) For every infinite path $t$ of $G$, $\pi_\mathcal{A}(t)$ is a path of $M_\mathcal{A}$.\\
2) For every infinite path $r$ of $M_\mathcal{A}$ there is a path $t$ of $G$ such that $r = \pi_\mathcal{A}(t)$.
\end{lemma}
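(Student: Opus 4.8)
The plan is to establish the two directions separately: statement~1 follows almost immediately from the way $G$ is built, whereas statement~2 requires the completeness of $\mathcal{B}$ together with the synchronization between $\mathcal{A}$ and $\mathcal{B}$. Throughout I would use that, by construction, an edge of $G$ is added into a target $((u_1,\alpha_1),p_1)$ only when $(u_0,\alpha_0) \to (u_1,\alpha_1)$ is already an edge of $M_\mathcal{A}$ (push, shift, or support), the component $p_0$ being then pinned down by backward determinism of $\mathcal{B}$'s support graph (Prop.~\ref{prop:sep-backward}).

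For statement~1, I would simply observe that every edge of $G$ projects under $\pi_\mathcal{A}$ onto an edge of $M_\mathcal{A}$. For push and shift edges this is immediate from the construction; for support edges one uses that a support in $\hat{\mathcal{A}}$ projects onto a support of $\mathcal{A}$ in its first component, hence onto an $M_\mathcal{A}$ support edge $(u_0,\alpha_0) \ssupp (u_1,\alpha_1)$. Consequently, if $t = t_0 t_1 \dots$ is any infinite path of $G$, each consecutive pair $(t_k, t_{k+1})$ maps to an $M_\mathcal{A}$ edge, so $\pi_\mathcal{A}(t)$ is a path of $M_\mathcal{A}$.

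For statement~2, given an infinite path $r = (u_0,\alpha_0)(u_1,\alpha_1)\dots$ of $M_\mathcal{A}$, I would first exhibit a run $\rho$ of $\mathcal{A}$ with $\sigma(\rho) = r$, obtained by unrolling each support edge of $r$ into a terminating sub-computation and each push or shift edge into a single move, and let $w = \Lambda_\varepsilon(\rho)$ be its labeling. Since $\mathcal{B}$ is complete, it has a final run $\tau$ on $w$, and because $\mathcal{A}$ and $\mathcal{B}$ share the same \acs{OP} alphabet, $\rho$ and $\tau$ are synchronized; thus $\sigma(\tau) = (p_0,b_0,\ell_0)(p_1,b_1,\ell_1)\dots$ is a path of $\mathcal{B}$'s support graph whose edges match those of $r$ in type, with $b_k = \symb{\alpha_k}$ and $\ell_k = \Lambda(u_k)$. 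Setting $t = ((u_0,\alpha_0),p_0)((u_1,\alpha_1),p_1)\dots$ gives $\pi_\mathcal{A}(t) = r$, and each pair $(t_k, t_{k+1})$ is a genuine edge of $G$: the $\mathcal{A}$-component is an $M_\mathcal{A}$ edge by hypothesis, while $p_k \to p_{k+1}$ is realized in $\mathcal{B}$'s support graph by $\tau$, so by backward determinism $p_k$ is the unique predecessor compatible with Def.~\ref{def:backward-det}, which is exactly the one the construction of $G$ selects.

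I expect the main obstacle to lie in the support-edge case of statement~2: one must verify that $\tau$ furnishes a \emph{closed} support synchronized with the corresponding support of $\rho$, so that the combined support appears in $\hat{\mathcal{A}}$ and triggers the addition of the matching support edge to $G$. A secondary point, relevant to statement~1, is to ensure that every $\hat{\mathcal{A}}$-support projects onto a positive-probability support edge of $M_\mathcal{A}$, and not merely onto one of $\mathcal{A}$'s support graph. The measure-zero caveats of Theorem~\ref{thm:support-chain} play no role here, since statement~2 only requires the existence of one witness run $\rho$ realizing $r$.
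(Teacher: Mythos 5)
Your proposal is correct and follows essentially the same route as the paper's proof: part~1 by observing that every edge of $G$ projects to an edge of $M_\mathcal{A}$ by construction, and part~2 by realizing $r$ as $\sigma(\rho)$ for some run $\rho$ of $\mathcal{A}$, invoking completeness of $\mathcal{B}$ to obtain a final run $\tau$ on $\Lambda_\varepsilon(\rho)$, and pairing the synchronized $\sigma$-images to produce the witness path in $G$. The two obstacles you flag (closed supports of $\rho$ and $\tau$ combining into a support of $\hat{\mathcal{A}}$, and $G$ being built over $M_\mathcal{A}$ rather than the full support graph of $\mathcal{A}$) are indeed handled by the synchronization argument and the construction of $G$, exactly as the paper's terser proof implicitly assumes.
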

\begin{proof}
1) trivially follows from the definition of $G$.
Concerning 2), for each path $r$ in $M_\mathcal{A}$ there is some run $\rho$ of $\Delta(\mathcal{A})$
such that $r = \sigma(\rho)$.
Since $\mathcal{B}$ is complete, it has a final run $\tau$ on $\Lambda(\rho)$.
By construction, $G$ contains a path obtained by pairing together one-by-one
the semi-configurations in $\sigma(\rho)$ and $\sigma(\tau)$.
\end{proof}
Therefore, an \ac{SCC} of $M_\mathcal{A}$ yields one or more \acp{SCC} of $G$,
which are also \acp{SCC} of $\mathcal{B}$.

Let $K$ be a \ac{BSCC} of $M_\mathcal{A}$.
If we find an \ac{SCC} $\hat{C}$ of $G$ such that
\begin{enumerate*}[label=\textbf{(\alph*)}]
  \item \label{item:prop-a} $\pi_\mathcal{A}(\hat{C}) = K$,
  \item \label{item:prop-b} edges linking nodes of $\hat{C}$ cover all paths in $K$
(including all supports represented by support edges),
  \item \label{item:prop-c} $\hat{C}$ contains nodes or edges that are final for $\mathcal{B}$,
then we know that the labels of all runs of $\mathcal{A}$
generated by $K$ yield final runs of $\mathcal{B}$.
\end{enumerate*}
Theorem~\ref{thm:conditions-H} provides a practical way to find \acp{SCC} of $G$
that satisfy \ref{item:prop-a}--\ref{item:prop-c}:%
\begin{theorem}
\label{thm:conditions-H}
Let $K$ be a \ac{BSCC} of $M_\mathcal{A}$.
$G$ contains a unique \ac{SCC} $\hat{C}$ that satisfies \ref{item:prop-a}--\ref{item:prop-c}.
$\hat{C}$ is the only \ac{SCC} of $G$ that satisfies the following properties:
\begin{enumerate}[label=(\arabic*)]
\item \label{item:G-cond-proj}
  $\pi_\mathcal{A}(\hat{C}) = K$;
\item \label{item:G-cond-final}
  $\hat{C}$ contains at least one final node or edge;
\item \label{item:G-cond-ancestor}
  no other \ac{SCC} $C$ such that $K = \pi_\mathcal{A}(C)$ is an ancestor of $\hat{C}$.
\end{enumerate}
\end{theorem}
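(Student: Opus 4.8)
The plan is to reduce everything to a single consequence of separation. Write $G_K$ for the subgraph of $G$ induced by the nodes projecting onto $K$, and let $\mathcal{S}_K$ be the set of \ac{SCC}s $C$ of $G$ with $\pi_\mathcal{A}(C) = K$. Since condition \ref{item:G-cond-proj} is exactly \ref{item:prop-a} and \ref{item:G-cond-final} is exactly \ref{item:prop-c}, the real content is to reconcile the covering property \ref{item:prop-b} with the ancestor condition \ref{item:G-cond-ancestor}, and to prove uniqueness. The tool driving this is a \emph{merge lemma} coming from separation: by Def.~\ref{def:separated}, two final runs of $\mathcal{B}$ on the \emph{same} word whose stacks carry the same topmost symbol must start in the same state; applying this at every suffix position, a word generated along a path of $M_\mathcal{A}$ admits \emph{at most one} final run of $\mathcal{B}$. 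Equivalently, two final paths of $G$ that project onto the same path of $M_\mathcal{A}$ and begin with the same topmost stack symbol coincide. Backward determinism of the support graph (Prop.~\ref{prop:sep-backward}) is the local form of the same fact.

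\textbf{Existence.} First I would produce an \ac{SCC} satisfying \ref{item:prop-a}--\ref{item:prop-c}. Pick a run $\rho$ of $\mathcal{A}$ whose image $\sigma(\rho)$ eventually stays in $K$ and traverses every node and edge of $K$ infinitely often; this exists because $K$ is a \ac{BSCC}. As $\mathcal{B}$ is complete, the labeling of $\rho$ has a final run, unique by the merge lemma, and Lemma~\ref{lemma:paths-G-supp} lifts $\rho$ together with this run to a path $t$ of $G$ with $\pi_\mathcal{A}(t) = \sigma(\rho)$. The nodes visited infinitely often by $t$ lie in a single \ac{SCC} $\hat C$ of $G$; it projects onto all of $K$ (giving \ref{item:prop-a}) and contains a final node or edge (giving \ref{item:prop-c}). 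The delicate point is \ref{item:prop-b}, namely that $\hat C$ is \emph{forward complete} over $K$: from every $(c,p)\in\hat C$ and every $K$-edge $c\to c'$ of each type there is a matching edge into $\hat C$. Here I would prolong the chosen $K$-continuation so that it revisits all of $K$, read it by its (unique) final run, and use backward determinism to argue that the state at $c$ is forced and that this run re-enters $\hat C$, so the continuing edge stays in $\hat C$. For push and shift edges this invokes the first clause of Def.~\ref{def:backward-det}; for support edges it invokes the second clause, where the determining datum is the underlying chain rather than a single symbol.

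\textbf{Covering implies top.} The implication \ref{item:prop-b} $\Rightarrow$ \ref{item:G-cond-ancestor} is clean. Suppose $\hat C$ is forward complete and some $C'\in\mathcal{S}_K$ with $C'\neq\hat C$ reaches $\hat C$. The connecting path projects onto a $K$-path $P$ and enters $\hat C$ at some node; extend $P$ by a tail revisiting all of $K$ and lift this tail inside $\hat C$ (possible by forward completeness), obtaining a final run $\hat t$ on a word $\hat w$ that \emph{starts in} $C'$. By forward completeness I can also lift the \emph{whole} word $\hat w$ inside $\hat C$, starting from a node of $\hat C$ over the same initial semi-configuration, giving a final run $\hat t'$ on $\hat w$ with the same topmost stack symbol. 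The merge lemma forces $\hat t = \hat t'$, so the start of $\hat t$ lies in $\hat C$; hence $C'$ meets $\hat C$ and $C'=\hat C$, a contradiction. Thus $\hat C$ satisfies \ref{item:G-cond-ancestor}, and together with \ref{item:prop-a} and \ref{item:prop-c} it satisfies \ref{item:G-cond-proj}--\ref{item:G-cond-ancestor}.

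\textbf{Uniqueness and assembly.} Conversely, a final \ac{SCC} in $\mathcal{S}_K$ with no $\mathcal{S}_K$-ancestor is forward complete: failing to cover some edge from some state would, through completeness and backward determinism, expose an \ac{SCC} of $\mathcal{S}_K$ lying above $\hat C$, contradicting \ref{item:G-cond-ancestor}. Granting this, any two \ac{SCC}s satisfying \ref{item:G-cond-proj}--\ref{item:G-cond-ancestor} are both forward complete and final; fixing a common $c\in K$ and a $K$-path through $c$ that revisits all of $K$, the merge lemma identifies their two final lifts and hence the two \ac{SCC}s, so \ref{item:G-cond-proj}--\ref{item:G-cond-ancestor} has at most one solution. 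With Existence it therefore has exactly one, namely $\hat C$. Finally, any \ac{SCC} satisfying \ref{item:prop-a}--\ref{item:prop-c} satisfies \ref{item:G-cond-proj}--\ref{item:G-cond-ancestor} by the previous paragraph, hence equals $\hat C$; this yields both the uniqueness of the \ref{item:prop-a}--\ref{item:prop-c} \ac{SCC} and its identification with the effectively checkable conditions. I expect the main obstacle to be precisely the covering property \ref{item:prop-b} and its equivalence with \ref{item:G-cond-ancestor}: proving forward completeness needs the interplay of completeness of $\mathcal{B}$ with backward determinism, and in particular the support-edge case of Def.~\ref{def:backward-det}, where a whole chain, not a symbol, determines the predecessor state.
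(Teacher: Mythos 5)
There is a genuine gap, and it sits exactly where you predicted trouble: the covering property \ref{item:prop-b}. You formalize it as \emph{forward} completeness (``from every $(c,p)\in\hat C$ and every $K$-edge $c\to c'$ of each type there is a matching edge into $\hat C$''), and all three of your steps (existence, covering-implies-top, uniqueness) lift paths or whole words forward inside $\hat C$. But separation only makes $G$ \emph{backward} deterministic; in the forward direction $\mathcal{B}$ (hence $G$) is genuinely nondeterministic, and forward completeness is simply false for the correct \ac{SCC}. The paper's own running example witnesses this: there $K=\{c_2\}$ and $\hat C=\{(c_2,q_0),(c_2,q_1)\}$ satisfies \ref{item:prop-a}--\ref{item:prop-c} and \ref{item:G-cond-proj}--\ref{item:G-cond-ancestor}, yet the node $(c_2,q_0)$ has \emph{no} outgoing edge of type $e_1$ (the simple-chain support), because $\mathcal{B}_\lcall$ cannot read the chain $\lcall\,\lret$ starting from $q_0$; only $(c_2,q_1)$ carries that edge. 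So a forward lift of a $K$-path whose first step is $e_1$ does not exist from $(c_2,q_0)$, your existence step cannot establish your version of \ref{item:prop-b}, and the move ``lift the whole word $\hat w$ inside $\hat C$ starting from a node over the same semi-configuration'' in the covering-implies-top and uniqueness steps is unsound. The paper's formalization of \ref{item:prop-b} (Lemma~\ref{lemma:SCC-ancestor}) has the opposite orientation: for every edge $c_1\gedge c_2$ of $K$ and every node $(c_2,p_2)$ of $\hat C$, the (by backward determinism unique) predecessor $(c_1,p_1)$ must lie in $\hat C$; finite $K$-paths are then lifted \emph{backward}, and it is this backward covering that is proved equivalent to the no-ancestor condition \ref{item:G-cond-ancestor}.

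Two further ingredients of the paper's proof are missing from your sketch and are not cosmetic. First, a single support edge of $M_\mathcal{A}$ bundles many different chains, and $\mathcal{B}$ may read inequivalent chains of that one edge starting from different states; the paper refines $M_\mathcal{A}$ into a multigraph $M'_\mathcal{A}$ (one edge per equivalence class of supports) so that backward determinism applies edge-wise. Your merge lemma, applied to ``the same $K$-path'', ignores this: two lifts of the same $M_\mathcal{A}$-path in two different \acp{SCC} may read different words, so separation does not identify them. Second, the paper's argument is essentially probabilistic rather than combinatorial: it introduces $\pfx(c,p)$, the subgraph $H$ of nodes with $\pfx>0$, and uses almost-sure arguments (a run entering a \ac{BSCC} of $M_\mathcal{A}$ visits every node and every inequivalent support infinitely often) to prove both that the candidate \ac{SCC} is final and covering (Lemmas~\ref{lemma:H-cond-proj}, \ref{lemma:H-final}, \ref{lemma:all-in-C}) and that it is unique. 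A purely graph-theoretic replacement may well exist, but it would have to be organized around backward, not forward, lifting.
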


We justify Theorem~\ref{thm:conditions-H} through an intuitive argument.
We then give the full, more involved proof in Section~\ref{sec:qualitative-mc-proof}.

The existence of $\hat{C}$ follows from $\mathcal{B}$ being complete:
consider a run $\hat{\rho}$ of $\mathcal{A}$ such that $\sigma(\hat{\rho})$
includes all nodes in $K$ and edges connecting them.
Since $\mathcal{B}$ is complete, it has a run that is final for
$w = \Lambda_\varepsilon(\hat{\rho})$, which forms an \ac{SCC} of $\mathcal{B}$.

The uniqueness of $\hat{C}$ follows form $\mathcal{B}$ being separated:
suppose $G$ has another \ac{SCC} $\hat{C}' \neq \hat{C}$
that satisfies conditions \ref{item:prop-a}--\ref{item:prop-c}.
Due to conditions \ref{item:prop-a} and \ref{item:prop-b}, $\hat{C}'$ and $\hat{C}$ must differ
in states of $\mathcal{B}$ their nodes contain.
This means that $w$ has two final runs
$\tau_1$ and $\tau_2$ of $\mathcal{B}$ whose $\sigma$-images differ in at least
two semi-configurations $d_1 = (p_1, b_1, \ell_1)$ and $d_2 = (p_2, b_2, \ell_2)$.
We have $b_1 = b_2$ and $\ell_1 = \ell_2$ because they only depend on $w$, so $p_1 \neq p_2$.
Thus, for some stack symbols $\beta_1, \beta_2$ with $\symb{\beta_1} = \symb{\beta_2}$
we have
$w \in L_\mathcal{A}(p_1, \beta_1 \bot)$ and $w \in L_\mathcal{A}(p_2, \beta_2 \bot)$,
which contradict Def.~\ref{def:separated}.

Condition \ref{item:prop-b} is ensured by property \ref{item:G-cond-ancestor}.
The proof employs a result from \cite[Theorem\ 5.10]{CourcoubetisY95} and is quite involved,
but we give an intuition in Example~\ref{exa:mc}.

\begin{example}[Running example, cont. \ref{running-ex:6}]
\label{exa:mc}
$\mathcal{B}_\lcall$ (Fig.~\ref{fig:opba-example}) is separated,
and its language satisfies \acs{POTLF} formula $\lcduntil{\top}{(\lcall \land \ldnext \lret)}$
(we do not use the \ac{OPBA} obtained with the construction of \cite{ChiariMPP23} due to its size).
So, we can check $\mathcal{A}_\lcall$ against $\mathcal{B}_\lcall$:
the resulting graph $G$ is shown in Fig.~\ref{fig:supp-chain} (right).
Edges labeled with $e_1$ come from simple supports
$u_1 \apush{} u_2 \ashift{} u_1 \apop{u_1} u_1$ of $\mathcal{A}_\lcall$
and $q_1 \apush{\lcall} q_2 \ashift{\lret} q_i \apop{q_1} q_i$, with $i \in \{0,1\}$,
of $\mathcal{B}_\lcall$;
edges labeled with $e_2$ come from the composed supports
$u_1 \apush{} u_1 \asupp{} u_2 \ashift{} u_1 \apop{u_1} u_1$ of $\mathcal{A}_\lcall$
and $q_0 \apush{\lcall} q_1 \asupp{} q_2 \ashift{\lret} q_i \apop{q_1} q_i$, with $i \in \{0,1\}$,
of $\mathcal{B}_\lcall$;
edges labeled with $e_3$ come the push self-loop on $u_1$
and push moves reading $\lcall$ in $\mathcal{B}_\lcall$.

\ac{BSCC} $\{c_4\}$ of the support chain appears only once,
paired with a final state of $\mathcal{B}_\lcall$,
and is reachable from the initial node $(c_0, q_0)$.
$\{c_2\}$ appears in the two \acp{SCC} $\hat{C} = \{(c_2, q_0), (c_2, q_1)\}$
and $\hat{C}' = \{(c_2, q_3)\}$.
$\hat{C}$ is the first one reachable from $(c_0, q_0)$
and contains all kinds of edges $e_1$, $e_2$ and $e_3$,
while $\hat{C}'$ lacks $e_1$ and $e_2$.
This is no coincidence: since $\{c_2\}$ is a \ac{BSCC},
any node in $G$ descending from it must still contain $c_2$.
Due to backward determinism, only one edge of type $e_2$ is incident in $(c_2, q_3)$,
and it is the one connecting $\hat{C}$ to $\hat{C}'$: it cannot be internal to $\hat{C}'$.
Thus, $\{c_2\}$ is accepted due to states $q_1$ and $q_2$, not $q_3$.

All \acp{SCC} that satisfy properties \ref{item:G-cond-proj}--\ref{item:G-cond-ancestor}
are reachable from the initial node $(c_0, q_0)$,
therefore $\mathcal{A}_\lcall$ satisfies the property encoded by $\mathcal{B}_\lcall$ almost surely.
\end{example}

In conclusion, the algorithm for qualitative model checking proceeds as follows:
\begin{itemize}
\item build $M_\mathcal{A}$ and find its \acp{BSCC};
\item build $G$;
\item find the \acp{SCC} of $G$ that satisfy properties
  \ref{item:G-cond-proj}--\ref{item:G-cond-ancestor};
\item $P(\Lambda_\varepsilon(\mathit{Runs}(\mathcal{A})) \cap L_\mathcal{B}) = 1$ iff none of them is reachable from a node $((u, \bot), p)$ with $p \notin I$.
\end{itemize}
Building $M_\mathcal{A}$ requires computing termination probabilities
to determine the pending semi-con\-fi\-gu\-ra\-tions, so it lies in \textsc{pspace}.
The remaining steps are polynomial in the sizes of $\mathcal{A}$ and $\mathcal{B}$.

\subsubsection{Correctness Proof}
\label{sec:qualitative-mc-proof}

We now prove the correctness of the qualitative model checking algorithm more formally.
Our proof is related to the one for \acs{LTL} qualitative model checking of \acp{RMC} in \cite[Section 7]{EtessamiY12},
but it differs significantly because we target \acp{pOPA} and \acp{OPBA} instead of \acp{RMC} and \acs{LTL}.
In particular, the model checking algorithm by \citet{EtessamiY12}
employs an \emph{ad hoc} bit-vector-based construction for \acs{LTL},
making it significantly less general than ours,
which targets the more expressive automata class of separated \acp{OPBA}.


Recall that we consider a \ac{pOPA} $\mathcal{A}$, its support chain $M_\mathcal{A}$,
a complete separated \ac{OPBA} $\mathcal{B}$, its support graph $(\mathcal{C}, \mathcal{E})$,
and the graph $G$ built as described in Section~\ref{sec:qualitative-mc-construction}.

For nodes $(c, p) \in G$ we define $\pfx(c, p)$ as the probability
that an instance of $c$ is pending in a run $\rho$ of $\Delta(\mathcal{A})$,
and that the word labeling such a run yields a final run $\tau$ in $\mathcal{B}$
passing through state $p$ at the same time $\rho$ passes through $c$.
More formally, given $c = (q, \alpha) \in \mathcal{C}$ and $p \in Q_\mathcal{B}$,
and a run $\rho = \rho_0 \rho_1 \dots \rho_i \dots$ where $\rho_i = (q, \alpha A)$ for some $i \geq 0$,
$\pfx(c, p)$ is the probability that no symbol in $\alpha A$ is ever popped for all $\rho_j$, $j \geq i$,
and that $\Lambda_\varepsilon(\rho_i \rho_{i+1} \dots) \in L_\mathcal{B}(p, \beta \bot)$
for some $\beta \in \Gamma_\bot^\mathcal{B}$ such that $\symb{\beta} = \symb{\alpha}$.
\begin{lemma}
\label{lemma:nex-in-H}
If $G$ contains an edge $(c_1, p_1) \gedge (c_2, p_2)$ such that $\pfx(c_2, p_2) > 0$,
then $\pfx(c_1, p_1) > 0$.
\end{lemma}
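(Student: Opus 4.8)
The plan is to argue by cases on the type of the edge $(c_1, p_1) \gedge (c_2, p_2)$ (push, shift, or support) and, in each case, to exhibit a \emph{finite} portion of a run of $\mathcal{A}$ that has strictly positive probability, leads from a configuration $\rho_i = (u, \alpha A)$ that is an instance of $c_1$ to one that is an instance of $c_2$, and is matched on the $\mathcal{B}$-side by the corresponding push/shift/support transition from $p_1$ to $p_2$ reading the $\Lambda_\varepsilon$-labels of that portion. Writing $p_\text{edge} > 0$ for the probability of this finite portion, the Markov property of $\Delta(\mathcal{A})$ gives $\pfx(c_1, p_1) \geq p_\text{edge} \cdot \pfx(c_2, p_2) > 0$: conditioned on taking the portion, the continuation from the instance of $c_2$ is distributed exactly as a fresh run from $c_2$, so it witnesses $\pfx(c_2, p_2)$ with probability $\pfx(c_2, p_2)$, and it remains to check that, together with the prepended portion, it also witnesses $\pfx(c_1, p_1)$. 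The engine is therefore completely uniform; only the construction of the finite portion differs across the three cases.

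For a push edge $c_1 = (u, \alpha)$, $c_2 = (v, [\Lambda(u), u])$, I take $p_\text{edge} = \delta_\mathit{push}(u)(v) > 0$: $\mathcal{A}$ pushes $[\Lambda(u), u]$ above $\alpha$, while $\mathcal{B}$ performs the matching push $(p_1, \Lambda(u), p_2) \in \delta^\mathcal{B}_\mathit{push}$, which reads the first label $\Lambda_\varepsilon(u, \alpha A) = \Lambda(u)$ (legal because $\symb{\alpha} \lessdot \Lambda(u)$). If the continuation never pops $[\Lambda(u), u]$, then \emph{a fortiori} it never pops $\alpha A$, so $c_1$ stays pending; and if the continuation's label lies in $L_\mathcal{B}(p_2, \beta_2 \bot)$ with $\symb{\beta_2} = \Lambda(u)$, then prefixing the push yields a final run of $\mathcal{B}$ from $p_1$ with a stack symbol $\beta_1$, $\symb{\beta_1} = \symb{\alpha}$, that is never popped, i.e.\ the full label is in $L_\mathcal{B}(p_1, \beta_1 \bot)$. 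The shift case is identical with $p_\text{edge} = \delta_\mathit{shift}(u)(v)$, noting that a shift only updates the top symbol of the stack slot without removing it, so pendingness is again preserved.

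The support case is where the argument requires the most care, and is the step I expect to be the main obstacle. By the way $G$'s support edges are defined, such an edge arises from a support edge of the product automaton $\hat{\mathcal{A}}$ connecting product states $(u, p_1)$ and $(v, p_2)$; projecting it gives a \emph{single finite joint support}, a support $u \asupp{x} v$ of $\mathcal{A}$ and a support $p_1 \asupp{x} p_2$ of $\mathcal{B}$ reading the \emph{same} label word $x$, built entirely from transitions with positive $\delta^\mathcal{A}$-probability, so its probability $p_\text{edge}$ is positive. During this support $\mathcal{A}$ pushes and then pops a symbol above $\alpha$, leaving $\alpha$ untouched; symmetrically $\mathcal{B}$ pushes and pops a symbol above a stack symbol $\beta_1$ with $\symb{\beta_1} = \symb{\alpha}$, which is the left context of the chain and is therefore never popped, so $\mathcal{B}$ ends the support in $p_2$ with $\beta_1$ still on top. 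The crucial observation is that, since $\beta_1$ is never popped in the whole combined run, its state component is irrelevant---it would be consulted only by a pop move---so the continuation witnessing $\pfx(c_2, p_2)$, which is accepted from $L_\mathcal{B}(p_2, \beta_2 \bot)$ for some $\beta_2$ with $\symb{\beta_2} = \symb{\alpha}$, can be glued after the support regardless of the exact $\beta_1$. Concatenating the finite support (which touches $\mathcal{B}$-states only finitely often) with a final run from $p_2$ preserves the B\"uchi condition, so the combined label lies in $L_\mathcal{B}(p_1, \beta_1 \bot)$ and $\alpha A$ is never popped; thus $\pfx(c_1, p_1) \geq p_\text{edge} \cdot \pfx(c_2, p_2) > 0$, completing all three cases.
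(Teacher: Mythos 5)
Your proof is correct and takes essentially the same route as the paper's: the edge of $G$ corresponds to a positive-probability step of $\mathcal{A}$ (a transition for push/shift edges, a finite support for support edges) matched by an edge of $\mathcal{B}$'s support graph, and prepending it to a continuation witnessing $\pfx(c_2, p_2)$ yields $\pfx(c_1, p_1) \geq p_{\mathrm{edge}} \cdot \pfx(c_2, p_2) > 0$. Your write-up is in fact more detailed than the paper's terse argument---notably the explicit gluing in the support case, where you observe that the never-popped stack symbol's state component is irrelevant---but the underlying idea is identical.
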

\begin{proof}
If $G$ contains such an edge, then $M_\mathcal{A}$ contains an edge $c_1 \suppedge c_2$,
which can be taken with positive probability,
i.e., $\mathcal{A}$ contains a transition (if the edge is a push or shift edge)
or a summary (if the edge is a summary edge) that has positive probability $x$ of being taken
from semi-configuration $c_1$.
Thus, if $\pfx(c_2, p_2) > 0$, then $\nex{c_2} > 0$, and $\nex{c_1} \geq x \cdot \nex{c_2} > 0$.
Since, by construction of $G$, the support graph of $\mathcal{B}$ does have an edge $p_1 \suppedge p_2$,
then $(c_1, p_1) \gedge (c_2, p_2)$ can be taken with positive probability,
and $\pfx(c_1, p_1) > 0$.
(Note that, in general, $\pfx(c_1, p_1) \leq \nex{c_1}$,
because the weight of the edge $(c_1, p_1)$ may be shared among multiple edges in $G$.)
\end{proof}

From Lemma~\ref{lemma:nex-in-H} immediately follows:
\begin{corollary}
\label{cor:scc-h}
In every \ac{SCC} of $G$, $\pfx$ is positive for either all of its nodes or none.
\end{corollary}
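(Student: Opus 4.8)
The plan is to derive the dichotomy directly from strong connectivity together with the backward propagation of positivity already established in Lemma~\ref{lemma:nex-in-H}. First I would note that ``all or none'' is equivalent to the single implication: if \emph{some} node of an \ac{SCC} $C$ of $G$ has positive $\pfx$, then \emph{every} node of $C$ does. Indeed, if no node has positive $\pfx$ we are in the ``none'' case trivially, so the only content is this implication.

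So fix an \ac{SCC} $C$ of $G$ and suppose $\pfx(c, p) > 0$ for some node $(c, p) \in C$. Let $(c', p')$ be an arbitrary node of $C$. Since $C$ is strongly connected, there is a directed path lying entirely within $C$ from $(c', p')$ to $(c, p)$, say
\[
(c', p') = n_0 \gedge n_1 \gedge \dots \gedge n_k = (c, p).
\]
I would then propagate positivity backwards along this path: starting from $\pfx(n_k) = \pfx(c, p) > 0$, applying Lemma~\ref{lemma:nex-in-H} to the edge $n_{k-1} \gedge n_k$ yields $\pfx(n_{k-1}) > 0$; iterating this step down to $n_0$ gives $\pfx(c', p') = \pfx(n_0) > 0$. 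As $(c', p')$ was arbitrary in $C$, we conclude $\pfx$ is positive on all of $C$.

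There is no genuine obstacle here; the one point requiring care is the direction of propagation. Lemma~\ref{lemma:nex-in-H} pushes positivity from the head of an edge to its tail, i.e.\ against the orientation of the edge. Hence I must choose a path that \emph{terminates} at the node already known to have positive $\pfx$ and traverse it in reverse — and strong connectivity is exactly what guarantees such a path exists from every node of $C$ to the distinguished one.
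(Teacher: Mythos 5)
Your proof is correct and is exactly the argument the paper intends: the paper states the corollary as an immediate consequence of Lemma~\ref{lemma:nex-in-H}, and your backward propagation of positivity along a path within the \ac{SCC} (using strong connectivity to get a path ending at the node with positive $\pfx$) is the right way to spell that out, including the correct observation that positivity flows against edge orientation.
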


Thus, nodes of $G$ with positive $\pfx$ are those in an \ac{SCC} where all nodes have positive $\pfx$,
and their ancestors.
We use $\pfx$ to formalize conditions \ref{item:prop-a}--\ref{item:prop-c} from Section~\ref{sec:qualitative-mc}.
In particular, nodes of $G$ contained in \acp{SCC} that satisfy \ref{item:prop-a}--\ref{item:prop-c} and their ancestors
are exactly those with $\pfx > 0$.
We call $H$ the sub-graph of $G$ obtained by removing nodes with $\pfx = 0$.
$H$ can be built by finding bottom \acp{SCC} of $G$ with $\pfx > 0$ and all their ancestors.
If all nodes of the form $((u_0, \bot), q_0)$ in $H$,
where $u_0$ is the initial state of $\mathcal{A}$,
are such that $q_0$ is an initial state of $\mathcal{B}$,
then almost all runs of $\Delta(\mathcal{A})$ satisfy the specification encoded by $\mathcal{B}$.
This fact can be used to decide the qualitative model checking problem.

Thus, we need a way to identify nodes of $G$ that are part of $H$.
We prove that Theorem~\ref{thm:conditions-H} gives three necessary and sufficient conditions:

\begin{lemma}
\label{lemma:conditions-H}
Let $C$ be an \ac{SCC} of $G$.
The following conditions are equivalent:
\begin{enumerate}[label=(\Roman*)]
  \item \label{item:scc-cond-H}
    $C$ is a \ac{BSCC} of $H$;
  \item \label{item:scc-cond-G}
    There exists a \ac{BSCC} $K$ of $M_\mathcal{A}$ such that $C$ is the only \ac{SCC} of $G$
    that satisfies properties \ref{item:G-cond-proj}--\ref{item:G-cond-ancestor} of Theorem~\ref{thm:conditions-H};
  \item \label{item:scc-cond-a-c}
    There exists a \ac{BSCC} $K$ of $M_\mathcal{A}$ such that $C$ is the only \ac{SCC} of $G$
    that satisfies properties \ref{item:prop-a}--\ref{item:prop-c}.
\end{enumerate}
\end{lemma}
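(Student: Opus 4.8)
The plan is to prove the three equivalences in two stages: first establish \ref{item:scc-cond-G}$\Leftrightarrow$\ref{item:scc-cond-a-c} by matching the two property triples pointwise, and then establish \ref{item:scc-cond-H}$\Leftrightarrow$\ref{item:scc-cond-a-c} through the probabilistic meaning of $\pfx$. For the first stage, I note that condition~\ref{item:G-cond-proj} of Theorem~\ref{thm:conditions-H} is literally property~\ref{item:prop-a}, and condition~\ref{item:G-cond-final} is property~\ref{item:prop-c}; hence, fixing a \ac{BSCC} $K$ of $M_\mathcal{A}$ and an \ac{SCC} $C$ of $G$ with $\pi_\mathcal{A}(C) = K$ that contains a final node or edge, the whole stage reduces to showing that the ancestor condition~\ref{item:G-cond-ancestor} holds for $C$ iff the covering condition~\ref{item:prop-b} does. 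Once this pointwise equivalence is in place, ``$C$ is the only \ac{SCC} satisfying \ref{item:G-cond-proj}--\ref{item:G-cond-ancestor}'' and ``$C$ is the only \ac{SCC} satisfying \ref{item:prop-a}--\ref{item:prop-c}'' denote the same object, giving \ref{item:scc-cond-G}$\Leftrightarrow$\ref{item:scc-cond-a-c}.

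For \ref{item:prop-b}$\Rightarrow$\ref{item:G-cond-ancestor} I would argue by contradiction using separation. If some \ac{SCC} $C' \neq C$ with $\pi_\mathcal{A}(C') = K$ were an ancestor of $C$, then a single run of $\mathcal{A}$ whose $\sigma$-image stays in $K$ would admit two $G$-paths that disagree on the $\mathcal{B}$-component at some semi-configuration while agreeing on its input and look-ahead coordinates, since those depend only on the run's labeling. This yields two final $\mathcal{B}$-runs on the same word reached in different states but with the same top-of-stack symbol, contradicting Definition~\ref{def:separated} exactly as in the uniqueness argument sketched after Theorem~\ref{thm:conditions-H} and in Proposition~\ref{prop:sep-backward}. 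The converse \ref{item:G-cond-ancestor}$\Rightarrow$\ref{item:prop-b} is the \textbf{main obstacle}. Here I would fix a run $\rho$ of $\mathcal{A}$ whose $\sigma$-image eventually remains in $K$; by \cite[Theorem~5.10]{CourcoubetisY95} applied to the finite Markov chain $M_\mathcal{A}$, almost every such run visits all nodes of $K$ and traverses all its edges---push, shift, and support---infinitely often. By completeness of $\mathcal{B}$ such a $\rho$ carries a final run $\tau$, and by backward determinism of the support graph (Proposition~\ref{prop:sep-backward}, via Definition~\ref{def:backward-det}) the induced path in $G$ is forced once its future is fixed. This lifted path must eventually settle in an \ac{SCC} $\hat{C}$ of $G$ with $\pi_\mathcal{A}(\hat{C}) = K$ whose edges cover all of $K$; the ancestor condition~\ref{item:G-cond-ancestor} then pins $\hat{C} = C$, so $C$ inherits \ref{item:prop-b}. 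The delicate point is ensuring that the almost-sure covering at the level of $M_\mathcal{A}$ transfers, through the unique backward-deterministic lift, to coverage of every support edge \emph{inside a single} \ac{SCC} of $G$ rather than being spread across several lifts.

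For the second stage, \ref{item:scc-cond-H}$\Leftrightarrow$\ref{item:scc-cond-a-c}, I would use the characterization of $H$ in terms of $\pfx$. By Lemma~\ref{lemma:nex-in-H} and Corollary~\ref{cor:scc-h}, $H$ consists exactly of the nodes with $\pfx > 0$; moreover Corollary~\ref{cor:scc-h} guarantees that no \ac{SCC} of $G$ is split by the removal, so the \acp{SCC} of $H$ are precisely the \acp{SCC} of $G$ carrying positive $\pfx$, and a \ac{BSCC} of $H$ is one from which no positive-$\pfx$ node lies outside (it may still have edges toward $\pfx = 0$ nodes). To match this with \ref{item:prop-a}--\ref{item:prop-c}, I would show that for an \ac{SCC} $C$ with $K = \pi_\mathcal{A}(C)$ a \ac{BSCC} of $M_\mathcal{A}$, properties \ref{item:prop-b} and \ref{item:prop-c} together are equivalent to $\pfx > 0$ on $C$: the forward direction combines the almost-sure covering of $K$ with the preservation of the probability distribution under $\sigma$ (Theorem~\ref{thm:support-chain}) to produce a positive-measure set of pending runs whose labelings are accepted by $\mathcal{B}$ through $C$; the backward direction unfolds a positive-$\pfx$ witness into a final $\mathcal{B}$-run living in a final, covering \ac{SCC}, which by separation must be $C$ itself.

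Finally I would close the argument by showing that a \ac{BSCC} $C$ of $H$ is exactly the ancestor-most final \ac{SCC} projecting onto a \ac{BSCC} $K$ of $M_\mathcal{A}$. That $\pi_\mathcal{A}(C)$ is a full \ac{BSCC}, and not merely a strongly connected subset, follows from positivity of $\pfx$ together with Lemma~\ref{lemma:paths-G-supp} and the almost-sure settling of $M_\mathcal{A}$ into its \acp{BSCC}; this is a point to treat with care. The ``no outgoing $H$-edge'' requirement then forbids any distinct positive-$\pfx$ descendant, which---using the backward-deterministic lift and the separation-based uniqueness above---is exactly condition~\ref{item:G-cond-ancestor}, while positivity of $\pfx$ supplies \ref{item:prop-c} and the covering \ref{item:prop-b}. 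The two stages combined yield \ref{item:scc-cond-H}$\Leftrightarrow$\ref{item:scc-cond-G}$\Leftrightarrow$\ref{item:scc-cond-a-c}.
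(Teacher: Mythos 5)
Your overall architecture mirrors the paper's: split the statement into a pointwise matching of the two property triples (\ref{item:scc-cond-G}$\Leftrightarrow$\ref{item:scc-cond-a-c}) plus a probabilistic characterization of $H$ via $\pfx$ for \ref{item:scc-cond-H}, and you correctly identify backward determinism, separation, the Courcoubetis--Yannakakis covering argument, and Theorem~\ref{thm:support-chain} as the ingredients. But there is a genuine gap, and it sits exactly where you flag ``the main obstacle'': the equivalence between the covering property \ref{item:prop-b} and the no-ancestor property \ref{item:G-cond-ancestor}. The paper resolves it by introducing a multigraph $M'_\mathcal{A}$ that splits every support edge of $M_\mathcal{A}$ into one edge per equivalence class of realizing runs (two runs being equivalent iff their labelings link the same semi-configurations of $\mathcal{B}$). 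Only with respect to these refined edges is the support graph of $\mathcal{B}$ backward deterministic edge-by-edge; with respect to the unrefined support edges of $M_\mathcal{A}$, several states $p_1$ may be linked to the same node $(c_2,p_2)$, which is precisely why the ``unique backward-deterministic lift'' you invoke does not exist, and why your sketch of \ref{item:G-cond-ancestor}$\Rightarrow$\ref{item:prop-b} only yields that \emph{some} lift of each $K$-edge lies in $\hat{C}$, not the universal coverage (every edge of $K$, lifted to \emph{every} node of $\hat{C}$ above its target) that \ref{item:prop-b} demands. The paper closes this with Lemma~\ref{lemma:SCC-ancestor}, a purely graph-theoretic three-way equivalence of edge-coverage, path-coverage, and absence of ancestors, stated over $M'_\mathcal{A}$; without that lemma (or an equivalent construction) your stage one does not go through, and your stage two inherits the problem, since identifying a positive-$\pfx$ witness's lift with $C$ (the paper's Lemma~\ref{lemma:all-in-C}) also relies on path-coverage over $M'_\mathcal{A}$.

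Separately, your argument for \ref{item:prop-b}$\Rightarrow$\ref{item:G-cond-ancestor} is unsound as stated: an ancestor \ac{SCC} $C'$ with $\pi_\mathcal{A}(C') = K$ need not contain any final node or edge, so a $G$-path through $C'$ need not correspond to a \emph{final} run of $\mathcal{B}$, and Definition~\ref{def:separated}, which constrains only words \emph{accepted} from two states, yields no contradiction. Separation is the right tool for the uniqueness of the \ac{SCC} satisfying \ref{item:prop-a}--\ref{item:prop-c}, and the paper does use it there, but the ancestor/covering equivalence itself is proved structurally: if $D$ is an ancestor, some edge $(c_1,p_1) \gedge (c_2,p_2)$ enters $C$ from outside, and backward determinism over the $M'_\mathcal{A}$-edges forces $(c_1,p_1)$ to be the unique lift of those edges ending at $(c_2,p_2)$, so coverage fails --- no probability, finality, or separation is involved in that direction. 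Until you supply the multigraph construction and rebuild both directions of the \ref{item:prop-b}/\ref{item:G-cond-ancestor} equivalence on top of it, the proof is incomplete.
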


We start by proving the equivalence of conditions \ref{item:scc-cond-H} and \ref{item:scc-cond-G}.

\begin{lemma}[\ref{item:scc-cond-H} implies \ref{item:G-cond-proj}]
\label{lemma:H-cond-proj}
If $C$ is a bottom \ac{SCC} of $H$, then $\pi_\mathcal{A}(C)$ is a bottom \ac{SCC} of $M_\mathcal{A}$.
\end{lemma}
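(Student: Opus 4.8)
The plan is to establish the two defining features of a bottom \ac{SCC} separately: that $\pi_\mathcal{A}(C)$ is strongly connected and therefore sits inside a single \ac{SCC} $K$ of $M_\mathcal{A}$, and then that this $K$ is both bottom and exactly equal to $\pi_\mathcal{A}(C)$. The first half is the easy one. Since every edge of $G$ projects under $\pi_\mathcal{A}$ to an edge of $M_\mathcal{A}$ (this is the first part of Lemma~\ref{lemma:paths-G-supp}, and is immediate from the construction of $G$ as a synchronized product), the image of the strongly connected set $C$ is strongly connected, hence contained in a single \ac{SCC} $K$ of $M_\mathcal{A}$, giving $\pi_\mathcal{A}(C) \subseteq K$.

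The engine for everything else is a lifting observation I would isolate first. Suppose $\rho$ is a run that is pending at an instance of a semi-configuration $c$ and whose future labeling is accepted by $\mathcal{B}$ from a state $p$, so that $\rho$ witnesses $\pfx(c,p) > 0$. The synchronized \ac{OPBA} run then fixes, along $\sigma(\rho) = c = d_0, d_1, d_2, \dots$, a sequence of states $p = p_0, p_1, p_2, \dots$ such that each tail of $\rho$ from $d_j$ is still pending at $d_j$ and still accepted from $p_j$ (a suffix of a final run is final), whence $\pfx(d_j, p_j) > 0$ for all $j$, and such that each $(d_j, p_j) \gedge (d_{j+1}, p_{j+1})$ is an edge of $G$. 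That these are genuine $G$-edges is exactly where backward determinism of $\mathcal{B}$'s support graph (Prop.~\ref{prop:sep-backward}, via Def.~\ref{def:backward-det}) is used: it forces $p_j$ to be precisely the predecessor selected when building $G$. In particular, $\rho$ lifts to an infinite path inside the subgraph $H$ that starts at $(d_0, p_0) = (c, p)$.

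With this lift in hand, both remaining claims reduce to standard finite-Markov-chain facts about $M_\mathcal{A}$, which are legitimate because Theorem~\ref{thm:support-chain} guarantees $M_\mathcal{A}$ preserves the run distribution. For bottomness, fix $(c, p) \in C$ (so $\pfx(c,p) > 0$ and its witness set $S$ has positive measure) and assume for contradiction that $K$ is not a bottom \ac{SCC}. Conditioned on being pending at $c \in K$, a run almost surely leaves $K$, so, $S$ having positive measure, some witness $\rho \in S$ leaves $K$; lifting $\rho$ yields an $H$-path out of $(c,p) \in C$ reaching a node $(d_{m+1}, p_{m+1})$ with $d_{m+1} \notin K$. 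But $C$ is a bottom \ac{SCC} of $H$, so every $H$-path from $C$ stays in $C$, forcing $d_{m+1} \in \pi_\mathcal{A}(C) \subseteq K$, a contradiction. For the equality $\pi_\mathcal{A}(C) = K$, take any $c' \in K$; now that $K$ is a \ac{BSCC}, a witness run from $(c,p)$ visits $c'$ almost surely, and the same trapping argument places the lifted node $(c', p')$ in $C$, giving $c' \in \pi_\mathcal{A}(C)$.

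I expect the main obstacle to be precisely this forward, probabilistic step: converting the bare positivity $\pfx(c,p) > 0$ into a concrete witnessing run that simultaneously realizes the desired Markov-chain event (escaping a non-bottom $K$, or visiting a prescribed node of a \ac{BSCC}) and lifts to a path staying inside $H$. The delicate point is that $S$ is conditioned on acceptance from the \emph{specific} state $p$, not merely on being pending at $c$; I would discharge this by observing that the relevant event has probability one conditioned on pending at $c$, so its complement has measure zero and cannot absorb the positive-measure set $S$. The second point demanding care is that the lift of a run to a $G$-path is well defined only because of backward determinism, so Prop.~\ref{prop:sep-backward} is genuinely load-bearing rather than cosmetic.
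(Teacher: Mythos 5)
Your proof is correct, and it runs on the same engine as the paper's --- lift a positive-probability witness run to a path of $G$ via backward determinism (Prop.~\ref{prop:sep-backward}, Lemma~\ref{lemma:paths-G-supp}), argue the lifted path stays in $H$, trap it inside $C$ because $C$ is a bottom SCC of $H$, and import almost-sure facts about the finite chain $M_\mathcal{A}$ through Theorem~\ref{thm:support-chain} --- but the arrangement is genuinely different. The paper asserts outright that $\pi_\mathcal{A}(C)$ is an SCC of $M_\mathcal{A}$, picks a witness run whose $\sigma$-image is absorbed into a BSCC $K$, and identifies the SCC of $G$ absorbing the lifted path with $C$ in one shot; you instead prove only strong connectedness (so $\pi_\mathcal{A}(C) \subseteq K$ for some SCC $K$), and then run two separate trapping arguments: bottomness of $K$ by contradiction against the almost-sure escape from non-bottom SCCs, and $K \subseteq \pi_\mathcal{A}(C)$ via the almost-sure visiting of every node of a BSCC. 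What your arrangement buys is rigor exactly where the paper is terse: maximality of the projected SCC is derived rather than asserted, and the measure-theoretic interface (a positive-measure witness set cannot sit inside a null event) is made explicit. One caveat, which your proof shares with the paper's own: the step ``the tail of $\rho$ is pending at $d_j$ and accepted from $p_j$, whence $\pfx(d_j,p_j)>0$'' is an existence-to-positive-measure leap, since a single run never witnesses positive probability; as stated, the lift of an \emph{arbitrary} witness run might exit $H$, and then the trapping argument would not apply to it. The repair is standard and fits your framework: partition the witness set $S$ according to the lifted node at step $j$ (this partition is well defined because separation makes the lift unique), and use the Markov property of $\Delta(\mathcal{A})$ to conclude that cells indexed by nodes with $\pfx = 0$ have measure zero; hence \emph{almost every} run in $S$ lifts into $H$, which is all your escape and visiting arguments need, since they already intersect $S$ with almost-sure events.
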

\begin{proof}
By the construction of $G$, it is clear that for every \ac{SCC} $C$ of $G$ (and hence $H$)
the projection of all its nodes on the first component yields an \ac{SCC}
$K = \pi_\mathcal{A}(\hat{C})$ of $M_\mathcal{A}$.

Consider a node $(c, p)$ of $C$.
We have $\pfx(c, p) > 0$, i.e., there is positive probability that $c$
appears pending in a run $\rho$ of $\Delta(\mathcal{A})$,
and its label starting from $c$ yields a final run of $\mathcal{B}$ starting from $p$.
Thanks to Theorem~\ref{thm:support-chain},
we know that $\sigma(\rho)$ is absorbed into a bottom \ac{SCC} $K$ of $M_\mathcal{A}$.
However, $\sigma(\rho)$ must be the projection of a path in $G$ (Lemma~\ref{lemma:paths-G-supp})
which is absorbed in an \ac{SCC} $C'$ of $G$ which, by Corollary~\ref{cor:scc-h},
is also a bottom \ac{SCC} in $H$.
Since $C'$ and $C$ share $(c, p)$, they must be the same \ac{SCC},
and their projection $\pi_\mathcal{A}(C)$ is the bottom \ac{SCC} $K$ of $M_\mathcal{A}$.
\end{proof}

\begin{lemma}[\ref{item:scc-cond-H} implies \ref{item:G-cond-final}]
\label{lemma:H-final}
Every node of $H$ can reach a final node or edge in $H$.
\end{lemma}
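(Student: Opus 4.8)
The plan is to unpack $\pfx$ at a node of $H$ into a positive-probability family of runs, follow one such witness through the synchronized product, and show that the resulting path both stays inside $H$ and is forced to traverse a final node or edge. Concretely, I would fix a node $n = (c,p)$ of $H$; then $\pfx(c,p) > 0$ by the definition of $H$, so there is a positive-measure set $E$ of runs $\rho$ of $\Delta(\mathcal{A})$ in which $c$ is pending and $\Lambda_\varepsilon$ of the corresponding suffix lies in $L_\mathcal{B}(p, \beta\bot)$, i.e.\ admits a final run $\tau$ of $\mathcal{B}$ starting from $p$. Pairing $\sigma(\rho)$ with $\sigma(\tau)$ exactly as in the construction of $G$ yields an infinite path $t = (c_0,p_0)(c_1,p_1)\dots$ of $G$ with $(c_0,p_0) = n$, whose projection $\pi_\mathcal{A}(t)$ is a path of $M_\mathcal{A}$ by Lemma~\ref{lemma:paths-G-supp}.

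First I would show that $t$ stays inside $H$. Using Theorem~\ref{thm:support-chain} and the finiteness of $M_\mathcal{A}$, I refine $E$ to a still positive-measure subset of runs whose image $\sigma(\rho)$ is eventually trapped in a single \ac{BSCC} $K$ of $M_\mathcal{A}$; I fix one such $\rho$ and let $m$ be a step from which $c_k \in K$ for all $k \geq m$. For $k \geq m$, the recurrence of $K$ together with the strong Markov property of $\Delta(\mathcal{A})$ forces $\pfx(c_k,p_k) > 0$: good continuations---those that keep $c_k$ pending and yield a final $\mathcal{B}$-run from $p_k$---recur from a fresh $(c_k,p_k)$ with the fixed conditional probability $\pfx(c_k,p_k)$, which cannot vanish since such continuations carry positive weight inside the positive-measure event. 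For $k < m$, each $(c_k,p_k)$ is a $G$-ancestor of $(c_m,p_m)$ along $t$, so applying Lemma~\ref{lemma:nex-in-H} backward along the edges of $t$ propagates positivity and gives $\pfx(c_k,p_k) > 0$ for every $k$. Hence all nodes of $t$ belong to $H$, and $t$ is a path of $H$.

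Next I would exploit finality. Since $\tau$ is a final run, its $\mathcal{B}$-state lies in $F$ at infinitely many positions. Each such position is either retained by $\sigma$, so the matching node $(c_k,p_k)$ has $p_k \in F$ and is a final node, or it lies inside a short-cut (possibly composed) support, in which case the support edge representing it in $G$ is final, because by construction an edge is marked final whenever $\mathcal{B}$ visits $F$ anywhere inside its support (cf.\ Def.~\ref{def:support} and Def.~\ref{def:trim-support-graph}). Choosing such a position at some step $\geq m$, all of whose $t$-predecessors already lie in $H$, exhibits a final node or edge reachable from $n$ along the $H$-path $t$. As $n$ was arbitrary, every node of $H$ reaches a final node or edge in $H$.

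The main obstacle is the positivity step: establishing that $\pfx$ stays positive all along the witness path. This is where Theorem~\ref{thm:support-chain} (to secure a recurrent \ac{BSCC} of $M_\mathcal{A}$), the Markov property (to turn recurrence into a positive conditional probability for a good continuation), and the backward-propagation Lemma~\ref{lemma:nex-in-H} must be combined carefully. A secondary point requiring attention is the bookkeeping that maps each $F$-visit of $\tau$ to either a final node or a final support edge, including the case where the visit occurs deep inside nested supports that $\sigma$ collapses into a single edge.
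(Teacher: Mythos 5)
Your proposal follows the same skeleton as the paper's proof: unpack $\pfx(c,p)>0$ into a witness run $\rho$ of $\Delta(\mathcal{A})$ whose labeling admits a final run $\tau$ of $\mathcal{B}$ from $p$, pair $\sigma(\rho)$ with $\sigma(\tau)$ into a path of $G$ starting at $(c,p)$, argue that this path lies in $H$, and then locate finality by the case split ``$F$-visit retained by $\sigma$ $\Rightarrow$ final node'' versus ``$F$-visit collapsed inside a support $\Rightarrow$ final edge.'' That case split, and your appeal to the fact that support edges of $G$ are marked final whenever $\mathcal{B}$ visits $F$ inside the represented support, matches the paper exactly and is correct.

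The problem is the positivity step, and as written it has a genuine gap. You fix \emph{one} run $\rho$ from a positive-measure subset of $E$ and then assert $\pfx(c_k,p_k)>0$ at every node its paired path visits after absorption, on the grounds that good continuations ``carry positive weight inside the positive-measure event.'' For a single fixed run this is circular: its continuation is one trajectory and carries no weight, and a priori your chosen $\rho$ could belong to a nonempty but null exceptional subset of $E$ whose paired path passes through a node with $\pfx=0$. Recurrence of the \ac{BSCC} $K$ of $M_\mathcal{A}$ cannot repair this, because positivity of $\pfx$ at $(c_k,p_k)$ depends on the paired $\mathcal{B}$-state $p_k$, not only on $c_k\in K$; the strong Markov property alone gives you that \emph{some} node occupied at step $k$ by runs of $E$ has positive $\pfx$, not that the one visited by your fixed $\rho$ does. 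The clean fix is a null-set decomposition: for each $k$ and each node $(c',p')$ with $\pfx(c',p')=0$, the Markov property of $\Delta(\mathcal{A})$ makes $\{\rho\in E \mid t(\rho)_k=(c',p')\}$ a null set (its measure is bounded by the probability of reaching an instance of $c'$ times $\pfx(c',p')$), so almost every $\rho\in E$ has its entire paired path inside $H$; choosing $\rho$ in that co-null set makes both your detour through the \ac{BSCC} of $M_\mathcal{A}$ and the backward propagation via Lemma~\ref{lemma:nex-in-H} unnecessary. For comparison, the paper's proof is terser here: it asserts membership in $H$ directly from the observation that every suffix of a final B\"uchi run is final, and then lets the path be absorbed into a \ac{BSCC} of $H$; the almost-everywhere argument above is what makes either version precise, and it is the ingredient your write-up gestures at (you correctly identify positivity as the main obstacle) but does not actually supply.
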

\begin{proof}
If a node $(c, p)$ of $G$ is in $H$,
then semi-configuration $c$ is pending with positive probability
in a run $\rho$ of $\Delta(\mathcal{A})$ passing through it,
and $\Lambda(\rho)$ yields a final run $\tau$ in $\mathcal{B}$.
If $\tau$ is final, then it visits a final state of $\mathcal{B}$ infinitely often.
Recall that $\sigma(\rho)$ and $\sigma(\tau)$ are synchronized,
i.e., edges taken by $\sigma(\rho)$ and $\sigma(\tau)$ in $\mathcal{A}$'s and $\mathcal{B}$'s
respective support graphs have the same type.
Thus, $\sigma(\rho)$ and $\sigma(\tau)$ can be paired node-by-node, yielding a path in $G$.
All nodes of this path represent a final run of $\mathcal{B}$,
because each suffix of a final run of $\mathcal{B}$ is also final,
due to the B\"uchi acceptance condition.
Hence they are also in $H$, and the path must be absorbed by one of $H$'s bottom \acp{SCC}.
If a final state of $\mathcal{B}$ appears in $\sigma(\tau)$,
then it also appears in the \ac{SCC}, and we are done.
Otherwise, it must be part of a support represented by an edge between two nodes of the \ac{SCC} in $H$.
\end{proof}
As a consequence of Lemma~\ref{lemma:H-final}, all \acp{BSCC} of $H$
contain a final node or edge.
This completes the proof that \ref{item:scc-cond-H} implies \ref{item:G-cond-final}.

The proof that condition \ref{item:scc-cond-H} implies
property~\ref{item:G-cond-ancestor} of Theorem~\ref{thm:conditions-H} is more involved,
because the backward-determinism condition of the support graph of $\mathcal{B}$
implies that for each support edge $c_1 \ssupp c_2$ in $M_\mathcal{A}$
and node $(c_2, p_2)$ in $G$ there may be multiple $p$'s such that
$(c_1, p) \ssupp (c_2, p_2)$ is an edge in $G$.
Moreover, transitions in $M_\mathcal{A}$ may represent both push and support edges,
which may result in separate edges in $G$.
Thus, we define a multigraph $M'_\mathcal{A}$ that has the same nodes,
push and shift edges as $M_\mathcal{A}$, but some edges may appear with a multiplicity higher than one.
Consider a support edge $(c_1, c_2)$ of $M_\mathcal{A}$, and two runs $\rho_1$ and $\rho_2$
of $\mathcal{A}$ that link $c_1$ and $c_2$ with a chain support.
Given a semi-configuration $d_2$ of $\mathcal{B}$, since $\mathcal{B}$ is separated,
there is only one semi-configuration $d_1$ that is linked by $\Lambda_\varepsilon(\rho_1)$ to $d_2$.
Thus, we define an equivalence relation between runs so that $\rho_1$ and $\rho_2$
are equivalent iff, for each semi-configuration $d$ of $\mathcal{B}$,
$\Lambda_\varepsilon(\rho_1)$ links to $d$ the same semi-configuration as $\Lambda_\varepsilon(\rho_2)$.
For each support edge $(c_1, c_2)$ of $M_\mathcal{A}$, in $M'_\mathcal{A}$ $c_1$ and $c_2$ are linked
by as many edges as the number of equivalence classes of runs of $\mathcal{A}$ between $c_1$ and $c_2$.
We proceed similarly if $(c_1, c_2)$ also represents a push edge
(in this case $\rho_1$ and $\rho_2$ only consist of a push move).
This way, for each edge $(c_1, c_2)$ of $M'_\mathcal{A}$ and semi-configuration $d_2$ of $\mathcal{B}$,
there is exactly one semi-configuration $d_1$ of $\mathcal{B}$ linked to $d_2$,
and the support graph of $\mathcal{B}$ is backward-deterministic with respect to
all edges of $M'_\mathcal{A}$ (not just pure push and shift edges).
Since $\mathcal{B}$ is complete, each string that is a support of $\mathcal{A}$ between
$c_1$ and $c_2$ is represented by one of the edges between them in $M'_\mathcal{A}$.
Thus, the probability associated to the support edge between $c_1$ and $c_2$ in $M_\mathcal{A}$
is divided among all edges between them in $M'_\mathcal{A}$.

\begin{lemma}
\label{lemma:SCC-ancestor}
Let $C$ be an \ac{SCC} of $G$ and $K = \pi_\mathcal{A}(C)$ the corresponding \ac{SCC} of $M'_\mathcal{A}$.
The following statements are equivalent.
\begin{enumerate}[label=(\roman*)]
\item \label{item:all-edges-in-C}
For each edge $c_1 \gedge c_2$ of $K$ and node $(c_2, p_2)$ of $C$,
there is a state $p_1$ such that the edge $(c_1, p_1) \gedge (c_2, p_2)$ is in $C$.
\item \label{item:paths-K-C}
Every finite path in $K$ is the projection on the first component of some path in $C$.
\item \label{item:C-no-ancestor}
No \ac{SCC} $C'$ of $G$ such that $K = \pi_\mathcal{A}(C')$ is an ancestor of $C$.
\end{enumerate}
\end{lemma}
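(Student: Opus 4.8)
The plan is to take \ref{item:all-edges-in-C} as the pivot and prove \ref{item:all-edges-in-C}$\Rightarrow$\ref{item:paths-K-C}, \ref{item:all-edges-in-C}$\Rightarrow$\ref{item:C-no-ancestor} by hand, and then close the loop with the two reverse implications \ref{item:paths-K-C}$\Rightarrow$\ref{item:all-edges-in-C} and \ref{item:C-no-ancestor}$\Rightarrow$\ref{item:all-edges-in-C}, which are where the real work lies. Throughout I would exploit the standing assumption $K=\pi_\mathcal{A}(C)$ and work inside the subgraph $G_K$ of $G$ induced by nodes whose first component lies in $K$, with edges lifting the edges of $K$. Two structural facts set up everything. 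First, by the very construction of $M'_\mathcal{A}$ preceding the lemma, together with Prop.~\ref{prop:sep-backward} (Def.~\ref{def:backward-det}) and completeness of $\mathcal{B}$, the map $G_K\to K$ is \emph{backward deterministic and backward total}: for every node $(c_2,p_2)$ of $G_K$ and every edge $c_1\gedge c_2$ of $K$ (counting multiplicities in $M'_\mathcal{A}$) there is \emph{exactly one} predecessor $(c_1,p_1)$ with $(c_1,p_1)\gedge(c_2,p_2)$ in $G$. Second, since SCCs of $M'_\mathcal{A}$ are maximal, any $G$-path between two SCCs both projecting into $K$ projects to a walk of $M'_\mathcal{A}$ that never leaves $K$, so such a path stays inside $G_K$.

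The two forward implications are then short. For \ref{item:all-edges-in-C}$\Rightarrow$\ref{item:paths-K-C}, given a finite path $c_{i_0}\gedge\dots\gedge c_{i_n}$ in $K$, I would pick any node $(c_{i_n},p_{i_n})\in C$ (one exists because $\pi_\mathcal{A}(C)=K$) and lift the path \emph{backwards}, repeatedly applying \ref{item:all-edges-in-C} to produce $(c_{i_{p}},p_{i_{p}})\in C$ with an in-$C$ edge to $(c_{i_{p+1}},p_{i_{p+1}})$; the resulting path in $C$ projects to the given one. For \ref{item:all-edges-in-C}$\Rightarrow$\ref{item:C-no-ancestor} I would argue the contrapositive: if some SCC $C'\ne C$ with $\pi_\mathcal{A}(C')=K$ is an ancestor of $C$, take a $C'\to C$ path (which stays in $G_K$ by the remark above) and let $(c_1,p_1)\gedge(c_2,p_2)$ be its first edge entering $C$, so $(c_2,p_2)\in C$ but $(c_1,p_1)\notin C$. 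By backward determinism $(c_1,p_1)$ is the \emph{unique} predecessor of $(c_2,p_2)$ via the $K$-edge $c_1\gedge c_2$, hence $(c_2,p_2)$ has no in-$C$ predecessor along that edge and \ref{item:all-edges-in-C} fails.

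For the reverse directions I would first extract a \emph{covering cycle}. From \ref{item:paths-K-C}, lifting each single $K$-edge shows that every edge of $K$ is realized by at least one edge of $C$; since $C$ is strongly connected, concatenating these representatives yields a closed walk in $C$ whose projection traverses every edge of $K$. The same conclusion can be reached from \ref{item:C-no-ancestor}: starting from a candidate escaping predecessor and iterating the backward lift along a fixed closed covering walk $W_0$ of $K$, finiteness forces (by pigeonhole) a node to repeat, producing a cycle of $G_K$ that projects onto a power of $W_0$ and hence covers $K$; this cycle lies in an SCC $D$ with $\pi_\mathcal{A}(D)=K$ that is an ancestor of $C$, and one checks $D\ne C$ exactly when the predecessor escaped $C$. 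Against this backbone I would then invoke the recurrence structure of \cite[Theorem 5.10]{CourcoubetisY95}, specialized to the backward-deterministic, complete product $G_K$ over the strongly connected base $K$: it identifies a \emph{unique} recurrent SCC that carries a $K$-covering cycle, and lets me match the SCC produced by the covering-cycle argument with the one satisfying \ref{item:all-edges-in-C}, thereby closing \ref{item:paths-K-C}$\Rightarrow$\ref{item:all-edges-in-C} and \ref{item:C-no-ancestor}$\Rightarrow$\ref{item:all-edges-in-C}.

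The hard part will be precisely this uniqueness step. Backward determinism of a \emph{single} edge is not enough: two incomparable SCCs over the same $K$, each with its own covering cycle, would be locally backward deterministic yet would break uniqueness. What rules this out is \emph{separation} in its infinite-word form (Def.~\ref{def:separated}): the recurrent word read while looping forces, via separation, a unique $\mathcal{B}$-state at each recurrence, collapsing any parallel recurrent lift into $C$. Turning this infinite-word statement into the finitary graph conditions \ref{item:all-edges-in-C}--\ref{item:C-no-ancestor} on $M'_\mathcal{A}$ and $G$, and dovetailing it with the quantitative recurrence machinery of \cite{CourcoubetisY95}, is the delicate core of the argument, and the one I would write out in full in Section~\ref{sec:qualitative-mc-proof}.
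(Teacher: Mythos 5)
Your setup and three of the four implications are sound and essentially coincide with the paper's proof: (i)$\Rightarrow$(ii) by backward lifting inside $C$; (i)$\Rightarrow$(iii) by taking the first edge of a $D$-to-$C$ path that enters $C$ and invoking uniqueness of backward lifts; and your pigeonhole argument for $\neg$(i)$\Rightarrow\neg$(iii) (iterate the backward lift of a covering closed walk from the escaping predecessor until a node repeats, obtaining a cycle whose \ac{SCC} $D$ satisfies $\pi_\mathcal{A}(D)=K$, is an ancestor of $C$, and is distinct from $C$) is, if anything, a more careful rendering of the paper's terse claim that the escaping node ``must be part of'' such a $D$. The genuine gap is the remaining implication out of (ii): with only (i)$\Rightarrow$(ii), (i)$\Rightarrow$(iii) and (iii)$\Rightarrow$(i) in hand, nothing you prove excludes that (ii) holds while (i) and (iii) both fail, so the equivalence is not closed; and your plan to ``match the SCC produced by the covering-cycle argument with the one satisfying (i)'' via a uniqueness property is not an argument but a placeholder.

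Worse, the route you sketch for that step rests on a misreading of the statement. Lemma~\ref{lemma:SCC-ancestor} asserts an equivalence for each \emph{fixed} \ac{SCC} $C$; it asserts no uniqueness, and uniqueness is false at this purely graph-theoretic level: two incomparable \acp{SCC} over the same $K$, each satisfying (i)--(iii), are perfectly compatible with backward determinism (take $K$ a two-node cycle and two disjoint lifted two-cycles above it). Nothing needs to ``collapse parallel recurrent lifts into $C$''; separation has already been fully spent in making backward lifts unique per edge of $M'_\mathcal{A}$ (this is exactly why the paper introduces the multigraph), and uniqueness of the accepting \ac{SCC} is established only later, in Lemma~\ref{lemma:all-in-C}, using final states and $\pfx$---not here. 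What is actually needed is a finitary proof of (ii)$\Rightarrow$(i), i.e., $\neg$(i)$\Rightarrow\neg$(ii), which the paper gets by rerunning the argument of \cite[Theorem~5.10]{CourcoubetisY95} with $M'_\mathcal{A}$-edges fixed individually. One self-contained way: suppose the $e$-lift into $(c_2,p_2)\in C$ escapes $C$; for a finite path $v$ of $K$, let $T(v)$ be the set of nodes of $C$ above the endpoint of $v$ whose unique backward lift of $v$ stays inside $C$; prepending an edge to $v$ can only shrink $T(v)$, so if (ii) held (i.e., $T(v)\neq\emptyset$ for every $v$), the survivor sets of minimal cardinality would be mapped injectively and without losses by every backward step, their union would be a nonempty subset of $C$ closed under all backward steps, and backward strong connectivity of $C$ would force $(c_2,p_2)$ into that closed set---contradicting that its $e$-lift escapes. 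Some argument of this kind is indispensable, and it appears nowhere in your proposal.
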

\begin{proof}
This proof employs an idea from \cite[Theorem 5.10]{CourcoubetisY95}.
In \cite{CourcoubetisY95}, the proof relies on the fact that for each node $(c_2, p_2)$ in $G$,
if there is an arc $(c_1, c_2)$ in $K$ there is exactly one $d_1$
such that $G$ has an arc between $(c_1, p_1)$ and $(c_2, p_2)$.
This is not true in our case, because different support edges $(c_1, c_2)$ in $M'_\mathcal{A}$
may link several nodes $(c_1, p_1)$ with different $p_1$'s to the same $(c_2, p_2)$.
However, the argument still holds thanks to the universal quantification on edges between nodes in $K$
in \ref{item:all-edges-in-C}.
If we fix one of such edges, then exactly one node $(c_1, p_1)$ is linked to each node $(c_2, p_2)$ of $G$.

The proof that \ref{item:all-edges-in-C} is equivalent to \ref{item:paths-K-C} remains roughly the same,
with the difference that anytime a path in $K$ is fixed, it cannot be identified
with just the sequence of nodes it visits, but also by the edges it takes.

The equivalence between \ref{item:all-edges-in-C} and \ref{item:C-no-ancestor} has some more caveats,
so we re-prove it here.
Suppose that \ref{item:all-edges-in-C} is false, and for some node $(c_2, p_2)$ in $C$,
$G$ has an arc $(c_1, p_1) \rightarrow (c_2, p_2)$ such that $(c_1, p_1)$ is not in $C$ despite
$(c_1, c_2)$ being an edge in $K$.
All edges incoming and outgoing from $c_1$ must have an edge resp.\ incoming and outgoing
from $(c_1, p_1)$ in $G$, and so do all edges of $M'_\mathcal{A}$ reachable from them.
Since $c_1$ is in $K$, it is connected to all other nodes in $K$,
and $(c_1, d_1)$ must be part of another \ac{SCC} $D$ of $G$ such that $K = \pi_\mathcal{A}(D)$.
The edge $(c_1, p_1) \rightarrow (c_2, p_2)$ connects $D$ to $C$, so $D$ is an ancestor of $C$.

Conversely, suppose $C$ has an ancestor \ac{SCC} $D$.
Then $G$ contains some edge $(c_1, p_1) \rightarrow (c_2, p_2)$ such that $(c_2, p_2)$ is in $C$
but $(c_1, p_1)$ is not, and there is a path from a node $(c_3, p_3)$ in $D$ to $(c_1, p_1)$.
Node $(c_1, p_1)$ is linked to $(c_2, p_2)$ by one or more edges
between $c_1$ and $p_1$ in $M'_\mathcal{A}$.
However, all of such edges associate $(c_2, p_2)$ to $(c_1, p_1)$ only,
and since $(c_1, p_1)$ is not part of $C$, property \ref{item:all-edges-in-C} is violated for them.
\end{proof}

The following lemma is instrumental in proving that condition \ref{item:scc-cond-H} implies
property~\ref{item:G-cond-ancestor}:
\begin{lemma}
\label{lemma:all-in-C}
Let $C$ be a \ac{SCC} of $G$ that satisfies properties \ref{item:G-cond-proj}--\ref{item:G-cond-ancestor} of Theorem~\ref{thm:conditions-H}, and $K = \pi_\mathcal{A}(C)$.
Then each node $(c, p) \in G$ such that $c \in K$ and $\pfx(c, p) > 0$ is in $C$.
\end{lemma}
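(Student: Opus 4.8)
The plan is to leverage the fact that, since $C$ satisfies property~\ref{item:G-cond-ancestor} of Theorem~\ref{thm:conditions-H}, it also satisfies the equivalent conditions~\ref{item:all-edges-in-C} and~\ref{item:paths-K-C} of Lemma~\ref{lemma:SCC-ancestor}. Working in the multigraph $M'_\mathcal{A}$, which shares its nodes with $M_\mathcal{A}$, condition~\ref{item:all-edges-in-C} together with the backward determinism induced by separation (Prop.~\ref{prop:sep-backward}) gives that $C$ is \emph{backward closed}: for every edge $c_1 \gedge c_2$ of $K$ and node $(c_2, p_2) \in C$, the \emph{unique} $G$-predecessor $(c_1, p_1)$ of $(c_2, p_2)$ along that edge lies in $C$. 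Thus it suffices to show that some node of a suitable path starting at $(c, p)$ belongs to $C$: backward closure then drags the whole path back into $C$, down to $(c, p)$ itself.

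The second ingredient is a uniqueness statement lifted from separation: for any path $\pi$ of $M'_\mathcal{A}$ there is at most one final-run path of $G$ over $\pi$. Indeed, two such paths would differ at some position in the $\mathcal{B}$-state, say $p^1 \neq p^2$; since both project to the same $M'_\mathcal{A}$-edges, the induced supports lie in the same equivalence classes, so a single word $w$ realizing $\pi$ is read by both with a final run of $\mathcal{B}$. Hence $w \in L_\mathcal{B}(p^1, \beta_1\bot)$ and $w \in L_\mathcal{B}(p^2, \beta_2\bot)$ with $\symb{\beta_1} = \symb{\beta_2}$, contradicting Def.~\ref{def:separated}.

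With these tools the argument runs as follows. From $\pfx(c, p) > 0$ I extract a run $\rho$ of $\Delta(\mathcal{A})$ of positive probability in which $c$ is pending, $\Lambda_\varepsilon(\rho) \in L_\mathcal{B}(p, \beta\bot)$ for a suitable $\beta$, and---since $c$ lies in the \ac{BSCC} $K$---$\sigma(\rho)$ almost surely traverses every node and every edge of $K$ infinitely often. Lifting $\sigma(\rho)$ to a path $\pi$ of $M'_\mathcal{A}$ and pairing it with the final $\mathcal{B}$-run (Theorem~\ref{thm:support-chain}, Lemma~\ref{lemma:paths-G-supp}) yields a final-run path $t$ of $G$ over $\pi$ starting at $(c, p)$. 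I then realize $\pi$ \emph{inside} $C$: using the covering property~\ref{item:paths-K-C}, every finite prefix of $\pi$ is the projection of a path in $C$, and a König's-lemma argument produces an infinite $C$-path over $\pi$; because $C$ is final (property~\ref{item:G-cond-final}) and $\pi$ visits $C$'s final node or edge infinitely often, this path can be chosen to be a final run. By the uniqueness statement it coincides with $t$, so the nodes of $t$---in particular its starting node $(c, p)$---belong to $C$.

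The main obstacle is the last step: coordinating ``same path $\pi$'', ``inside $C$'', and ``final run'' simultaneously. Condition~\ref{item:paths-K-C} only guarantees covering paths for finite prefixes and says nothing about finality, so one must combine backward closure (to keep the reconstructed $\mathcal{B}$-states within $C$) with the fact that $\pi$ hits the final element of $C$ infinitely often, and then pass to the infinite limit by compactness. The supporting uniqueness claim also requires care, as it rests on lifting separation from words to $M'_\mathcal{A}$-paths through the equivalence classes of supports; this is precisely where the refinement from $M_\mathcal{A}$ to $M'_\mathcal{A}$ is essential.
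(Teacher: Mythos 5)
Your steps (A) and (B) are sound and mirror ingredients of the paper's own proof: backward closure of $C$ (property~\ref{item:all-edges-in-C} of Lemma~\ref{lemma:SCC-ancestor} plus backward determinism) is exactly what the paper uses when the lifted path reaches the final node, and your uniqueness claim for final-run paths over a fixed $M'_\mathcal{A}$-path is the same separation argument the paper invokes to force $p'=p$. The genuine gap is in your step 3, the construction of a \emph{final}-run path in $C$ over $\pi$. K\"onig's lemma together with property~\ref{item:paths-K-C} does give an infinite $C$-path over $\pi$, but your justification for finality --- ``$C$ is final and $\pi$ visits $C$'s final node or edge infinitely often'' --- conflates the final node $(c^f,p^f)$ with its projection $c^f$. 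What $\pi$ visits infinitely often is $c^f$; the fiber of $C$ over $c^f$ may contain several nodes, and by backward determinism a lift over $\pi$ is completely rigid once its value at any single position is fixed, so compactness gives you no control over \emph{which} fiber elements the limit path passes through. Nothing in properties \ref{item:G-cond-proj}--\ref{item:G-cond-ancestor} forbids the fiber from containing non-final nodes, and your argument does not rule out that every $C$-path over $\pi$ threads exclusively through them.

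Closing this hole requires a substantive new ingredient, and it is precisely where the paper spends most of its effort: it first proves $\pfx(c^f,p^f)>0$ by contradiction (using property~\ref{item:paths-K-C} to re-route any escaping path back inside $C$, plus \ac{BSCC} recurrence), and from that probabilistic fact deduces that $C$-lifts of almost every run visit $(c^f,p^f)$ infinitely often; only then does separation enter. Alternatively, your pointwise strategy can be salvaged combinatorially: using strong connectivity of $C$ and backward closure, one can build a single closed walk $V^*$ at $c^f$ --- concatenating, in a pursuit fashion, projections of $C$-paths from $(c^f,p^f)$ to the various fiber elements, so that every backward orbit is driven into $(c^f,p^f)$ after at most fiber-many rounds --- with the property that \emph{every} $C$-lift of $V^*$ passes through $(c^f,p^f)$; since almost every run traverses $V^*$ infinitely often, choosing $\rho$ accordingly makes your K\"onig-plus-uniqueness argument go through. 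Either way, the missing step is the heart of the lemma, not a routine limit argument, and the case where $C$ contains only a final \emph{edge} (so that finality lives inside a support and must be tracked through the equivalence classes defining $M'_\mathcal{A}$) still needs the separate treatment the paper gives it.
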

\begin{proof}
By Lemma~\ref{lemma:conditions-H}, $C$ contains a final node or edge.
First, suppose it is a node $(c^f, p^f)$.
We first prove that $\pfx(c^f, p^f) > 0$.
Suppose for the sake of contradiction that $\pfx(c^f, p^f) = 0$.
$K$ is a bottom \ac{SCC} of $M_\mathcal{A}$, hence runs starting from $c^f$ occur with positive probability.
Thus, the only way of having $\pfx(c^f, p^f) = 0$ is if almost all runs starting from it
are labeled with a word not accepted by $\mathcal{B}$.
Node $(c^f, p^f)$ is reachable from all nodes in $C$ through a path
whose projection on the first component is one or more paths in $M'_\mathcal{A}$ which only visit nodes in $K$.
Since $K$ is a bottom \ac{SCC} of $M_\mathcal{A}$,
one of these paths is eventually visited almost surely, and thus $(c^f, p^f)$ is visited infinitely often,
which contradicts the assumption that almost all runs starting from it are not accepted by $\mathcal{B}$.
Hence, almost all runs starting from $(c^f, p^f)$ must at some point leave $C$.
Take any of these paths, and let $(c^e, p^e)$ be its first node outside $C$.
Since $K$ is a bottom \ac{SCC}, $c^e$ must still be in $K$,
and there is a path in $M'_\mathcal{A}$ between $c^f$ and $c^e$.
By property \ref{item:paths-K-C} of Lemma~\ref{lemma:SCC-ancestor},
there is another path in $C$ that is the projection on the first component of the former.
Hence, for each run $\rho$ of $\mathcal{A}$ starting from $c^f$,
there exists a path in $C$ whose projection is $\sigma(\rho)$,
which visits $(c^f, p^f)$ infinitely often almost surely.
Thus, $\pfx(c^f, p^f) > 0$.

Let $(c, p) \in G$ be any node such that $c = (u, \alpha) \in K$ and $\pfx(c, p) > 0$.
Then there is a run $\rho$ of $\mathcal{A}$ starting from $c$ such that
$\Lambda_\varepsilon(\rho) \in L_\mathcal{B}(p, \beta \bot)$
for some stack symbol $\beta$ such that $\symb{\beta} = \symb{\alpha}$.
$\sigma(\rho)$ is the projection of a path $\theta$ in $C$ starting from $(c, p)$.
The case $(c, p) = (c^f, p^f)$ is trivial.
If this is not the case, but $\theta$ reaches $(c^f, p^f)$,
then we can prove by induction that $(c, p) \in C$,
because the path between $c$ and $c^f$ must be in $K$.

Otherwise, suppose $\theta$ never reaches $(c^f, p^f)$.
Then $\theta$ reaches another \ac{SCC} $C'$ of $G$ disjoint from $C$.
Since $K$ is a bottom \ac{SCC}, $\pi_\mathcal{A}(\theta)$ must visit all nodes in $K$ with probability 1.
Thus, $K = \pi_\mathcal{A}(C')$.
Since $K$ is a bottom \ac{SCC}, the whole $\sigma(\rho)$ must only contain nodes in $K$.
So, $\sigma(\rho)$ is also the projection on the first component of some path in $C$
(or property \ref{item:paths-K-C} of Lemma~\ref{lemma:SCC-ancestor}
would be violated at some point in $\sigma(\rho)$).
Let $(c, p') \in C$ be the first node of this path.
Since paths whose projection is in $C$ visit $(c^f, p^f)$ infinitely often with probability 1,
we have $\Lambda_\varepsilon(\rho) \in L_\mathcal{B}(q', \beta' \bot)$,
for some stack symbol $\beta'$ such that $\symb{\beta'} = \symb{\alpha}$.
But since $\mathcal{B}$ is separated, we must have $p' = p$,
which contradicts the claim that $C'$ and $C$ are disjoint.

The case in which $C$ contains no final node, but a final edge $(c^f_1, p^f_1) \gedge (c^f_2, p^f_2)$
can be analyzed by noting that the set of finite words it represents is captured by one or more edges between
$c^f_1$ and $c^f_2$ in $M'_\mathcal{A}$ that can all be taken with positive probability.
All arguments made for $(c^f, p^f)$ can then be adapted by considering
paths in $G$ that start with $(c^f_1, p^f_1) \gedge (c^f_2, p^f_2)$ or reach $(c^f_1, p^f_1)$.
After reaching $(c^f_1, p^f_1)$, the final edge can then be taken with positive probability, reaching $(c^f_2, p^f_2)$.
Thus, we can prove that the final edge $(c^f_1, p^f_1) \gedge (c^f_2, p^f_2)$ is visited infinitely often almost surely
by runs that are projections of a path in $C$, and the rest of the proof can be adapted consequently.
\end{proof}

We can finally prove Lemma~\ref{lemma:conditions-H}.
\begin{proof}
We already proved that condition \ref{item:scc-cond-H} implies
properties \ref{item:G-cond-proj} and \ref{item:G-cond-final} of Theorem~\ref{thm:conditions-H}
in Lemmas~\ref{lemma:H-cond-proj} and \ref{lemma:H-final}.
%
By contradiction, suppose that $C$ is a \ac{BSCC} of $H$,
but property \ref{item:G-cond-ancestor} does not hold:
then by Lemma~\ref{lemma:SCC-ancestor} there is a path $\tau$ within nodes in $K = \pi_\mathcal{A}(C)$
that is the projection of no path in $C$.
Since $K$ is a bottom \ac{SCC} of $M_\mathcal{A}$, and hence of $M'_\mathcal{A}$,
$\tau$ is visited infinitely often almost surely by every run of $\Delta(\mathcal{A})$ passing through a node in $K$.
Thus, a path through any node of $C$ will almost surely eventually reach a node in $\tau$ that is the projection
of a node $(c, p)$ of $G$ that is not in $C$.
However, $\pfx(c, p) > 0$, so $(c, p)$ is also a node of $H$ reachable from $C$,
which contradicts the assumption that $C$ is a \ac{BSCC} of $H$.
Thus, we have proved that condition \ref{item:scc-cond-H} implies
properties \ref{item:G-cond-proj}--\ref{item:G-cond-ancestor} of Theorem~\ref{thm:conditions-H}.

The converse follows from Lemma~\ref{lemma:all-in-C}:
since $C$ contains all nodes with positive $\pfx$,
it must be the only \ac{SCC} of $H$ whose projection on the first component is $K$.
In fact, suppose another \ac{SCC} $D$ of $G$ contains a node $(c, p)$ with $\pfx(c, p) > 0$ and $c \in K$.
Since $\pfx(c, p) > 0$, $(c, p)$ must be also in $C$, and since $C$ is strongly connected, $C \subseteq D$.
Similarly, since $D$ is also strongly connected, $D \subseteq C$.

Further, suppose $C$ is not bottom.
Then there is another \ac{SCC} $D$ of $H$ reachable from it.
However, $\pi_\mathcal{A}(C) = K$, and $K$ is a bottom \ac{SCC}.
So, $\pi_\mathcal{A}(D) \subseteq K$, which implies $D = C$ by our previous argument.

We conclude by proving that conditions \ref{item:scc-cond-G} and \ref{item:scc-cond-a-c} are equivalent.
Equivalence of \ref{item:prop-a} and \ref{item:prop-c} to, respectively,
properties \ref{item:G-cond-proj} and \ref{item:G-cond-final} is trivial.
Property \ref{item:prop-b} is equivalent to \ref{item:G-cond-ancestor} due to the equivalence of properties
\ref{item:paths-K-C} and \ref{item:C-no-ancestor} of Lemma~\ref{lemma:SCC-ancestor},
and the fact that $M'_\mathcal{A}$ has edges covering all possible supports and push moves,
and states visited by $\mathcal{B}$ while reading the strings they represent.
\end{proof}

\subsection{Quantitative Model Checking}
\label{sec:quantitative-mc}


First, we apply the qualitative model checking algorithm
described in Section~\ref{sec:qualitative-mc} to build graph $H$.
Performing quantitative model checking amounts to computing the probability
that a \ac{BSCC} of $H$ is reachable from an initial state of $\mathcal{B}$.
To do so, we introduce the probabilities $\nex{c, q}$, for $c \in \mathcal{C}$ and $q \in Q_\mathcal{B}$,
such that we have $\Lambda_\varepsilon(\rho) \in L_\mathcal{B}(q, \beta\bot)$ for a computation $\rho$ starting in $c$
in which semi-configuration $c = (u,\alpha)$ of $\mathcal{A}$ with $\symb{\beta} = \symb{\alpha}$ is pending.
Since these probabilities are in general irrational, we encode them in a system of polynomial equations.
Then, if we want to check whether $P(\Lambda_\varepsilon(\mathit{Runs}(\mathcal{A})) \cap L_\varphi) \geq \varrho$
for a rational number $\varrho$, we check whether $\sum_{q \in I_\mathcal{B}} \nex{(u_0,\bot), q} \geq \varrho$
with a solver for \ac{ETR}.

In this system, we must identify the probabilities of edges in $H$.
We cannot just re-use probabilities from the support chain $M_\mathcal{A}$
because each support edge may appear in $H$ multiple times.
Thus, we must compute the weight of each occurrence of each support edge in $H$
as a fraction of its corresponding edge in $M_\mathcal{A}$.

We start by encoding the probabilities $\nex{c}$ for all $c \in \mathcal{C}$.
We do so by adding the equation (1a) $\mathbf{v} = f(\mathbf{v})$ of termination probabilities
for all semi-configurations and states of $\mathcal{A}$,
together with (1b) $\mathbf{v} \geq 0$;
for each $c = (s,\alpha) \in \mathcal{C}$ we add:
(1c) $y_c = 1 - \sum_{v \in Q_\mathcal{A}} \pvar{s}{\alpha}{v}$,
and (1d) $y_c > 0$ if $\nex{c} > 0$ or (1e) $y_c = 0$ if $\nex{c} = 0$,
according to the analysis done while building the support chain.
This way, we encode each probability $\nex{c}$ as variable $y_c$.

Then we encode probabilities $\nex{c, q}$, for all pending $c = (u,\alpha) \in \mathcal{C}$ and $q \in Q_\mathcal{B}$
in variables $z_{c,q}$ by means of the following constraints:
\begin{itemize}
\item (2a) $\sum_{p \in Q_\mathcal{B}} z_{c,p} = 1$,
\item (2b) if $\Lambda(u) \doteq \symb{\alpha}$,
\[
z_{c,q} = \smashoperator{\sum_{(c',q') \in H \mid c \sshift c'}} \delta_{M_\mathcal{A}}(c,c') z_{c',q'}
\]
where $\delta_{M_\mathcal{A}}(c,c')$ is the transition relation of the support chain $M_\mathcal{A}$;
\item (2c) if $\Lambda(u) \lessdot \symb{\alpha}$,
\[
z_{c,q} = \smashoperator{\sum_{\substack{(c',q') \in H \, \mid \, c \spush c' \land \\ (q, \, \Lambda(u), \, q') \in \delta^\mathcal{B}_\mathit{push}}}} P_\mathit{push}(c,c') \frac{\nex{c'}}{\nex{c}} z_{c',q'}
+ \smashoperator{\sum_{(c', q') \in H \mid c \ssupp c'}} P_\mathit{supp}(c,c') \frac{\nex{c'}}{\nex{c}} \mu(c,q,c',q') z_{c',q'}
\]
where $P_\mathit{push}(c,c')$ and $P_\mathit{supp}(c,c')$ are as in the definition of $\delta_{M_\mathcal{A}}(c,c')$
for the support chain, and $\mu(c,q,c',q')$ is the weight fraction of the support edge going from $(c,q)$ to $(c',q')$ in $H$
w.r.t.\ the probability of edge $(c,c')$ in the support chain.
\end{itemize}

Note that, given a support edge $c \ssupp c'$ in $M_\mathcal{A}$ and $q' \in Q_\mathcal{B}$ such that $(c', q') \in H$, we have
\begin{equation}
\smashoperator{\sum_{q \mid (c,q) \in H}} \mu(c,q,c',q') = 1.
\label{eq:weight-sum}
\end{equation}

Then, we need to compute the values of $\mu(c,q,c',q')$ for all $c,c' \in \mathcal{C}$ and $q,q' \in Q_\mathcal{B}$.
To do so, we take the synchronized product $\hat{\mathcal{A}}$ between $\mathcal{A}$ and $\mathcal{B}$
we introduced in Section~\ref{sec:qualitative-mc} to build graph $G$,
and we add to all of its edges the probabilities of the corresponding transitions of $\mathcal{A}$.
We obtain a weighted \ac{OPBA}, whose weights on edges do not necessarily sum to 1.
We then define a vector $\mathbf{h}$ of variables $h(\hat{c}, 
\hat{v})$ for each semi-configuration $\hat{c} = \bigl((u,q), [a, (w, s)] \bigr)$ 
and state $\hat{v} = (v,p)$ of $\hat{\mathcal{A}}$,
that encode the weights of all paths in $\hat{\mathcal{A}}$
such that $\mathcal{A}$ has a sequence of transitions $\rho = (u, [a, w] A)\dots(v, A)$ 
in which no symbol in $A$ is ever popped,
and $\mathcal{B}$ has a sequence of transitions from configuration $\config{\Lambda_\varepsilon(\rho)}{q}{[a,s]B}$
to configuration $\config{\varepsilon}{p}{B}$ for some $B$, in which again no symbol of $B$ is ever popped.
These variables can be encoded as the solutions of a system (3a) $\mathbf{h} = g(\mathbf{h})$
where $g$ is defined by applying the equations for termination probabilities of system $f$ to $\hat{\mathcal{A}}$,
together with
(3b) $\mathbf{h} \geq 0$ and
(3c) $\sum_{q \in Q_{\mathcal{B}}} h(\hat{c}, \hat{v}) = \pvar{u}{[a,w]}{v}$
for all $\hat{c} = \bigl((u,q), [a, (w, s)] \bigr)$, and $\hat{v} = (v,p)$,  
such that $c = (u,[a,w]) \in \mathcal{C}$, $v \in Q_\mathcal{A}$, $p,s \in Q_\mathcal{B}$,
and $(p, \bot)$ is reachable from $(s, \bot)$ in $\mathcal{B}$.

Finally, we identify $\mu(c,q,c',q')$, where $c = (u,\alpha)$, $c' = (u',\alpha)$ and $a = \Lambda(u)$, with
\[
\text{(3d)} \ 
P_\mathit{supp}(c,c') \mu(c,q,c',q') = \smashoperator{\sum_{\hat{u} = \hat{\delta}_{push}\bigl((u,q), a\bigr)}} h\bigl((\hat{u}, [a, (u,q)]), (u', q')\bigr)
\]
(recall that in semi-con\-fi\-gu\-ra\-tions linked by support edges the stack symbol is the same).

The system made of (1a)--(1c), (2a)--(2c) and (3a)--(3d) is made of polynomial equations,
so quantitative model checking queries can be solved as sentences in \ac{ETR}.
The system in equation (3a) is of size polynomial in $\mathcal{A}$ and $\mathcal{B}$,
so such queries can be decided in space polynomial in them.
If $\mathcal{B}$ encodes a \acs{POTLF} formula, then its size is exponential in formula length.
Thus, quantitative model checking can be done in space exponential in formula length.

\begin{example}[Running example, cont. \ref{exa:mc}]
\begin{figure}
\centering
\begin{tikzpicture}
  [state/.style={draw, ellipse, inner sep=1pt}, >=latex, font=\scriptsize,
   final/.style={}
  ]
\node[state, initial by arrow, initial text=] (n0) {$c_0,q_0$};
\node[state] (n6) [right=15pt of n0, final] {$c_1,q_1$};
\node[state] (n7) [below of=n6, final] {$c_2,q_1$};
\node[state] (n1) [right=15pt of n6] {$c_1,q_0$};
\node[state] (n2) [below of=n1] {$c_2,q_0$};
\node[state] (n4) [below of=n0, final] {$c_3,q_3$};
\node[state] (n5) [below of=n4, final] {$c_4,q_3$};
\path[->] (n0) edge[out=20, in=160, above] node{$\nicefrac{1}{2}$} (n1)
          (n0) edge[below] node[yshift=2pt]{$\nicefrac{1}{2}$} (n6)
          (n6) edge[out=20, in=165, below, pos=.4] node[yshift=3pt]{$\nicefrac{1}{6}$} (n1)
          (n1) edge[final, out=195, in=340, below, pos=.9] node[yshift=1pt]{$\nicefrac{1}{6}$} (n6)
          (n6) edge[out=240, in=215, loop, left] node{$\nicefrac{1}{6}$} (n6)
          (n1) edge[out=10, in=350, loop, final, above] node{$\nicefrac{1}{6}$} (n1)
          (n1) edge[right] node{$\nicefrac{2}{3}$} (n2)
          (n1) edge[right] node{$\nicefrac{2}{3}$} (n7)
          (n2) edge[out=10, in=350, loop, final, above] node{$\nicefrac{5}{6}$} (n2)
          (n2) edge[final, out=195, in=340, below, pos=.1] node[yshift=1pt]{$\nicefrac{5}{6}$} (n7)
          (n7) edge[out=20, in=165, below, pos=.4] node[yshift=3pt]{$\nicefrac{1}{6}$} (n2)
          (n7) edge[out=240, in=215, loop, left] node{$\nicefrac{1}{6}$} (n7)
          (n0) edge[final, left] node{$\nicefrac{1}{2}$} (n4)
          (n4) edge[left] node{$1$} (n5)
          (n5) edge[out=170, in=190, loop, left] node{$1$} (n5);
\end{tikzpicture}
\caption{Graph $H$ obtained after analyzing graph $G$ from Fig.~\ref{fig:supp-chain}.}
\label{fig:graph-h}
\end{figure}

According to the qualitative model checking algorithm (cf.\ Example~\ref{exa:mc}),
$\mathcal{A}_\lcall$ satisfies the property encoded by $\mathcal{B}_\lcall$ almost surely.
However, we still perform quantitative model checking, and show that the equations system has solution 1.

Fig.~\ref{fig:graph-h} shows graph H, with edges labeled with their weight in the equation system.
The sub-graphs made of nodes $\{(c_0, q_0), (c_1, q_1), (c_1, q_0), (c_2, q_1), (c_2, q_0)\}$
and $\{(c_0, q_0), (c_3, q_3), \allowbreak (c_4, q_3)\}$
are respectively responsible for the acceptance of languages $L_1$ and $L_2$ from Example~\ref{running-ex:1}.

Note that $H$ is not a Markov chain: weights of edges outgoing from a node may not sum to 1.
However, equation \eqref{eq:weight-sum} holds.
For instance, consider the support self-loop on $c_1$ in the support chain (Fig.~\ref{fig:supp-chain}),
which has probability $\nicefrac{1}{3}$.
On $(c_1, q_1)$, its probability is split among the self-loop and the edge incoming from $(c_1, q_0)$,
both weighting $\nicefrac{1}{6}$.
Symmetrically, $(c_1, q_0)$ the weights of the self-loop and the edge incoming from $(c_1, q_1)$ sum to $\nicefrac{1}{3}$.

Based on the weights, we can compute the following probabilities associated to nodes:
\begin{align*}
z_{c_0, q_0} &= 1 & z_{c_3, q_3} &= 1 & z_{c_4, q_3} &= 1 & z_{c_1, q_1} &= \nicefrac{1}{6} \\
z_{c_1, q_0} &= \nicefrac{5}{6} & z_{c_2, q_1} &= \nicefrac{1}{6} & z_{c_2, q_0} &= \nicefrac{5}{6}
\end{align*}
According to constraints (2a), probabilities associated with nodes sharing the same \ac{pOPA} semi-configuration sum to 1.
E.g., $z_{c_2, q_0} + z_{c_2, q_1} = 1$.
We have $z_{c_0, q_0} = 1$, confirming that $\mathcal{A}_\lcall$
satisfies the property encoded by $\mathcal{B}_\lcall$ almost surely.
\end{example}

\subsection{Complexity}
\label{ssec:hardness}
\ac{LTL} qualitative model checking was proven \textsc{exptime}-hard by \citet{EtessamiY05},
which suggests a similar bound for \acs{POTLF}.
\textsc{exptime}-hardness for \acs{POTLF} qualitative model checking, however,
does not follow directly from this result,
because \ac{POTL} does not include \ac{LTL} operators directly,
and the only known translation of \ac{LTL} into \ac{POTL} \cite{ChiariMP21b}
employs past \ac{POTL} operators not included in \acs{POTLF}.

We thus prove \textsc{exptime}-hardness of \acs{POTLF} qualitative model checking directly
by a reduction from the acceptance problem for alternating linear-space-bounded Turing Machines.
This reduction has been employed by \citet{EtessamiY05,BrazdilEKK13},
but was originally introduced by \citet{BouajjaniEM97} to obtain hardness results for pushdown automata,
so we leave it to Appendix~\ref{sec:hardness-proof}.

\begin{lemma}
\label{lemma:exptime-hardness}
Qualitative model checking of \acs{POTLF} formulas against \acp{pOPA} is \textsc{exptime}-hard.
\end{lemma}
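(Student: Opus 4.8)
The plan is to reduce from the acceptance problem for alternating linear-space-bounded Turing machines, which is \textsc{exptime}-complete because $\mathrm{ASPACE}(n) = \mathrm{EXPTIME}$. This is the same source of hardness exploited by \citet{EtessamiY05,BrazdilEKK13} and originally by \citet{BouajjaniEM97} for pushdown automata; the work consists in adapting it so that the simulating machine is a \ac{pOPA} obeying the \ac{OPM} $M_\lcall$ and the checked property is expressible in the \emph{future} fragment \acs{POTLF}. I would fix an alternating machine $\mathcal M$ and an input $x$ with $|x| = n$. Since $\mathcal M$ uses only $n$ tape cells, each of its configurations is a word of length $O(n)$: too large to store in a single \ac{pOPA} state, but small enough to lay out on the stack.

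First I would build a \ac{pOPA} $\mathcal A$, of size polynomial in $\mathcal M$ and $n$, that simulates the computation tree of $\mathcal M$ on $x$. Each simulated step writes a successor configuration through a block of push ($\lcall$) moves and then returns with $\lret$ and pop moves, so that consecutive configurations become nested $\lcall$--$\lret$ frames under $M_\lcall$; this is where the context-free structure of \acp{OPL} is essential. A universal control state of $\mathcal M$ is simulated by an ordinary probabilistic branching of $\delta_\mathit{push}$ that selects each successor with positive probability, so that ``all successors are good'' corresponds to an almost-sure event. An existential control state is simulated with a \emph{restart} gadget modelled on the rejection-sampling semantics of $\lobs$: a ``bad'' guess unwinds the current attempt and re-enters the same configuration in a loop, so that ``some successor is good'' corresponds to ``the simulation almost surely eventually succeeds''. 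Composing the two devices turns the alternating acceptance of $\mathcal M$ into a single almost-sure property of $\mathcal A$.

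Next I would construct the \acs{POTLF} formula $\varphi$ that holds with probability $1$ exactly when $\mathcal M$ accepts. The formula must (i) certify that every simulated step is a legal transition of $\mathcal M$, and (ii) certify that every branch reaches an accepting configuration. For (i) the crucial observation is that the $i$-th cell of a configuration and the $i$-th cell of its successor are relatable by the chain relation $\chain$ induced by $M_\lcall$: a cell's $\lcall$ is in the $\chain$ relation with the matching $\lret$, which sits in the successor frame, so the local consistency condition (unchanged cells plus correct head and state update) can be expressed with $\lcnext{u}$, $\lcnext{d}$, and the summary-until operators $\lguntil{t}{\chain}{\varphi_1}{\varphi_2}$, all of which look only forward. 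Reachability of an accepting configuration is again a summary-until. Then $\Lambda_\varepsilon(\rho) \in L_\varphi$ for almost every run $\rho$ iff $\mathcal M$ accepts $x$, which yields $P(\Lambda_\varepsilon(\mathit{Runs}(\mathcal A)) \cap L_\varphi) \geq 1$ iff $x \in L(\mathcal M)$; since both $\mathcal A$ and $\varphi$ are built in polynomial time, this establishes \textsc{exptime}-hardness.

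The main obstacle I expect is carrying out the configuration-to-configuration consistency check of (i) using \emph{only} the future operators of \acs{POTLF}: the natural comparison of a successor configuration with its predecessor would use past $\chain$-modalities, which \acs{POTLF} does not provide, so the encoding must be arranged so that each cell comparison is anchored at the earlier cell and reaches forward across the nested frame. A secondary difficulty is verifying that the restart gadget faithfully realises existential choice, namely that a bad existential guess is retried with probability $1$ and contributes no measure, so that the interleaving of universal branching and existential restarts reproduces precisely the alternating acceptance condition under the single threshold $\varrho = 1$. Both points are routine to set up but delicate to verify, which is why the detailed construction is deferred to Appendix~\ref{sec:hardness-proof}.
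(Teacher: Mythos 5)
Your reduction source (alternating linear-space-bounded Turing machines) matches the paper's, but the core of your construction---the translation of alternation into probability---is broken. You simulate universal states by probabilistic branching (``all successors are good'' becomes an almost-sure event) and existential states by a rejection-sampling restart loop (``some successor is good'' becomes ``almost surely a good guess is eventually found''). These two gadgets do not compose. Concretely: take an existential configuration $e$ whose (both) moves lead to a universal configuration $u$, where $u$ has one accepting and one rejecting successor. The machine \emph{rejects}, since $u$ requires all of its successors to accept. In your simulation, however, each attempt from $e$ reaches $u$, branches probabilistically, and hits the accepting leaf with probability $\nicefrac{1}{2}$; a bad outcome merely triggers a restart at $e$. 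Hence the simulation reaches an accepting configuration with probability $1$, and your formula holds almost surely on a rejected input. The failure is structural: a restart erases the evidence that a universal branch failed, so ``almost-sure eventual success'' collapses the AND-over-children condition at universal nodes into an OR. No choice of threshold fixes this---requiring instead that no rejecting configuration is ever reached fails in the other direction, since on accepting inputs a bad existential guess is reached (and restarted) with positive probability.

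The paper avoids this trap by never characterizing acceptance as an almost-sure event of the simulation. Instead, a \emph{single} run of the \ac{pOPA} performs a depth-first traversal of an entire computation tree: the two successors of a universal state are visited \emph{in series} (so one run certifies both subtrees), while existential moves and tape-cell contents are guessed probabilistically; head positions are tracked in unary by counting self-loop iterations, not by writing configurations on the stack. A polynomial-size formula $\xi$ characterizes the runs that encode a legal, accepting computation tree, and acceptance corresponds to $P(\xi) > 0$, i.e., to $\neg\xi$ \emph{failing} to hold almost surely---a positive-probability characterization, which is exactly what probabilistic guessing can support. Two further points in your sketch would also need repair even with a correct correspondence: (i) your cross-frame consistency check leans on relating the $i$-th cell of consecutive configurations through the $\chain$ relation of $M_\lcall$, which does not pair cells at equal offsets of nested $\lcall$ blocks; the paper instead enforces consistency with nested next operators and until formulas, and (ii) the paper's translation of these \ac{LTL}-style formulas into \acs{POTLF} is only valid because its purpose-built \ac{OPM} $M_A$ induces a one-to-one $\chain$ relation, whereas your use of $M_\lcall$ with $\lobs$-driven restarts creates many-to-one chains, invalidating that translation.
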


By joining the results of Theorems~\ref{thm:potlf-separated-opba} and
\ref{thm:ltl-separated-opba} with Theorem~\ref{thm:conditions-H}
and Lemma~\ref{lemma:exptime-hardness} we can state:
\begin{theorem}
Let $\mathcal{A}$ be a \ac{pOPA}, $\mathcal{B}$ a complete separated \ac{OPBA},
$\varphi$ a \acs{POTLF} formula, $\psi$ an \ac{LTL} formula, all on the same \acs{OP} alphabet,
and $\varrho$ a rational constant.
\begin{itemize}
\item Deciding $P(\Lambda_\varepsilon(\mathit{Runs}(\mathcal{A})) \cap L_\mathcal{B}) \geq \varrho$
is in \textsc{pspace}.
\item Deciding
$P(\Lambda_\varepsilon(\mathit{Runs}(\mathcal{A})) \cap L_\varphi) \geq \varrho$ or
$P(\Lambda_\varepsilon(\mathit{Runs}(\mathcal{A})) \cap L_\psi) \geq \varrho$
is \textsc{exptime}-complete if $\varrho = 1$ and in \textsc{expspace} if $\varrho < 1$.
\end{itemize}
\end{theorem}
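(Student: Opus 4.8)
The plan is to assemble this final statement from the building blocks already established, treating it essentially as a careful complexity-accounting exercise layered on top of the correctness results proved earlier. For the first bullet (a general complete separated \ac{OPBA} $\mathcal{B}$), I would verify that both the qualitative procedure of Section~\ref{sec:qualitative-mc} and the quantitative procedure of Section~\ref{sec:quantitative-mc} run in \textsc{pspace} in $|\mathcal{A}|$ and $|\mathcal{B}|$. Correctness of the qualitative procedure is exactly Theorem~\ref{thm:conditions-H}; for complexity, the only super-polynomial ingredient is the encoding of $\mathcal{A}$'s termination probabilities in \ac{ETR}, and since \ac{ETR} is decidable in polynomial space while the subsequent construction of $M_\mathcal{A}$, of $G$, and the \ac{SCC}/reachability analyses are polynomial in $|\mathcal{A}|$ and $|\mathcal{B}|$, the problem lies in \textsc{pspace}. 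For the quantitative variant I would point to the polynomial system (1a)--(3d), whose size is polynomial in $|\mathcal{A}|$ and $|\mathcal{B}|$ and which decides $P(\cdots) \geq \varrho$ as an \ac{ETR} sentence, again in \textsc{pspace}.

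For the \acs{POTLF} and \ac{LTL} bullets, I would invoke Theorems~\ref{thm:potlf-separated-opba} and~\ref{thm:ltl-separated-opba} to obtain a complete separated \ac{OPBA} $\mathcal{B}$ with $L_\mathcal{B} = L_\varphi$ (resp.\ $L_\psi$) of size $2^{O(|\varphi|)}$, and feed it to the algorithms above. The crucial point---and the step I expect to require the most care---is showing that qualitative checking stays within \textsc{exptime} rather than collapsing to \textsc{expspace}. The key observation is that the only \textsc{pspace} subroutine, the \ac{ETR} query for termination probabilities, runs on the input $\mathcal{A}$ \emph{alone}, whose description is of polynomial size and independent of $\mathcal{B}$, hence costs at most exponential time in $|\mathcal{A}|$. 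Everything that touches the exponential-size graph $G$ (building $M_\mathcal{A}$, the synchronized product $\hat{\mathcal{A}}$, and the \ac{SCC}/ancestor analysis of Theorem~\ref{thm:conditions-H}) consists of purely graph-theoretic, polynomial-time procedures, hence runs in time polynomial in $|G| = 2^{O(|\varphi|)}$, i.e.\ in \textsc{exptime}. Combining with the \textsc{exptime}-hardness of Lemma~\ref{lemma:exptime-hardness} (and, for \ac{LTL}, the analogous classical result of~\citet{EtessamiY05}) yields \textsc{exptime}-completeness for $\varrho = 1$.

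Finally, for the quantitative case with $\varrho < 1$, I would observe that the polynomial equation system of Section~\ref{sec:quantitative-mc} is now built over the exponential-size objects $M_\mathcal{A}$, $G$, and $\hat{\mathcal{A}}$, so its size is polynomial in $|\mathcal{B}| = 2^{O(|\varphi|)}$ and thus exponential in $|\varphi|$; deciding the resulting \ac{ETR} sentence in space polynomial in its (exponential) size places the quantitative problem in \textsc{expspace}. The whole argument is therefore bookkeeping rather than new mathematics, and the single genuinely delicate point is isolating \emph{where} the \ac{ETR} oracle is invoked---on $\mathcal{A}$ only in the qualitative case, but on the product system of size $2^{O(|\varphi|)}$ in the quantitative case---which is precisely what separates the \textsc{exptime} bound from the \textsc{expspace} one.
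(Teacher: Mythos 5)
Your proposal is correct and takes essentially the same route as the paper: the paper's proof is exactly the combination of Theorems~\ref{thm:potlf-separated-opba} and~\ref{thm:ltl-separated-opba} with Theorem~\ref{thm:conditions-H} and Lemma~\ref{lemma:exptime-hardness} (plus known pushdown lower bounds for \ac{LTL}), together with the complexity accounting already stated at the ends of Sections~\ref{sec:qualitative-mc} and~\ref{sec:quantitative-mc}. In particular, your key observation---that the \ac{ETR} oracle touches only $\mathcal{A}$'s polynomial-size termination system in the qualitative case but the exponential-size product system in the quantitative case---is precisely the distinction the paper uses to obtain \textsc{exptime}-completeness for $\varrho = 1$ versus \textsc{expspace} for $\varrho < 1$.
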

The lower bound for qualitative \ac{LTL} model checking derives from non-probabilistic model checking of pushdown systems~\cite{BouajjaniEM97,EtessamiY12}.

\subsection{Analysis of the Schelling Coordination Game}
\label{sec:analysis-motivating-example}

We have implemented the qualitative and quantitative model checking algorithms in a prototype tool.
The tool automatically models probabilistic programs as \acp{pOPA},
and supports specifications given as \ac{LTL} and \ac{POTLF} formulas, 
for which it automatically builds an equivalent \ac{OPBA}.

We ran the tool on the motivating example in Fig.~\ref{fig:coordination_game}
on a PC with a 4.5GHz AMD CPU and 64~GB of RAM.
The resulting \ac{pOPA} has 311 states, and the system that encodes its termination probabilities
(cf.\ Section~\ref{sec:popa}) consists of 1230 nonlinear polynomial equations.
We tried to solve the system with two state-of-the-art \acs{SMT} solvers supporting \ac{ETR}
(or \textsc{qf\_nra}): Z3 4.12 \cite{MouraB08} and cvc5 1.1 \cite{BarbosaBBKLMMMN22},
but both timed out after 20 minutes.
This is in line with previous results obtained by \citet{WinklerK23a},
which confirm that equation systems encoding \ac{pPDA} termination probabilities
are generally hard to solve.
Instead, our tool solves such equation systems by approximating solutions through numerical methods
\cite{EtessamiY09,WinklerK23a}, which takes less than 1 second on the program of Fig.~\ref{fig:coordination_game}.

We then checked the program against \ac{POTLF} formulas reported in Section~\ref{intro}.
Qualitative model checking runs in 2.49~s on formula~\eqref{eq:disagreement},
finding that it does not hold with probability 1.
We then run quantitative model checking, finding out in 2~m that
the formula holds with probability $\sim0.895$.
The tool also finds out that formula \eqref{eq:aliceLoc1} from Section~\ref{intro}
holds with probability $\sim0.610$ (qualitative model checking took 0.34~s, quantitative took 14~s).

These results show that our approach is a useful tool for the analysis of
probabilistic programs modeling meta-reasoning in multi-agent systems.


\section{Conclusion}
\label{sec:conclusions}
We introduced a novel algorithm for model checking context-free properties on \acp{pPDA}.
We check properties in \ac{LTL} and \acs{POTLF}, a fragment of the state-of-the-art temporal logic \acs{POTL}.
\acs{POTLF} expresses properties definable as \acp{OPL},
a language class strictly larger than \acp{VPL},
which thus allows for higher expressiveness than \ac{VPL}-based approaches~\cite{WinklerGK22}
(e.g., \texttt{observe} statements).
%

The analysis of the motivating example shows that our approach is suitable for the study
of relatively small-sized programs, such as those of interest for meta-reasoning~\cite{StuhlmullerG14}.
However, more work on an optimized and more efficient implementation is needed to tackle larger programs,
which would make these techniques of interest for the broader \ac{PPL} community.

Support for full \ac{POTL} also requires further work.
Promising approaches are \ac{OPBA} determinization through \emph{stair-parity}
acceptance conditions~\cite{LodingMS04,WinklerGK22},
which would also allow for model checking non-deterministic \ac{OPBA} specifications.
Determinization, however, would yield a non-optimal doubly exponential algorithm.
More promising approaches are encoding specifications as \emph{unambiguous automata},
which have been recently exploited to define an optimal (singly exponential)
algorithm for \acs{LTL} model checking on Markov chains~\cite{BaierK00023},
and other weak forms of determinism such as \emph{limit-determinism}~\cite{SickertEJK16,SickertK16,HahnLST015},
and \emph{good-for-MDP} automata~\cite{HahnPSS0W20}.

\begin{acks}
This work was partially funded by the Vienna Science and Technology Fund (WWTF)
grant [10.47379/ICT19018] (ProbInG),
and WWTF project ICT22-023 (TAIGER),
and by the EU Commission in the Horizon Europe research and innovation programme
\includegraphics[width=1em]{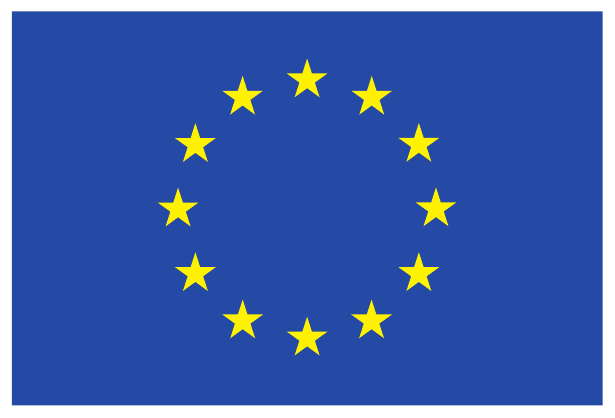}
under grant agreements No.\ 101034440 (Marie Sk\l{}odowska-Curie Doctoral Network LogiCS@TU Wien)
and No.\ 101107303 (MSCA Postdoctoral Fellowship CORPORA).
\end{acks}

\bibliographystyle{ACM-Reference-Format}
\bibliography{biblio,prob}

\clearpage
\appendix
\section{Appendix: Omitted Proofs and Examples}

\subsection{Construction of separated \texorpdfstring{\ac{OPBA}}{omegaOPBA} from \texorpdfstring{\acs{POTLF}}{POTLfX} formulas}
\label{sec:opba-construction-proof}

In this section, we report the construction of \ac{OPBA}
that accept the language of a \acs{POTL} formulas originally given in \cite{ChiariMPP23},
and show that, when restricted to \acs{POTLF} operators,
it yields separated \ac{OPBA}.
Theorem~\ref{thm:potlf-separated-opba} follows from
Theorem~\ref{thm:potlf-separated-appendix} and \cite[Theorem 5.6]{ChiariMPP23}.

The construction yields a generalized \ac{OPBA},
which differs from normal \ac{OPBA} in that it has a set of sets of final states
$\mathbf{F}$ instead of just one.
Runs are final if they visit at least one state from each final set
$F \in \mathbf{F}$ infinitely often.
Generalized \ac{OPBA} can be converted to ordinary \ac{OPBA}
with a polynomial size increase~\cite{ChiariMPP23}.
Alternatively, algorithms presented in the rest of the paper can be easily adapted
to the generalized acceptance condition by checking for the presence of a state from each set
$F \in \mathbf{F}$ whenever they need to check for the presence of a final state.

Given a \acs{POTL} formula $\varphi$, we build a generalized \ac{OPBA}
\(
  \mathcal{B} =
  ( \powset{AP}, \allowbreak
  M_{AP}, \allowbreak
  Q, \allowbreak
  I, \allowbreak
  \mathbf{F}, \allowbreak
  \delta )
\)
that accepts all and only models of $\varphi$ by following the construction given in \cite{ChiariMPP23}.
We report the construction here and show that, limited to \acs{POTLF} operators, 
it yields separated \ac{OPBA}.

We first introduce a few auxiliary operators:
\begin{itemize}
\item $\zeta_L$ forces the position where it holds to be the first one of a chain body;
\item $\lcnext{\prf}$ for each $\prf \in \{\lessdot, \doteq, \gtrdot\}$,
  is such that $(w,i) \models \lcdnext \psi$ iff
  $\exists j > i : \chain(i,j) \land i \pr j \land (w,j) \models \psi$.
\end{itemize}
The \emph{closure} of $\varphi$, $\clos{\varphi}$, is the smallest set satisfying the following constraints:
\begin{itemize}
\item
  $\varphi \in \clos{\varphi}$,
\item
  $AP \subseteq \clos{\varphi}$,
\item
  if $\psi \in \clos{\varphi}$ and $\psi \neq \neg \theta$, then $\neg \psi \in \clos{\varphi}$
  (we identify $\neg \neg \psi$ with $\psi$);
\item
  if $\neg \psi \in \clos{\varphi}$, then $\psi \in \clos{\varphi}$;
\item
  if any of the unary temporal operators (e.g., $\ldnext$, $\lcdnext$, \dots)
  is in $\clos{\varphi}$, its operand is in $\clos{\varphi}$;
\item
  if any of the binary operators (e.g., $\land$, $\lor$, $\lcduntil{}{}$, \dots)
  is in $\clos{\varphi}$, and $\psi$ and $\theta$ are its operands,
  then $\psi, \theta \in \clos{\varphi}$;
\item
  if $\lcdnext \psi \in \clos{\varphi}$, then
  $\zeta_L, \lcnext{\lessdot} \psi, \lcnext{\doteq} \psi \in \clos{\varphi}$;
\item
  if $\lcunext \psi \in \clos{\varphi}$, then
  $\zeta_L, \lcnext{\gtrdot} \psi, \lcnext{\doteq} \psi \in \clos{\varphi}$;
\item
  if $\luntil{t}{\psi}{\theta} \in \clos{\varphi}$ for $t \in \{d, u\}$, then
  $\lnextsup{t}(\luntil{t}{\psi}{\theta}), \lcnext{t}(\luntil{t}{\psi}{\theta}) \in \clos{\varphi}$.
\end{itemize}

The set of atoms $\atoms{\varphi}$ contains all \emph{consistent} subsets of $\clos{\varphi}$,
i.e., all $\Phi \subseteq \clos{\varphi}$ that satisfy a set of
\emph{Atomic consistency Constraints} $\aconstr$.
$\aconstr$ contains the following constraints: for any $\Phi \in \atoms{\varphi}$,
\begin{enumerate}[label=(\alph*), series=aconstr]
\item \label{aconstr:mc-neg}
  $\psi \in \Phi$ iff $\neg \psi \notin \Phi$ for every $\psi \in \clos{\varphi}$;
\item \label{aconstr:mc-land}
  $\psi \land \theta \in \Phi$, iff $\psi \in \Phi$ and $\theta \in \Phi$;
\item \label{aconstr:mc-lor}
  $\psi \lor \theta \in \Phi$, iff $\psi \in \Phi$ or $\theta \in \Phi$, or both;
\item \label{aconstr:lcdnext}
  $\lcdnext \psi \in \Phi$
  iff ($\lcnext{\lessdot} \psi \in \Phi$ or $\lcnext{\doteq} \psi \in \Phi$);
\item \label{aconstr:lcunext}
  $\lcunext \psi \in \Phi$
  iff ($\lcnext{\doteq} \psi \in \Phi$ or $\lcnext{\gtrdot} \psi \in \Phi$);
\item \label{aconstr:until}
  $\luntil{t}{\psi}{\theta} \in \Phi_c$, with $t \in \{d, u\}$, iff either:
  \begin{itemize}
  \item $\theta \in \Phi_c$, or
  \item $\lnextsup{t}(\luntil{t}{\psi}{\theta}), \psi \in \Phi_c$, or
  \item $\lcnext{t}(\luntil{t}{\psi}{\theta}), \psi \in \Phi_c$.
  \end{itemize}
\end{enumerate}

Moreover, we define the set of \emph{pending formulas} as
\[
\clospn{\varphi} = \big\{ \theta \in \clos{\varphi} \mid \theta \in \{
\zeta_L,
\lcnext{\prf} \psi 
\}
\, \text{for $\prf \in \{\lessdot, \doteq, \gtrdot\}$
and $\psi \in \clos{\varphi}$}\big\}.
\]
and the set of \emph{in-stack formulas} as
\(
\closst{\varphi} =
\{ \lcnext{\prf} \psi \in \clos{\varphi} \mid \prf \in \{\lessdot, \doteq, \gtrdot\} \}.
\)

The states of $\mathcal{B}$ are the set
$Q = \atoms{\varphi} \times \powset{\clospn{\varphi}} \times \powset{\closst{\varphi}}$,
and its elements, which we denote with Greek capital letters,
are of the form $\Phi = (\Phi_c, \Phi_p, \Phi_s)$,
where $\Phi_c$, called the \emph{current} part of $\Phi$,
is the set of formulas that hold in the next position that $\mathcal{B}$ is going to read;
$\Phi_p$, or the \emph{pending} part of $\Phi$, is a set of \emph{temporal obligations};
and $\Phi_s$ is the \emph{in-stack} part, keeping track of when certain formulas are in some stack symbol.

The initial set $I$ contains all states of the form $(\Phi_c, \Phi_p, \emptyset) \in Q$
such that $\zeta_L \in \Phi_p$ iff $\# \not\in \Phi_c$.
The sets of accepting states in $\mathcal{F}$ will be introduced later.

Temporal obligations are enforced by the transition relation $\delta$,
defined as the set of all transitions that satisfy a set of \emph{$\delta$-rules}, $\dconstr$.
We introduce $\dconstr$ gradually for each operator:
$\delta_\mathit{push}$ and $\delta_\mathit{shift}$ are the largest subsets of
$Q \times \powset{AP} \times Q$ satisfying all rules in $\dconstr$,
and $\delta_\mathit{pop}$ is the largest subset of $Q \times Q \times Q$ satisfying all rules in $\dconstr$.
First, we introduce two $\dconstr$ rules that are always present and are not bound to a particular operator.

Each state of $\mathcal{B}$ guesses the \acp{AP} that will be read next.
So, $\dconstr$ always contains the rule that
\begin{enumerate}[series=dconstr]
\item \label{rule:delta-push-shift}
  for any $(\Phi, a, \Psi) \in \delta_\mathit{push/shift}$,
  with $\Phi, \Psi \in Q$ and $a \in \powset{AP}$, we have $\Phi_c \cap AP = a$
\end{enumerate}
(by $\delta_\mathit{push/shift}$ we mean $\delta_\mathit{push} \cup \delta_\mathit{shift}$).
\emph{Pop} moves, on the other hand, do not read input symbols,
and $\mathcal{B}$ remains at the same position when performing them:
$\dconstr$ contains the rule
\begin{enumerate}[resume*=dconstr]
\item \label{rule:delta-pop}
  for any $(\Phi, \Theta, \Psi) \in \delta_\mathit{pop}$ it must be $\Phi_c = \Psi_c$.
\end{enumerate}
The auxiliary operator $\zeta_L$ is governed by the following rules in $\dconstr$:
\begin{enumerate}[resume*=dconstr]
\item \label{rule:zeta-l-pop-shift}
  if $(\Phi, a, \Psi) \in \delta_\mathit{shift}$
  or $(\Phi, \Theta, \Psi) \in \delta_\mathit{pop}$, for any $\Phi, \Theta, \Psi$ and $a$,
  then $\zeta_L \not\in \Phi_p$;
\item \label{rule:zeta-l-push}
  if $(\Phi, a, \Psi) \in \delta_\mathit{push}$, then $\zeta_L \in \Phi_p$.
\end{enumerate}

If $\ldnext \psi \in \clos{\varphi}$ for some $\psi$, $\dconstr$ contains this rule:
\begin{enumerate}[resume*=dconstr]
\item \label{rule:ltnext}
  for all $(\Phi, a, \Psi) \in \delta_\mathit{push/shift}$,
  it must be that $\ldnext \psi \in \Phi_c$ iff ($\psi \in \Psi_c$
  and either $a \lessdot b$ or $a \doteq b$, where $b = \Psi_c \cap AP$).
\end{enumerate}
Replace $\lessdot$ with $\gtrdot$ for the upward counterpart.

If $\lcnext{\doteq} \psi \in \clos{\varphi}$,
its satisfaction is ensured by the following rules in $\dconstr$:
\begin{enumerate}[resume*=dconstr]
\item \label{rule:lcnext-doteq-start}
  Let $(\Phi, a, \Psi) \in \delta_\mathit{push/shift}$:
  then $\lcnext{\doteq} \psi \in \Phi_c$ iff $\lcnext{\doteq} \psi, \zeta_L \in \Psi_p$;
\item \label{rule:lcnext-doteq-pop}
  let $(\Phi, \Theta, \Psi) \in \delta_\mathit{pop}$:
  then $\lcnext{\doteq} \psi \not\in \Phi_p$
  and ($\lcnext{\doteq} \psi \in \Theta_p$ iff $\lcnext{\doteq} \psi \in \Psi_p$);
\item \label{rule:lcnext-doteq-shift}
  let $(\Phi, a, \Psi) \in \delta_\mathit{shift}$:
  then $\lcnext{\doteq} \psi \in \Phi_p$ iff $\psi \in \Phi_c$.
\end{enumerate}
\noindent If $\lcnext{\lessdot} \psi \in \clos{\varphi}$,
then $\dconstr$ contains the following rules:
\begin{enumerate}[resume*=dconstr]
\item \label{rule:lcnext-lessdot-push-shift}
  Let $(\Phi, a, \Psi) \in \delta_\mathit{push/shift}$:
  then $\lcnext{\lessdot} \psi \in \Phi_c$ iff $\lcnext{\lessdot} \psi, \zeta_L \in \Psi_p$;
\item \label{rule:lcnext-lessdot-pop}
  let $(\Phi, \Theta, \Psi) \in \delta_\mathit{pop}$:
  then $\lcnext{\lessdot} \psi \in \Theta_p$ iff ($\zeta_L \in \Psi_p$ and (either
  \begin{enumerate*}
  \item
    $\lcnext{\lessdot} \psi \in \Psi_p$ or
  \item
    $\psi \in \Phi_c$));
  \end{enumerate*}
\item \label{rule:lcnext-lessdot-shift}
  let $(\Phi, a, \Psi) \in \delta_\mathit{shift}$: then $\lcnext{\lessdot} \psi \not\in \Phi_p$.%
\footnote{This rule is not present in \cite{ChiariMPP23},
but it is required for the support graph to be backward deterministic.
The rule does not interfere with the proofs of Lemmas~A.1 and 5.4 from \cite{ChiariMPP23},
so all correctness claims remain valid.}
\end{enumerate}
\noindent If $\lcnext{\gtrdot} \psi \in \clos{\varphi}$, in $\dconstr$ we have:
\begin{enumerate}[resume*=dconstr]
\item \label{rule:lcnext-gtrdot-start}
  Let $(\Phi, a, \Psi) \in \delta_\mathit{push/shift}$:
  then $\lcnext{\gtrdot} \psi \in \Phi_c$ iff $\lcnext{\gtrdot} \psi, \zeta_L \in \Psi_p$;
\item \label{rule:lcnext-gtrdot-pop}
  let $(\Phi, \Theta, \Psi) \in \delta_{pop}$:
  ($\lcnext{\gtrdot} \psi \in \Theta_p$ iff $\lcnext{\gtrdot} \psi \in \Psi_p$)
  and ($\lcnext{\gtrdot} \psi \in \Phi_p$ iff $\psi \in  \Phi_c$);
\item \label{rule:lcnext-gtrdot-shift}
  let $(\Phi, a, \Psi) \in \delta_\mathit{shift}$: then $\lcnext{\gtrdot} \psi \not\in \Phi_p$.
\end{enumerate}

Moreover, if $\psi = \lcnext{\prf} \theta \in \clos{\varphi}$, $\dconstr$ also contains the following rules:
\begin{enumerate}[resume*=dconstr]
\item \label{rule:mc-stack-push}
  for any $(\Phi, a, \Theta) \in \delta_\mathit{push}$,
  ($\psi \in \Phi_p$ or $\psi \in \Phi_s$) iff $\psi \in \Theta_s$;
\item \label{rule:mc-stack-shift}
  for any $(\Phi, a, \Theta) \in \delta_\mathit{shift}$,
  $\psi \in \Phi_s$ iff $\psi \in \Theta_s$;
\item \label{rule:mc-stack-pop}
  for any $(\Phi, \Theta, \Psi) \in \delta_\mathit{pop}$,
  ($\psi \in \Phi_s$ and $\psi \in \Theta_s$) iff $\psi \in \Psi_s$.
\end{enumerate}

Thus, we can define the acceptance sets
$\bar{F}_{\lcnext{\prf} \psi} = \{\Phi \in Q_\omega \mid \lcnext{\prf} \psi \not\in \Phi_p \cup \Phi_s\}$,
for $\prf \in \{\doteq, \gtrdot\}$ and
$\bar{F}_{\lcnext{\lessdot} \psi} = \{\Phi \in Q_\omega \mid \lcnext{\lessdot} \psi \not\in \Phi_s \land (\lcnext{\lessdot} \psi \not\in \Phi_p \lor \psi \in \Phi_c)\}$
that are in $\mathcal{F}$.
The acceptance set for $\lcduntil{\psi}{\theta}$ is the following:
\[
\bar{F}_{\lcduntil{\psi}{\theta}} =
\bar{F}_{\lcnext{\doteq} (\lcduntil{\psi}{\theta})}
\cap \bar{F}_{\lcnext{\lessdot} (\lcduntil{\psi}{\theta})}
\cap \{\Phi \in Q_\omega \mid \lcduntil{\psi}{\theta} \not\in \Phi_c \lor \theta \in \Phi_c\}.
\]
The one for $\lcuuntil{\psi}{\theta}$ is obtained
by substituting $\gtrdot$ for $\lessdot$ and $u$ for $d$.

The following holds:
\begin{lemma}[{\cite[Lemma 5.3]{ChiariMPP23}}]
\label{lemma:mc-omega-stack}
For any $\omega$-word $w = \# x y$ on $(\powset{AP}, M_{AP})$,
let $\config{y}{\Phi}{\gamma}$ be $\mathcal{B}$'s configuration after reading $x$.

If $\psi = \lcnext{\prf} \theta$ for $\prf \in \{\lessdot, \doteq, \gtrdot\}$,
then there exists a stack symbol $[a, \Theta] \in \gamma$ such that $\psi \in \Theta_p$
iff $\psi \in \Phi_s$.
\end{lemma}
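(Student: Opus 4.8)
The plan is to establish the equivalence as a special case of a run-wide invariant, proved by strong induction on the number of moves $\mathcal{B}$ performs while reading $x$: for every in-stack formula $\psi = \lcnext{\prf}\theta \in \clos{\varphi}$ (the argument is uniform in $\prf \in \{\lessdot,\doteq,\gtrdot\}$, since rules~\ref{rule:mc-stack-push}--\ref{rule:mc-stack-pop} hold for every such $\psi$) and every reachable configuration $\config{y}{\Phi}{\gamma}$, one has $\psi \in \Phi_s$ if and only if $\gamma$ contains some symbol $[a,\Theta]$ with $\psi \in \Theta_p$. The base case is the initial configuration: there $\Phi \in I$ has $\Phi_s = \emptyset$ by the definition of $I$, and the stack is $\bot$, which holds no symbol of the form $[a,\Theta]$, so both sides are false.

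For the inductive step I would split on the three kinds of move. For a \textbf{push} $\config{a w'}{\Phi}{\gamma} \vdash \config{w'}{\Theta}{[a,\Phi]\gamma}$, rule~\ref{rule:mc-stack-push} gives $\psi \in \Theta_s$ iff $\psi \in \Phi_p$ or $\psi \in \Phi_s$; the new stack $[a,\Phi]\gamma$ carries a pending occurrence of $\psi$ exactly when the fresh symbol $[a,\Phi]$ does (that is, $\psi \in \Phi_p$) or some symbol already in $\gamma$ does, and the latter is equivalent to $\psi \in \Phi_s$ by the induction hypothesis, so the invariant transfers. For a \textbf{shift} the state component of the top stack symbol is preserved and all lower symbols are untouched, so the set of states recorded in the stack is unchanged; since rule~\ref{rule:mc-stack-shift} forces $\psi \in \Theta_s$ iff $\psi \in \Phi_s$, the invariant is simply carried over.

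The crux is the \textbf{pop} move $\config{a w'}{\Phi}{[b,\Xi]\gamma} \vdash \config{a w'}{\Psi}{\gamma}$, where rule~\ref{rule:mc-stack-pop} (instantiated with $\Xi$ as the state recorded in the popped symbol) gives $\psi \in \Psi_s$ iff $\psi \in \Phi_s$ and $\psi \in \Xi_s$. To evaluate $\Xi_s$ I would first record the well-nesting property of pushdown runs: since state components of stack symbols are never altered and shift moves rewrite only the terminal of the current top symbol, the portion of the stack lying below $[b,\Xi]$ is frozen from the push that created this symbol until its matching pop. Hence the configuration that pushed $[b,\Xi]$ had current state $\Xi$ and exactly the stack $\gamma$ beneath it, and applying the induction hypothesis at that earlier configuration (here strong induction is essential) yields $\psi \in \Xi_s$ iff $\gamma$ contains a symbol with $\psi$ pending. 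Writing $P$ for this last condition, the induction hypothesis at the current configuration gives $\psi \in \Phi_s$ iff $\psi \in \Xi_p \lor P$, so $\psi \in \Phi_s \land \psi \in \Xi_s$ reduces to $(\psi \in \Xi_p \lor P) \land P$, i.e.\ to $P$; by rule~\ref{rule:mc-stack-pop} this is exactly $\psi \in \Psi_s$, and $P$ is precisely the invariant for the post-pop stack $\gamma$, closing the induction.

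The only non-routine ingredient, and the main obstacle, is this pop step: it forces me to reach back to the push-time configuration of the popped symbol and reuse the invariant there, which is why the induction must be phrased over the whole run length rather than over the last move, and it relies on carefully justifying that in an \ac{OPBA}---where shifts rewrite terminals in place---the stack strictly below a symbol remains untouched between its push and its pop. Once that fact is in hand, the three move cases are mechanical Boolean manipulations of rules~\ref{rule:mc-stack-push}--\ref{rule:mc-stack-pop}, and the stated lemma follows by reading the invariant at the configuration reached after $x$.
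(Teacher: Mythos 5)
Your proof is correct. Note, however, that the paper itself does not prove this lemma at all: it is imported verbatim, by citation, from \cite[Lemma 5.3]{ChiariMPP23}, so there is no in-paper argument to compare against. What you supply is a self-contained proof, and it is the natural one: strengthen the statement to an invariant over \emph{all} reachable configurations and induct on the length of the run prefix. Your three case analyses line up exactly with rules \ref{rule:mc-stack-push}--\ref{rule:mc-stack-pop}, and you correctly identified the only delicate point, the pop case. There the well-nesting argument is sound: in an \ac{OPBA}, shift moves rewrite only the terminal of the topmost symbol and never its state component, and symbols strictly below the tracked occurrence of $[b,\Xi]$ are never topmost while it sits on the stack, so the push-time configuration of that symbol indeed had current state $\Xi$ and precisely the stack $\gamma$ beneath it. Applying the (strong) induction hypothesis at that earlier configuration gives $\psi \in \Xi_s$ iff $P$, where $P$ says some symbol of $\gamma$ carries $\psi$ in its pending part, and then rule \ref{rule:mc-stack-pop} reduces $\psi \in \Psi_s$ to $(\psi \in \Xi_p \lor P) \land P$, i.e.\ to $P$, which is exactly the invariant for the post-pop stack. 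The base case is also right, since initial states have empty in-stack part and $\bot$ contains no symbol of the form $[a,\Theta]$. The one cosmetic remark is that the lemma speaks of ``the configuration after reading $x$,'' which is ambiguous because pop moves consume no input; your run-wide invariant is stronger and renders the ambiguity harmless.
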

And from it follows:
\begin{corollary}
\label{cor:no-in-stack}
If a semi-configuration $(\Phi, b, \ell)$ in the support graph of $\mathcal{B}$ is such that $\Phi_s \neq \emptyset$,
then no final \ac{SCC} is reachable from it.
\end{corollary}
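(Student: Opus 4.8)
The plan is to show that nonemptiness of the in-stack part is preserved along every edge of the support graph, and then to invoke the generalized B\"uchi acceptance condition to rule out acceptance whenever $\Phi_s \neq \emptyset$.

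First I would fix a witness $\lcnext{\prf}\psi \in \Phi_s$; such a formula exists because $\Phi_s \subseteq \closst{\varphi}$ and every element of $\closst{\varphi}$ has the shape $\lcnext{\prf}\psi$. By Lemma~\ref{lemma:mc-omega-stack}, having $\lcnext{\prf}\psi \in \Phi_s$ is equivalent to the presence, on the current stack, of a stack symbol $[a,\Theta]$ with $\lcnext{\prf}\psi \in \Theta_p$.

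Second, I would observe that along any support-graph edge the stack symbols at and below the current reading level are never popped: a push edge only adds a pending symbol on top, a shift edge leaves the stack depth unchanged, and a support edge $(q,b,a) \ssupp (p,b,\ell)$ reads a closed chain whose left context $b$ --- together with everything below it --- stays in place, the support merely pushing symbols above $b$ and popping them back. Consequently any stack symbol witnessing $\lcnext{\prf}\psi$ at the source of an edge is still present at its target, and, by Lemma~\ref{lemma:mc-omega-stack} again, $\lcnext{\prf}\psi$ remains in the in-stack part. The same holds for every intermediate state traversed inside a support, since the witnessing symbol persists on the stack throughout. This yields the monotonicity statement I need: if $\lcnext{\prf}\psi \in \Phi_s$ at $(\Phi,b,\ell)$, then $\lcnext{\prf}\psi \in \Phi'_s$ for every node $(\Phi',b',\ell')$ reachable from it, as well as for every state appearing inside a support represented by a reachable support edge.

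Finally, I would combine this with acceptance. The construction puts $\bar F_{\lcnext{\prf}\psi}$ into $\mathcal{F}$, and every state it contains satisfies $\lcnext{\prf}\psi \notin \Phi_s$. A final \ac{SCC} must, by the generalized B\"uchi condition, meet each set in $\mathcal{F}$ --- either with one of its nodes or with a state internal to one of its support edges --- and in particular it must meet $\bar F_{\lcnext{\prf}\psi}$. But the monotonicity above forbids any node or support-internal state reachable from $(\Phi,b,\ell)$ from dropping $\lcnext{\prf}\psi$ from its in-stack part, so none of them lies in $\bar F_{\lcnext{\prf}\psi}$. Hence no final \ac{SCC} is reachable from $(\Phi,b,\ell)$. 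The main obstacle will be the support-edge case: I must make precise that in-stack membership is preserved not just between the two endpoints of a support edge but for every state visited inside the (possibly nested) support, which is exactly what is required to exclude the ``final support edge'' variant of a final \ac{SCC}. I expect the cleanest route to be the purely semantic one through Lemma~\ref{lemma:mc-omega-stack}, reducing in-stack membership to the presence of a non-popped stack symbol, rather than propagating the bookkeeping rules \ref{rule:mc-stack-push}--\ref{rule:mc-stack-pop} through the support by induction.
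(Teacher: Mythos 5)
Your proposal is correct and takes essentially the same route as the paper's own proof: both reduce in-stack membership to the presence of a never-popped witnessing stack symbol via Lemma~\ref{lemma:mc-omega-stack}, use the absence of pop edges in the support graph to propagate that witness along every reachable node, and conclude that the acceptance sets $\bar{F}_{\lcnext{\prf} \psi}$ can never be met. If anything, you are more explicit than the paper on the two points its terse proof glosses over --- persistence of the witness for states \emph{inside} (possibly nested) chain supports, which is what rules out the ``final support edge'' variant of a final \ac{SCC}, and the appeal to the generalized B\"uchi condition --- so no changes are needed.
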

\begin{proof}
The final sets $\bar{F}_{\lcnext{\prf} \psi}$ for all $\prf \in \{\lessdot, \doteq, \gtrdot\}$
are such that states with non-empty in-stack part are not final.
Moreover, the support graph does not contain pop edges.
Thus, each semi-configuration represents a configuration of $\mathcal{B}$
in which the stack contents will never be popped.
Thus, by Lemma~\ref{lemma:mc-omega-stack} all semi-configurations reachable
from a state with a non-empty in-stack part also have a non-empty in-stack part.
\end{proof}
Corollary~\ref{cor:no-in-stack} allows us to state that, after removing from the support graph
all nodes that cannot reach a final \ac{SCC}, no nodes containing a state with non-empty in-stack part remain.
A run of $\mathcal{B}$ can only contain such states within closed supports,
which appear as support edges in the support graph.

We define the set of sub-formulas $\subf(\psi)$ of a formula $\psi$
as the smallest set such that:
\begin{itemize}
\item $\psi \in \subf(\psi)$;
\item if any of the unary operators (e.g., $\neg$, $\ldnext$, $\lcnext{\lessdot}$, \dots)
  is in $\subf(\psi)$, and $\psi$ is its operand, then $\psi \in \subf(\psi)$;
\item if any of the binary operators (e.g., $\land$, $\lor$, $\lcduntil{}{}$, \dots)
  is in $\subf(\psi)$, and $\psi_1$ and $\psi_2$ are its operands,
  then $\psi_1, \psi_2 \in \subf(\psi)$.
\end{itemize}
The set of \emph{strict} sub-formulas of $\varphi$ is
$\ssubf(\varphi) = \subf(\varphi) \setminus \{\varphi\}$.

From the proofs of Theorems 4.4 and 5.5 in \cite{ChiariMPP23} follows
\begin{theorem}
\label{thm:mc-correctness}
For each $\theta \in \subf(\varphi)$,
accepting computations of $\mathcal{B}$ are such that for each position $i$ in the input word $w$ we have
$(w,i) \models \theta$ iff $\theta \in \Phi_c(i)$,
where $\Phi_c(i)$ is the state reached by $\mathcal{B}$ right before reading position $i$ of $w$.
\end{theorem}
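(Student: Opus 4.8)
The statement is the standard soundness-and-completeness correspondence for a tableau-style automaton, so the plan is to prove both directions of the biconditional by structural induction on $\theta \in \subf(\varphi)$, fixing an accepting computation of $\mathcal{B}$ and reasoning position by position. The base case $\theta = \mathrm{a} \in AP$ is immediate from rule~\ref{rule:delta-push-shift}: the transition reading position $i$ forces $\Phi_c(i) \cap AP$ to equal the label of $i$, so $\mathrm{a} \in \Phi_c(i)$ iff $i \in P(\mathrm{a})$, matching the semantic clause. The Boolean cases $\neg\psi$ and $\psi \lor \theta$ follow directly from the atomic consistency constraints \ref{aconstr:mc-neg}--\ref{aconstr:mc-lor} together with the induction hypothesis, since each $\Phi_c(i)$ is a consistent atom. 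The case $\lnextsup{t}\psi$ follows from rule~\ref{rule:ltnext}, which ties membership of $\lnextsup{t}\psi$ in $\Phi_c(i)$ to membership of $\psi$ in $\Phi_c(i+1)$ under the correct precedence relation; applying the induction hypothesis at $i+1$ reproduces the semantic clause exactly.

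The heart of the argument is the chain-next family $\lcnext{\prf}\psi$ with $\prf \in \{\lessdot, \doteq, \gtrdot\}$, into which $\lcdnext$ and $\lcunext$ are decomposed by constraints~\ref{aconstr:lcdnext}--\ref{aconstr:lcunext}. Here membership in $\Phi_c(i)$ is linked not to the next position but to a \emph{pending obligation} planted in the pending part at the left context of a chain (rules~\ref{rule:lcnext-doteq-start}, \ref{rule:lcnext-lessdot-push-shift}, \ref{rule:lcnext-gtrdot-start}) and discharged when its right context is reached. I would track how $\zeta_L$ marks chain bodies and how rules~\ref{rule:mc-stack-push}--\ref{rule:mc-stack-pop} propagate $\lcnext{\prf}\psi$ into and out of the in-stack part, showing that an obligation is fulfilled exactly when $\chain(i,j)$ holds with the right relation and $\psi \in \Phi_c(j)$ for the right context $j$; Lemma~\ref{lemma:mc-omega-stack} is the crucial link between the in-stack part of the current state and the pending part of a symbol buried in the stack, and the acceptance sets $\bar{F}_{\lcnext{\prf}\psi}$ rule out postponing an obligation forever. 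The until operators $\luntil{t}{\psi}{\theta}$ are then handled by their fixpoint unfolding in constraint~\ref{aconstr:until} combined with the acceptance set $\bar{F}_{\luntil{t}{\psi}{\theta}}$, which forces eventual discharge of $\theta$ rather than perpetual reliance on the $\psi$-disjunct; a straightforward induction on the length of the summary path $\pi_\chain^t$ closes this case.

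Rather than redoing all of this from scratch, the cleanest route is to reduce to \cite[Theorems~4.4 and 5.5]{ChiariMPP23}, which establish exactly this correspondence for the full construction. The only deviation here is the extra rule~\ref{rule:lcnext-lessdot-shift}, which forbids $\lcnext{\lessdot}\psi \in \Phi_p$ in the source of a shift move. I would verify that this rule is semantically sound by checking that no accepting run on a model of $\varphi$ ever carries such an obligation into a shift (intuitively, a $\lcnext{\lessdot}$ obligation tracks a chain opened by a $\lessdot$ step and is discharged at the matching pop, never across the internal $\doteq$ steps of a chain body), so pruning these states removes no model; and that, as the footnote asserts, the invariants underlying Lemmas~A.1 and~5.4 of \cite{ChiariMPP23} are untouched. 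With that check in place, the cited theorems apply verbatim and yield the claimed correspondence for every $\theta \in \subf(\varphi)$.

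I expect the main obstacle to be the chain-next case: unlike the linear $\lnextsup{t}$ operator, its correctness depends on the global, stack-mediated bookkeeping of pending and in-stack formulas across an entire chain body, so neither direction of the biconditional can be read off a single transition. Both instead require an auxiliary invariant---essentially Lemma~\ref{lemma:mc-omega-stack}---relating the configurations at the left and right contexts of each chain, and care that the added shift rule does not sever the chain of obligations that this invariant maintains.
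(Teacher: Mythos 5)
Your proposal matches the paper's own treatment: the paper offers no independent proof, stating only that the theorem follows from the proofs of Theorems~4.4 and~5.5 of \cite{ChiariMPP23}, together with a footnote asserting---precisely the check you propose to carry out---that the added shift rule for $\lcnext{\lessdot}$ does not disturb the invariants (Lemmas~A.1 and~5.4) of that construction. Your induction sketch is a faithful outline of what those cited proofs establish, and the reduction you ultimately settle on is essentially the paper's argument.
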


We now prove a general corollary of Theorem~\ref{thm:mc-correctness}:
\begin{corollary}
\label{cor:current-determined}
Let $w$ be an $\omega$-word such that
$w \in L_\mathcal{B}(\Phi^1, \beta_1 \bot)$ and
$w \in L_\mathcal{B}(\Phi^2, \beta_2 \bot)$,
with $\symb{\beta_1} = \symb{\beta_2} = b$.
Then we have $\Phi^1_c = \Phi^2_c$.
\end{corollary}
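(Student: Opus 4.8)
The plan is to show that both $\Phi^1_c$ and $\Phi^2_c$ coincide with the set of closure formulas that hold at the first position of $w$, and hence with each other. Write $w = a\, w'$ with $a \in \powset{AP}$ its first letter, and recall that both runs begin by reading $a$ from a stack whose top symbol $\beta_i$ satisfies $\symb{\beta_i} = b$. The key consequence of the hypothesis $\symb{\beta_1} = \symb{\beta_2} = b$ is that the precedence relation $M(b,a)$ is the same in both runs, so both perform the same kind of first move (push, shift, or pop) and treat position $1$ identically as far as the local $\delta$-rules are concerned. Moreover, \acs{POTLF} is a purely future logic: when evaluated at a position $i$, every operator ($\lnextsup{t}$, $\lcnext{t}$, $\luntil{t}{}{}$) only looks at positions $\geq i$, and the chain relation $\chain(i,j)$ with $j > i$ is determined by the letters at positions between $i$ and $j$; hence the truth value at position $1$ of any formula in $\clos{\varphi}$ depends only on $w$, never on $\beta_i$ or on positions preceding $1$. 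So it suffices to prove $\theta \in \Phi^i_c \iff (w,1) \models \theta$ for every $\theta \in \clos{\varphi}$: then $\Phi^1_c$ and $\Phi^2_c$ are the same subset of $\clos{\varphi}$, which is the claim.

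First I would handle the genuine subformulas $\theta \in \subf(\varphi)$. These are exactly the formulas covered by Theorem~\ref{thm:mc-correctness}, which establishes, pointwise along a run, the equivalence $(w,i) \models \theta \iff \theta \in \Phi_c(i)$. Reading this at the starting position gives $\theta \in \Phi^i_c \iff (w,1) \models \theta$, so $\Phi^1_c$ and $\Phi^2_c$ agree on all of $\subf(\varphi)$, in particular on the atomic propositions and on the compound operators $\lcdnext$, $\lcunext$, and $\luntil{t}{}{}$. Next I would extend the correspondence to the auxiliary closure elements $\lcnext{\lessdot}\psi$, $\lcnext{\doteq}\psi$, and $\lcnext{\gtrdot}\psi$, which are not subformulas of $\varphi$ but still carry a well-defined semantics. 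Their presence in the current part is governed by the $\delta$-rules and enforced in the limit by the acceptance sets $\bar{F}_{\lcnext{\prf}\psi}$, so the same inductive argument underlying Theorem~\ref{thm:mc-correctness} yields $\lcnext{\prf}\psi \in \Phi^i_c \iff (w,1) \models \lcnext{\prf}\psi$, a value that again depends only on $w$. Finally, the atomic consistency constraints $\aconstr$ guarantee that agreement on the atomic propositions and on the auxiliary operators forces agreement on every Boolean combination present in the atom, yielding $\Phi^1_c = \Phi^2_c$.

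The main obstacle is that Theorem~\ref{thm:mc-correctness} is phrased for \emph{accepting} computations and for $\theta \in \subf(\varphi)$, whereas here I have \emph{final} runs that start in the middle of a word from arbitrary states $\Phi^1, \Phi^2$ over the full closure. I therefore need the pointwise correspondence between the current part and the set of formulas true at the current position for final runs, not only for initial-and-final ones. This should follow by revisiting the induction behind Theorem~\ref{thm:mc-correctness}: the $\delta$-rules and B\"uchi sets enforce the correspondence locally at each step, and initiality is used only to anchor position $1$ against the boundary symbol $\#$---a role here played instead by the common top-of-stack letter $b$. The delicate point is precisely the context-sensitive bookkeeping---whether position $1$ starts a chain body (the auxiliary $\zeta_L$) and how the $\lcnext{\prf}$ obligations are pushed, shifted, or popped---since these depend on the relation $M(b,a)$. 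It is exactly here that $\symb{\beta_1} = \symb{\beta_2}$ is indispensable: it fixes this relation, and hence the first move and the propagation of every such obligation into the current part, identically in the two runs, forcing them to agree on $\Phi_c$ and completing the argument.
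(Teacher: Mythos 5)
Your opening reduction is the same as the paper's: show that each $\Phi^i_c$ coincides with the set of formulas true at $(w,1)$, using the fact that \acs{POTLF} is purely future, so that truth at the first position depends only on $w$ and not on $\beta_i$. The divergence---and the problem---lies in how you bridge what you yourself call ``the main obstacle'': Theorem~\ref{thm:mc-correctness} speaks about \emph{accepting} (initial and final) computations of $\mathcal{B}$ on a whole word, whereas here the runs are merely final and start from arbitrary states $\Phi^1, \Phi^2$ with a nonempty stack. Your answer is to ``revisit the induction behind Theorem~\ref{thm:mc-correctness}'' and re-anchor it at $b$ instead of $\#$; but that re-proof is never carried out, and it is not a routine adaptation: one would have to redo the structural induction of \cite[Theorems 4.4 and 5.5]{ChiariMPP23} for runs whose start state carries arbitrary pending obligations and an in-stack part, and re-justify the role of the acceptance sets for suffix runs. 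Since this deferred step is essentially the entire content of the corollary, the proposal as written has a genuine gap at its center.

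The paper closes this gap with a short construction you missed, which avoids touching the induction at all. Because $\symb{\beta_i} = b$, the construction of $\mathcal{B}$ supplies an initial state $\Psi^i \in I$ with $(\Psi^i, b, \Phi^i) \in \delta_{\mathit{push}}$; since $\beta_i$ is never popped in the given final run, replacing it by $[b, \Psi^i]$ affects no move, so the final run on $w$ from $\config{w}{\Phi^i}{\beta_i \bot}$ extends backward to an \emph{accepting} run on $bw$ from $\config{bw}{\Psi^i}{\bot}$. Now Theorem~\ref{thm:mc-correctness} applies verbatim: $\theta \in \Phi^i_c$ iff $(bw,2) \models \theta$, and the purely-future property gives $(bw,2) \models \theta$ iff $(w,1) \models \theta$---your own shifting observation, used here as the second step rather than as a substitute for the first (when $\beta_i = \bot$ no prepending is needed). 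Incidentally, your remark that atoms also contain the auxiliary operators $\lcnext{\prf}\psi$, so that agreement on $\subf(\varphi)$ alone does not literally determine the whole current part, is a fair point of care which the paper's own proof glosses over; but it does not rescue your argument, since your treatment of those operators rests on the same unproven generalized correspondence.
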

\begin{proof}
Let $\beta \neq \bot$.
By the construction of $\mathcal{B}$, there exist initial%
\footnote{In \cite{ChiariMPP23} the construction requires
$\varphi \in \Phi_c$ for all $\Phi \in I$.
However, the result of Theorem~\ref{thm:mc-correctness} also holds without this constraint.
In fact, in \cite[Theorem 4.4]{ChiariMPP23} it is used \emph{a posteriori}
to prove that $\mathcal{B}$ only accepts models of $\varphi$.}
states $\Psi^1, \Psi^2 \in I$
such that $bw \in L_\mathcal{B}(\Psi^1, \bot)$ and $bw \in L_\mathcal{B}(\Psi^2, \bot)$,
and the respective runs reach $\Phi^1$ and $\Phi^2$ after reading $b$.

Such runs are accepting, so due to Theorem~\ref{thm:mc-correctness},
for all $\theta \in \subf(\varphi)$ we have
$\theta \in \Phi^1_c$ and $\theta \in \Phi^2_c$ iff $(bw, 2) \models \theta$.
Since we only consider future formulas, $(bw, 2) \models \theta$ iff $(w, 1) \models \theta$,
hence $\theta \in \Phi_c$ iff $(xw, 1) \models \theta$.

If $\beta = \bot$, the proof is analogous:
we need not consider $bw$, but we proceed directly with $w$.
\end{proof}

Now, we prove the following:
\begin{theorem}
\label{thm:potlf-separated-appendix}
The \ac{OPBA} $\mathcal{B}$ built for a formula $\varphi$
as described in this section is separated.
\end{theorem}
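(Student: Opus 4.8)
The plan is to take the data from Definition~\ref{def:separated}: a word $w \in \Sigma^\omega$, stack symbols $\beta_1, \beta_2$ with $\symb{\beta_1} = \symb{\beta_2} = b$, and states $\Phi^1, \Phi^2$ with $w \in L_\mathcal{B}(\Phi^1, \beta_1\bot)$ and $w \in L_\mathcal{B}(\Phi^2, \beta_2\bot)$, and to prove $\Phi^1 = \Phi^2$ by showing that each of the three components of a state $\Phi^k = (\Phi^k_c, \Phi^k_p, \Phi^k_s)$ is determined by $w$ and $b$ alone. Writing $a_0$ for the first letter of $w$, I would first observe that the final runs witnessing the two memberships never pop $\beta_1, \beta_2$, so their first move is not a pop; hence $b \gtrdot a_0$ is excluded and both runs start with the same move, a push if $b \lessdot a_0$ and a shift if $b \doteq a_0$.

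The current and in-stack parts are immediate. For $\Phi^k_s$, the semi-configuration $(\Phi^k, b, a_0)$ belongs to the trim support graph and reaches a final \ac{SCC} (its witnessing run is final and never pops the top symbol), so Corollary~\ref{cor:no-in-stack} gives $\Phi^1_s = \Phi^2_s = \emptyset$. For $\Phi^k_c$, Corollary~\ref{cor:current-determined} already yields $\Phi^1_c = \Phi^2_c$; I would use it in the form that, along an accepting run, $\Phi_c(i) = \{\theta \in \clos{\varphi} \mid (w,i) \models \theta\}$, including the auxiliary operators $\lcnext{\lessdot}$, $\lcnext{\doteq}$, $\lcnext{\gtrdot}$, whose truth at a position is fixed by $w$ (this is part of the correctness of the construction, Theorem~\ref{thm:mc-correctness} and \cite{ChiariMPP23}).

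The heart of the argument is the pending part, whose members are $\zeta_L$ and the auxiliary formulas $\lcnext{\prf}\psi$. The subtlety is that such an obligation concerns the \emph{left context} of the chain whose body begins at the first position of $w$, and that context is carried by $\beta_k$, of which only $\symb{\beta_k} = b$ is known, so a priori the two runs could disagree. The plan is to reduce every pending membership to the already-determined current parts and to the move type. By the rules governing $\zeta_L$, $\zeta_L \in \Phi^k_p$ iff the first move is a push, i.e. iff $b \lessdot a_0$, which is determined. For $\lcnext{\prf}\psi$ I would split on the move type. In the push case I reuse the initial state $\Psi^k$ supplied by the proof of Corollary~\ref{cor:current-determined}, which reads $b$ with a push and reaches $\Phi^k$: the corresponding instance of the ``start'' rule reads $\lcnext{\prf}\psi \in \Psi^k_c$ iff $\lcnext{\prf}\psi, \zeta_L \in \Phi^k_p$, and since here $\zeta_L \in \Phi^k_p$ and $\lcnext{\prf}\psi \in \Psi^k_c$ iff $(bw, 1) \models \lcnext{\prf}\psi$, membership of $\lcnext{\prf}\psi$ in $\Phi^k_p$ is fixed by $w$ and $b$. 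In the shift case, rules \ref{rule:lcnext-lessdot-shift} and \ref{rule:lcnext-gtrdot-shift} force $\lcnext{\lessdot}\psi, \lcnext{\gtrdot}\psi \notin \Phi^k_p$, while rule \ref{rule:lcnext-doteq-shift} gives $\lcnext{\doteq}\psi \in \Phi^k_p$ iff $\psi \in \Phi^k_c$, again determined. The degenerate case $\beta_k = \bot$ (so $b = \#$) I would treat on its own: $\Phi^k$ is then initial and its first move is a push, and because $\# \lessdot a$ for every $a$, the obligations $\lcnext{\doteq}\psi$ and $\lcnext{\gtrdot}\psi$ are never generated at the imaginary position $0$, whereas $\lcnext{\lessdot}\psi$ is pending exactly when its semantics on $w$ holds.

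Putting the three components together gives $\Phi^1 = \Phi^2$, which is Definition~\ref{def:separated}. The step I expect to be hardest is exactly this pending-part analysis: arguing that the obligations attached to the never-popped symbol $\beta_k$ are forced to coincide across the two runs despite our only knowing $\symb{\beta_k}$, and verifying that for each $\prf$ the relevant $\delta$-rules (including the extra shift rule \ref{rule:lcnext-lessdot-shift} added by this construction for backward determinism) genuinely determine membership from quantities that depend only on $w$ and $b$.
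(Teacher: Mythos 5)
Your overall skeleton is the paper's: both proofs reduce Definition~\ref{def:separated} to showing that each component of $\Phi^k = (\Phi^k_c, \Phi^k_p, \Phi^k_s)$ is determined by $w$ and $b$ alone, both obtain $\Phi^1_c = \Phi^2_c$ from Corollary~\ref{cor:current-determined}, $\Phi^1_s = \Phi^2_s = \emptyset$ from Corollary~\ref{cor:no-in-stack}, and the $\zeta_L$ membership from rules \ref{rule:zeta-l-pop-shift}--\ref{rule:zeta-l-push}. Where you genuinely differ is the crux, the obligations $\lcnext{\prf}\psi$ in the pending parts. The paper works \emph{forward}: it takes the support-graph edges that the two final runs generate out of $(\Phi^1, b, a_0)$ and $(\Phi^2, b, a_0)$ (necessarily of the same type), splits on push/shift/support, and pins each membership down via the shift and pop rules, rule \ref{rule:mc-stack-push} plus Corollary~\ref{cor:no-in-stack}, and, in the support case, an induction driven by the acceptance sets $\bar{F}_{\lcnext{\prf}\psi}$ (an obligation that is never resolved kills finality). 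You work \emph{backward}: you prepend the initial state $\Psi^k$ from the proof of Corollary~\ref{cor:current-determined} and use the start rules to convert the question about $\Phi^k_p$ into one about $\Psi^k_c$. For $b \neq \#$ this is sound and arguably cleaner, since it merges the paper's push and support cases and avoids the acceptance-set induction. One repair is needed, though: you should not invoke semantic correctness for $\lcnext{\prf}\psi \in \Psi^k_c$, because Theorem~\ref{thm:mc-correctness} is stated only for $\theta \in \subf(\varphi)$ and the auxiliary operators lie in $\clos{\varphi} \setminus \subf(\varphi)$; instead, apply Corollary~\ref{cor:current-determined} a second time, to the word $bw$ and the pair $\Psi^1, \Psi^2$ (both with stack $\bot$), which yields $\Psi^1_c = \Psi^2_c$ directly and stays within what the paper has established.

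The genuine gap is the case $b = \#$, i.e.\ $\beta_1 = \beta_2 = \bot$, which Definition~\ref{def:separated} includes and which the model-checking algorithm actually exercises (separation is used at $\bot$-semi-configurations). Here there is no state reading $b$ to prepend, so your entire mechanism is unavailable, and what you write instead is an assertion rather than an argument: that $\lcnext{\doteq}\psi$ and $\lcnext{\gtrdot}\psi$ are ``never generated'' and that $\lcnext{\lessdot}\psi$ is pending ``exactly when its semantics on $w$ holds''. These claims concern \emph{pending} parts, which no correctness statement in the paper covers---pending parts are constrained only by the $\dconstr$ rules and the B\"uchi sets, and a priori a final run from $(\Phi^k, \bot)$ could carry a spurious obligation. (Your side remark that ``$\Phi^k$ is then initial'' is also unwarranted: $w \in L_\mathcal{B}(\Phi^k, \bot)$ only requires a \emph{final} run, not $\Phi^k \in I$, and initiality would not determine $\Phi^k_p$ anyway.) Closing this case requires exactly the forward machinery you set out to bypass: if the symbol pushed by the first move is never popped, rule \ref{rule:mc-stack-push} and Corollary~\ref{cor:no-in-stack} force the obligation out of $\Phi^k_p$ in any final run; if it is popped, the pop rules propagate the obligation to the next $\bot$-level state and the acceptance sets force it to be resolved consistently with the (already agreed) current parts---this is the paper's support-edge argument, which applies verbatim at $b = \#$, where shift edges cannot occur. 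So your proposal is a valid and instructive variant for $b \neq \#$, but as written it does not prove the theorem.
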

\begin{proof}
We prove the following claim:
Let $w$ be an $\omega$-word such that
$w \in L_\mathcal{B}(\Phi^1, \beta_1 \bot)$ and
$w \in L_\mathcal{B}(\Phi^2, \beta_2 \bot)$,
with $\symb{\beta_1} = \symb{\beta_2} = b$.
Then we have $\Phi^1 = \Phi^2$.

In the following, let $a$ be the label of the first position of $w$.
We proved that $\Phi^1_c = \Phi^2_c$ in Corollary~\ref{cor:current-determined}.
Moreover, the claim implies that the support graph of $\mathcal{B}$
contains two nodes $(\Phi^1, b, a)$ and $(\Phi^2, b, a)$.
Since both runs starting from $\Phi^1$ and $\Phi^2$ are final,
by Corollary~\ref{cor:no-in-stack} we have $\Phi^1_s = \Phi^2_s = \emptyset$.

It remains to prove that $\Phi^1_p = \Phi^2_p$.
We analyze each operator in $\clospn{\varphi}$ separately.

Because of $\dconstr$ rules \ref{rule:zeta-l-pop-shift} and \ref{rule:zeta-l-push},
$\zeta_L \in \Phi^1_p$ (resp.\ $\zeta_L \in \Phi^2_p$)
iff $\symb{\beta_1} \lessdot a$ (resp.\ $\symb{\beta_2} \lessdot a$).
Since $\symb{\beta_1} = \symb{\beta_2}$, we have
$\zeta_L \in \Phi^1_p$ iff $\zeta_L \in \Phi^2_p$.

For the chain next operators, we need to consider the edges
$e_1 = (\Phi^1, b, a) \suppedge (\Psi^1, b', \ell)$ and
$e_2 = (\Phi^2, b, a) \suppedge (\Psi^2, b', \ell)$ generated in the support graph
by the two final runs on $w$ starting from $\Phi^1$ and $\Phi^2$.
Note that $e_1$ and $e_2$ are always of the same type (among push, shift, and support)
because they share $a$ and $b$.

\begin{itemize}
\item $[\lcnext{\doteq} \psi]$
  Recall $\lcnext{\doteq} \psi \not\in \Psi^1_s, \Psi^2_s$ by Corollary~\ref{cor:no-in-stack}.
  Thus, if $e_1, e_2$ are push edges, by rule \ref{rule:mc-stack-push}
  we have $\lcnext{\doteq} \psi \not\in \Phi^1_p, \Phi^2_p$.

  If $e_1, e_2$ are shift edges, by rule \ref{rule:lcnext-doteq-shift},
  $\lcnext{\doteq} \psi \in \Phi^1_p$ iff $\psi \in \Phi^1_c$ and
  $\lcnext{\doteq} \psi \in \Phi^2_p$ iff $\psi \in \Phi^2_c$,
  but we already proved that $\psi \in \Phi^1_c$ iff $\psi \in \Phi^2_c$,
  hence $\lcnext{\doteq} \psi \in \Phi^1_p$ iff $\lcnext{\doteq} \psi \in \Phi^2_p$.

  Let $e_1, e_2$ be support edges.
  The first push transition in the support pushes the stack symbol popped by the last pop move, which leads to $\Psi$.
  Then, by rule \ref{rule:lcnext-doteq-pop} we have
  $\lcnext{\doteq} \psi \in \Phi^1_p$ iff $\lcnext{\doteq} \psi \in \Psi^1_p$.
  Let $e'_1 = (\Psi^1, b', \ell) \suppedge (\Theta^1, b'', \ell'')$
  be the edge of the support graph following $e_1$ in the run on $w$.
  If $e'_1$ is a push or a shift edge, whether $\lcnext{\doteq} \psi \in \Psi^1_p$
  is determined as in our previous analysis of push and shift edges.
  If $e'_1$ is a support edge, then $\lcnext{\doteq} \psi \in \Psi^1_p$ iff
  $\lcnext{\doteq} \psi \in \Theta^1_p$ also by rule \ref{rule:lcnext-doteq-pop}.
  But note that if $\lcnext{\doteq} \psi \in \Theta^1_p$,
  then $\Theta^1_p \not\in \bar{F}_{\lcnext{\doteq} \psi}$.
  So, for the run starting with $\Phi^1$ to be final,
  at some point it must reach a shift edge where $\lcnext{\doteq} \psi$
  because $\psi$ holds (rule \ref{rule:lcnext-doteq-shift}).
  The same argument applies for whether $\lcnext{\doteq} \psi \in \Phi^2_p$,
  hence $\lcnext{\doteq} \psi \in \Phi^1_p$ iff $\lcnext{\doteq} \psi \in \Phi^2_p$.

\item $[\lcnext{\lessdot} \psi]$
  If $e_1$, $e_2$ are shift edges, then $\lcnext{\lessdot} \psi \not\in \Phi^1_p, \Phi^2_p$
  by rule \ref{rule:lcnext-lessdot-shift}.

  If $e_1$ is a push edge, $\lcnext{\lessdot} \psi \in \Phi^1_p$
  would imply $\lcnext{\lessdot} \psi \in \Psi^1_s$
  by rule \ref{rule:mc-stack-push}, which by Corollary~\ref{cor:no-in-stack} prevents $(\Psi, b', \ell)$,
  and thus $(\Phi, b, a)$ from reaching a final \ac{SCC}.
  The same argument applies to $e_2$, and thus
  $\lcnext{\lessdot} \psi \not\in \Phi^1_p, \Phi^2_p$.

  If $e_1$ is a support edge, by rule \ref{rule:lcnext-lessdot-pop} we have
  $\lcnext{\lessdot} \psi \in \Phi^1_p$ iff ($\zeta_L \in \Psi^1_p$ and (either
  $\lcnext{\lessdot} \psi \in \Psi^1_p$ or
  $\psi \in \Psi^1_c$)).
  In fact, let $\Theta^1$ be the state preceding $\Psi^1$ in the support represented by $e_1$:
  by rule \ref{rule:delta-pop} we have $\psi \in \Theta^1_c$ iff $\psi \in \Psi^1_c$.
  The same argument holds for $e_2$.
  Thus, if $b' \doteq \ell$, by rule \ref{rule:lcnext-lessdot-shift}
  $\zeta_L \not\in \Phi^1_p, \Phi^2_p$ and thus
  $\lcnext{\lessdot} \psi \not\in \Phi^1_p, \Phi^2_p$.
  If $b' \lessdot \ell$ and $\psi \in \Psi^1_c$, we know that
  $\psi \in \Psi^1_c$ iff $\psi \in \Psi^2_c$,
  and hence $\lcnext{\lessdot} \psi \in \Phi^1_p$ iff $\lcnext{\lessdot} \psi \in \Phi^2_p$.
  Otherwise, we can argue that $\lcnext{\lessdot} \psi \in \Psi^1_p$
  iff $\lcnext{\lessdot} \psi \in \Psi^2_p$ in a way similar as we did for
  $\lcnext{\doteq} \psi$ (i.e., we show that $\lcnext{\lessdot} \psi$ must disappear
  from the pending part of the states of a final run).

\item $[\lcnext{\gtrdot} \psi]$
  Suppose by contradiction that $\lcnext{\gtrdot} \psi \in \Phi^1_p$.
  By rule \ref{rule:lcnext-gtrdot-shift}, $e_1$ cannot be a shift edge.
  If $e_1$ is a push edge, then by rule \ref{rule:mc-stack-push} we have $\lcnext{\gtrdot} \psi \in \Phi^1_s$,
  which by Corollary~\ref{cor:no-in-stack} prevents $(\Psi^1, b', \ell)$,
  and thus $(\Phi^1, b, a)$ from reaching a final \ac{SCC}.

  If $e$ is a support edge, then by rule \ref{rule:lcnext-gtrdot-pop} we have
  $\lcnext{\gtrdot} \psi \in \Psi^1_p$,
  so $\Psi^1 \not\in \bar{F}_{\lcnext{\gtrdot} \psi}$ and, by repeating these same arguments,
  the same can be said for all its successors, whether they are linked by push or support edges.

  Hence, no node that can reach a final \ac{SCC} has $\lcnext{\gtrdot} \psi$
  in its pending part, and we can state $\lcnext{\gtrdot} \psi \not\in \Phi^1, \Phi^2$.
\end{itemize}
We can conclude that $\Phi^1_p = \Phi^2_p$.
\end{proof}

\subsection{The Support Chain}
\label{sec:support-chain-proof}

We prove Theorem~\ref{thm:support-chain} by three separate lemmas.
First, we prove that $M_\mathcal{A}$ is a well-defined Markov chain.

In the following, given a \ac{pOPA} \emph{semi-configuration}
$c = (u, \alpha) \in Q \times \Gamma_\bot$, we define $\State{c} = u$.
\begin{lemma}
\label{lemma:support-is-mc}
$M_\mathcal{A}$ is a Markov chain.
\end{lemma}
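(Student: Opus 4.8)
The plan is to verify the three defining properties of a discrete-time Markov chain: that the state space $Q_{M_\mathcal{A}}$ is finite, that every transition weight $\delta_{M_\mathcal{A}}(c)(c')$ is nonnegative, and that the weights of the edges leaving each state sum to $1$ (whence, together with nonnegativity, each weight lies in $[0,1]$). Finiteness is immediate, since $Q_{M_\mathcal{A}} \subseteq \mathcal{C} \subseteq Q \times \Gamma_\bot$ is a subset of a finite set, and nonnegativity is clear because every factor in the definition of $\delta_{M_\mathcal{A}}$ (the transition probabilities $\delta_{\mathit{push}}, \delta_{\mathit{shift}}, \delta_{\mathit{pop}}$, the termination probabilities $\pvar{\cdot}{\cdot}{\cdot}$, and the pending probabilities $\nex{\cdot}$) is nonnegative; the division by $\nex{u,\alpha}$ is well defined precisely because $c = (u,\alpha)$ being a state of $M_\mathcal{A}$ means $\nex{u,\alpha} > 0$. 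The entire content of the lemma is thus the normalization condition, which I would prove by a case split on the precedence relation between $\symb{\alpha}$ and $\Lambda(u)$, the unique relation determining the type of move enabled from $c$. The pop case $\symb{\alpha} \gtrdot \Lambda(u)$ is vacuous: writing $\alpha = [a,s]$, the only enabled move pops $\alpha$, so $\sum_v \pvar{u}{\alpha}{v} = \sum_v \delta_{\mathit{pop}}(u,s)(v) = 1$ and hence $\nex{u,\alpha} = 0$, meaning such $c$ is not a state of $M_\mathcal{A}$.

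In the shift case $\symb{\alpha} \doteq \Lambda(u)$, writing $\alpha = [a,s]$ and $\alpha' = [\Lambda(u), s]$, the only outgoing edges are shift edges to states $(v, \alpha')$, and I must show $\sum_v \delta_{\mathit{shift}}(u)(v)\,\nex{v,\alpha'} = \nex{u,\alpha}$. This follows from the shift clause of the termination system $\mathbf{v} = f(\mathbf{v})$, namely $\pvar{u}{\alpha}{w} = \sum_r \delta_{\mathit{shift}}(u)(r)\,\pvar{r}{\alpha'}{w}$: summing over $w$, factoring, and using $\sum_r \delta_{\mathit{shift}}(u)(r) = 1$ rewrites $\nex{u,\alpha} = 1 - \sum_w \pvar{u}{\alpha}{w}$ as $\sum_r \delta_{\mathit{shift}}(u)(r)\bigl(1 - \sum_w \pvar{r}{\alpha'}{w}\bigr) = \sum_r \delta_{\mathit{shift}}(u)(r)\,\nex{r,\alpha'}$. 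Dividing by $\nex{u,\alpha}$ gives the claim. Terms with $\nex{v,\alpha'} = 0$ may be dropped freely, since their numerator vanishes, so restricting the sum to genuine states of $M_\mathcal{A}$ is harmless; I would record this remark once and reuse it.

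The push/support case $\symb{\alpha} \lessdot \Lambda(u)$ is the crux and the main obstacle, because here both the push edges (to $(v, \gamma)$ with $\gamma = [\Lambda(u), u]$) and the support edges (to $(v, \alpha)$) leave $c$, and the support weights $P_\mathit{supp}$ are themselves built from termination probabilities $\pvar{v'}{\gamma}{v}$. I would expand each contribution using $\nex{\cdot} = 1 - \sum \pvar{\cdot}{\cdot}{\cdot}$: the push part gives $\sum_v \delta_{\mathit{push}}(u)(v)\,\nex{v,\gamma} = 1 - \sum_{v'}\delta_{\mathit{push}}(u)(v')\sum_v \pvar{v'}{\gamma}{v}$, while the support part gives $\sum_{v'}\delta_{\mathit{push}}(u)(v')\sum_v \pvar{v'}{\gamma}{v}\bigl(1 - \sum_t \pvar{v}{\alpha}{t}\bigr)$. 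Adding these, the two copies of $\sum_{v'}\delta_{\mathit{push}}(u)(v')\sum_v \pvar{v'}{\gamma}{v}$ cancel, leaving exactly $1 - \sum_{v'}\delta_{\mathit{push}}(u)(v')\sum_v \pvar{v'}{\gamma}{v}\sum_t \pvar{v}{\alpha}{t}$.

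Finally, the push clause of the termination system, $\pvar{u}{\alpha}{w} = \sum_{r,t}\delta_{\mathit{push}}(u)(r)\,\pvar{r}{\gamma}{t}\,\pvar{t}{\alpha}{w}$, yields $\sum_w \pvar{u}{\alpha}{w} = \sum_{v'}\delta_{\mathit{push}}(u)(v')\sum_v \pvar{v'}{\gamma}{v}\sum_t \pvar{v}{\alpha}{t}$, so the sum of the two contributions equals $1 - \sum_w \pvar{u}{\alpha}{w} = \nex{u,\alpha}$; dividing by $\nex{u,\alpha}$ gives $1$. The delicate bookkeeping here—matching the nested termination sum produced by the fixpoint equation against the recombined push and support terms, and handling the overlap when $\alpha = \gamma$ (where a push edge and a support edge may share the target $(v,\alpha)$, which is exactly why the definition adds $P_\mathit{push} + P_\mathit{supp}$)—is the step I expect to require the most care, while the shift case and the vacuity of the pop case are routine.
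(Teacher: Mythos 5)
Your proof is correct, and it follows the paper's skeleton exactly: a case split on the \ac{PR} between $\symb{\alpha}$ and $\Lambda(u)$, with the pop case dismissed because $\nex{c}=0$ excludes such $c$ from $Q_{M_\mathcal{A}}$, and the whole content residing in the normalization identity $\nex{c} = \sum_{d} (\text{move weight}) \cdot \nex{d}$ in the shift and push/support cases. Where you differ is in \emph{how} that identity is established. The paper asserts it by a direct probabilistic decomposition of the pending event: $c$ is pending iff the first move (shift, push, or push-then-closed-support) leads to a pending semi-configuration, and the support term is unfolded into the three sub-events (push fires, pushed symbol is eventually popped reaching $\State{d}$, and $d$ is pending). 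You instead derive the same identity purely algebraically from the fixpoint system $\mathbf{v} = f(\mathbf{v})$ for termination probabilities: summing the shift (resp.\ push) clause of $f$ over the target state, using that $\delta_\mathit{shift}(u)$ and $\delta_\mathit{push}(u)$ are distributions, and observing the cancellation between the push and support contributions. The two arguments encode the same decomposition---the clauses of $f$ \emph{are} the law-of-total-probability statements the paper invokes verbally---but your route buys self-containedness: it needs only the already-imported fact that termination probabilities satisfy $\mathbf{v}=f(\mathbf{v})$, replacing reasoning about events over infinite runs with index bookkeeping, and it makes explicit (via the vanishing of weights on non-edges) why restricting sums to actual support-graph edges is harmless, including in the overlap case $\alpha = [\Lambda(u),u]$ where the definition's $P_\mathit{push}+P_\mathit{supp}$ additivity is exactly what your separate expansion respects. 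The paper's version, in exchange, explains \emph{why} the identity is true probabilistically rather than that it follows formally.
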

\begin{proof}
We show that the probabilities assigned to outgoing edges of each state $Q_{M_\mathcal{A}}$ sum to 1.
Let $c = (u, \alpha) \in Q_{M_\mathcal{A}}$ and, w.l.o.g., $\alpha = [a, r]$
(the case with $\alpha = \bot$ is analogous).

If $a \doteq \Lambda(u)$, then only shift edges exit from $c$.
Thus,
\[
\sum_{d \in \mathcal{C}} \delta_{M_\mathcal{A}}(c, d) =
\smashoperator{\sum_{d \mid c \sshift d}} \delta_{\mathit{shift}} (u, \State{d}) \frac{\nex{d}}{\nex{c}} =
\frac{\sum_{d \mid c \sshift d} \delta_{\mathit{shift}} (u, \State{d}) \nex{d}}{\nex{c}}.
\]
Since all $c$'s outgoing edges are shift edges,
we have $\nex{c} = \sum_{d \mid c \sshift d} \delta_{\mathit{shift}} (u, \State{d}) \nex{d}$,
so $\sum_{d \in \mathcal{C}} \delta_{M_\mathcal{A}}(c, d) = 1$.

If $a \lessdot \Lambda(u)$, then $c$'s outgoing edges can be push or support edges.
The probability of $c$ being pending (i.e., part of an open support)
$\nex{c}$ is the sum of the probability of reaching a new semi-configuration $d$ by following a push edge,
and that $d$ is pending ($\nex{d}$):
\[
\mathcal{P}_\mathit{push}(c) = \smashoperator{\sum_{d \mid c \spush d}}
  \delta_\mathit{push}(u, \State{d}) \nex{d}
\]
and the probability of reaching a new pending semi-configuration $d$ through a support edge.
The latter is the probability of
\begin{enumerate}
\item reaching a new semi-configuration $d' = (v', [\Lambda(u), u])$ through a push edge;
\item that $d'$ is part of a closed support, i.e., that its stack symbol is eventually popped,
reaching the new semi-configuration $d$ ($\pvar{v'}{[\Lambda(u), u]}{\State{d}}$);
\item and that $d$ is pending ($\nex{d}$):
\end{enumerate}
\[
\mathcal{P}_\mathit{supp}(c) =
  \sum_{d \mid c \ssupp d}
  \sum_{v' \in V}
  \delta_\mathit{push}(u, v') \pvar{v'}{[\Lambda(u), u]}{\State{d}} \nex{d}
\]
where $V = \{v' \mid c \spush (v', [\Lambda(u), u])\}$.

Hence $\nex{c} = \mathcal{P}_\mathit{push}(c) + \mathcal{P}_\mathit{supp}(c)$:
if we divide both sides by $\nex{c}$, we obtain
\begin{align*}
1 &=
\smashoperator{\sum_{d \mid c \spush d}}
  \frac{\delta_\mathit{push}(u, \State{d}) \nex{d}}{\nex{c}}
  +
  \sum_{d \mid c \ssupp d}
  \sum_{v' \in V}
  \frac{\delta_\mathit{push}(u, v') \pvar{v'}{[\Lambda(u), u]}{\State{d}} \nex{d}}{\nex{c}} \\
&= \sum_{d \in \mathcal{C}} \delta_{M_\mathcal{A}}(c, d).
\end{align*}

We need not consider the case $a \gtrdot \Lambda(u)$,
because such a node would only lead to pop moves,
and hence would have $\nex{c} = 0$ and would not be part of $M_\mathcal{A}$.
\end{proof}

The rest of Theorem~\ref{thm:support-chain}'s claim is proved
in Lemmas~\ref{lemma:support-chain-zero} and \ref{lemma:support-chain-prob}.
Recall that $(\Omega, \mathcal{F}, P)$ is the probability space of $\Delta(\mathcal{A})$,
and $(\Omega', \mathcal{F}', P')$ is the one of $M_\mathcal{A}$.

\begin{lemma}
\label{lemma:support-chain-zero}
$P(\Omega \setminus \sigma^{-1}(\Omega')) = 0$
\end{lemma}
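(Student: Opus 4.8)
The plan is to show that the only way a run $\rho$ of $\Delta(\mathcal{A})$ can fail to satisfy $\sigma(\rho) \in \Omega'$ is by making \emph{pending}, somewhere along $\rho$, a semi-configuration of zero pending probability, and that this happens with probability $0$. So I would first pin down the structural facts guaranteeing that $\sigma(\rho)$ is always a well-defined infinite sequence of semi-configurations, each of which is pending in $\rho$. At a pending semi-configuration $(u_i, \tp(A_i))$ the topmost symbol is never popped, so $\symb{A_i} \not\gtrdot \Lambda(u_i)$; hence the next move must be a push or a shift, the three cases in the definition of $\sigma$ are exhaustive, and the construction never stalls on a pop. Each step advances $\rho$ by finitely many moves (a single push/shift, or a closed support reading a finite chain), so $\sigma(\rho)$ is infinite, and $\sigma_0 = (u_0, \bot)$ since $\bot$ is never popped, matching the initial state of $M_\mathcal{A}$.

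Next I would argue that a consecutive pair $\sigma_i \sigma_{i+1}$ is always an edge of $M_\mathcal{A}$ \emph{provided both endpoints lie in} $Q_{M_\mathcal{A}}$. By the definition of $\sigma$, the pair corresponds to a push, shift, or support edge of the support graph of $\mathcal{A}$: in the push and shift cases the run takes the relevant transition, so $\delta_\mathit{push}(u_i)(u_{i+1}) > 0$ (resp.\ $\delta_\mathit{shift}$); in the support case the run realizes a closed support $u_i \ssupp u_{i+j}$, so $\sum_{v'} \delta_\mathit{push}(u_i)(v') \pvar{v'}{[\Lambda(u_i), u_i]}{u_{i+j}} > 0$, i.e.\ $P_\mathit{supp} > 0$. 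In each case the formula for $\delta_{M_\mathcal{A}}$ assigns the edge a strictly positive probability as soon as $\nex{\sigma_i} > 0$ and $\nex{\sigma_{i+1}} > 0$. Consequently $\sigma(\rho)$ is an infinite path of $M_\mathcal{A}$, hence an element of $\Omega'$, if and only if $\nex{\sigma_i} > 0$ for every $i$. Writing $B$ for the set of runs whose $\sigma$-image visits some semi-configuration $c$ with $\nex{c} = 0$, this gives $\Omega \setminus \sigma^{-1}(\Omega') = B$, so it suffices to prove $P(B) = 0$.

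Finally I would bound $P(B)$. Fix $c = (s, \alpha)$ with $\nex{c} = 0$. Since $\nex{c} = 1 - \sum_{v} \pvar{s}{\alpha}{v}$, the hypothesis means that from any configuration $(s, \alpha A)$ the top symbol $\alpha$ is eventually popped almost surely. Every $\sigma_i$ is pending in $\rho$, so if $\sigma(\rho)$ visits $c$ then $\rho$ passes through some $(s, \alpha A)$ after which that occurrence of $\alpha$ is never popped; by the Markov property of $\Delta(\mathcal{A})$ (moves depend only on the current state and top stack symbol), this conditional event has probability exactly $\nex{c} = 0$. Taking the countable union over the visit time and the stack contents $A$, then the finite union over the semi-configurations $c$ with $\nex{c} = 0$, yields $P(B) = 0$, whence $P(\Omega \setminus \sigma^{-1}(\Omega')) = 0$.

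The step I expect to be the main obstacle is the reduction in the second paragraph: carefully matching the three cases of the definition of $\sigma$ with the three edge types of the support graph and, in particular, verifying positivity of the support-edge probability ($P_\mathit{supp} > 0$) from the fact that $\rho$ realizes a concrete closed support, so that ``$\sigma(\rho)$ visits only positive-pending semi-configurations'' is genuinely equivalent to ``$\sigma(\rho)$ is a path of $M_\mathcal{A}$.'' By contrast, the measure-zero computation is routine once the termination-probability reading of $\nex{c}$ and the Markov property are available.
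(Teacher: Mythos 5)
Your proof is correct and follows essentially the same route as the paper's: both reduce the bad event to the $\sigma$-image visiting a semi-configuration $c$ with $\nex{c} = 0$, observe that by the Markov property of $\Delta(\mathcal{A})$ such a semi-configuration is pending with probability exactly $\nex{c} = 0$, and conclude by a countable-union argument (the paper indexes the union by finite prefixes $\tau' \in \mathcal{C}^*$ of $\sigma$-images ending at the first zero-pending semi-configuration, with a case analysis on the last edge type, whereas you index by the semi-configuration, the visit time, and the stack contents). Your explicit check that consecutive pairs of $\sigma(\rho)$ whose endpoints both have positive $\nex{\cdot}$ are positive-probability edges of $M_\mathcal{A}$ is left implicit in the paper, but it is exactly what makes its ``shortest bad prefix'' decomposition exhaustive, so your write-up is, if anything, slightly more complete on that point.
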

\begin{proof}
Let $\hat{\Omega} = \Omega \setminus \sigma^{-1}(\Omega')$.
Any $\rho \in \hat{\Omega}$ is such that $\sigma(\rho) \not\in \Omega'$.
We analyze the earliest semi-configuration in $\sigma(\rho)$ that is not in $M_\mathcal{A}$.
Let $\tau' = c_0 \dots c_i c_{i+1}$ be the shortest prefix of $\sigma(\rho)$
such that $c_i \in Q_{M_\mathcal{A}}$ but $c_{i+1} \not\in Q_{M_\mathcal{A}}$.
We call $T$ the set of such prefixes for all $\rho \in \hat{\Omega}$.
We define $\hat{\Omega}(\tau') = \{\rho \in \Omega \mid \sigma(\rho) = \tau'\tau'', \tau'' \in \mathcal{C}^{\omega}\}$
as the set of runs of $\mathcal{A}$ such that $\tau'$ is a prefix of their image through $\sigma$.

We prove that for any such $\tau'$ we have $P(\hat{\Omega}(\tau')) = 0$.
Let $c_i = (u_i, \alpha_i)$ and $c_{i+1} = (u_{i+1}, \alpha_{i+1})$.

If $\symb{\alpha_i} \doteq \Lambda(u_i)$, then $c_i \sshift c_{i+1}$.
In a run $\rho \in \hat{\Omega}(\tau')$, $u_i$ and $u_{i+1}$ are part of an open support,
i.e., $\rho$ reaches two consecutive configurations $(u_i, \alpha_iA) \allowbreak (u_{i+1}, \alpha_{i+1}A)$,
and $\alpha_{i+1}$ is never popped, but at most updated by shift moves.
Indeed, if $\alpha_{i+1}$ was popped, then $u_i$ and $u_{i+1}$ would be part of a closed support,
and a suffix of $\tau'$ including $c_i$ and $c_{i+1}$ would be replaced by a support edge in $\sigma(\rho)$,
contradicting $\rho \in \hat{\Omega}(\tau')$.
However, since $c_{i+1} \not\in Q_{M_\mathcal{A}}$, we have $\nex{c_{i+1}} = 0$,
and the probability measure of such trajectories is also 0, hence $P(\hat{\Omega}(\tau')) = 0$.

If $\symb{\alpha_i} \lessdot \Lambda(u_i)$, then $c_i \spush c_{i+1}$,
$c_i \ssupp c_{i+1}$, or both.
We split $\hat{\Omega}(\tau') = \hat{\Omega}_\mathit{push}(\tau') \cup \hat{\Omega}_\mathit{supp}(\tau')$ into two sets,
depending on whether $c_i$ and $c_{i+1}$ are linked by a push move or a support.
We have $P(\hat{\Omega}(\tau')) = P(\hat{\Omega}_\mathit{push}(\tau')) + P(\hat{\Omega}_\mathit{supp}(\tau'))$.

If $c_i \spush c_{i+1}$, the argument is the same as for the shift case,
except $c_{i+1}$ is part of a new open support, starting with the push move.
Again, the probability that this new support is open is 0,
hence $\hat{\Omega}_\mathit{push}(\tau') = 0$,

If $c_i \ssupp c_{i+1}$, then a run $\rho \in \hat{\Omega}(\tau')$ reaches a sequence of configurations
$(u_i, \alpha_iA) \allowbreak (u'_i, [\Lambda(u_i), u_i] \alpha_i A) \allowbreak \dots \allowbreak (u^{(k)}_i, [\Lambda(u^{(j)}_i), u_i], \alpha_i A) \allowbreak (u_{i+1}, \alpha_{i+1}A)$
with $1 \leq j < k$, in which $\alpha_i A$ always remains on the stack.
As discussed in the case for shift edges, $u_i$ and $u_{i+1}$ are part of an open support,
or they would not appear in $\sigma(\rho)$.
Since $\nex{c_{i+1}} = 0$, we have $P(\hat{\Omega}_\mathit{supp}(\tau')) = 0$, and $P(\hat{\Omega}(\tau')) = 0$.

Note that the support graph contains no pop edges (the supports they close are replaced by support edges),
so we need not consider the case $\symb{\alpha_i} \gtrdot \Lambda(u_i)$.

In conclusion, $\hat{\Omega} = \cup_{\tau' \in T} \hat{\Omega}(\tau')$ and, since $T \subseteq \mathcal{C}^*$,
$T$ is countable, and \[P(\hat{\Omega}) = \sum_{\tau' \in T} P(\hat{\Omega}(\tau')) = 0.\]
\end{proof}

\begin{lemma}
\label{lemma:support-chain-prob}
Given a set $F' \in \mathcal{F}'$, we have $F = \sigma^{-1}(F') \in \mathcal{F}$ and $P'(F') = P(F)$.
\end{lemma}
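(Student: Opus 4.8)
The plan is to prove the claim at the level of \emph{cylinder sets} and then extend to all of $\mathcal{F}'$ by a standard uniqueness argument. Recall that $\mathcal{F}'$ is generated by the $\pi$-system of cylinders $\mathrm{Cyl}(\tau')$, where $\tau' = c_0 c_1 \dots c_n$ ranges over finite prefixes of paths in $M_\mathcal{A}$ with $c_0 = (u_0, \bot)$, and $P'(\mathrm{Cyl}(\tau')) = \prod_{i=0}^{n-1} \delta_{M_\mathcal{A}}(c_i)(c_{i+1})$. First I would show that $\sigma$ is measurable by checking that $\sigma^{-1}(\mathrm{Cyl}(\tau')) \in \mathcal{F}$: this preimage is exactly the set $S(\tau')$ of runs of $\Delta(\mathcal{A})$ whose $\sigma$-image begins with $c_0 \dots c_n$, and $S(\tau')$ is cut out by finitely many structural conditions (each a countable union of run-prefix cylinders, since a support excursion may be arbitrarily long) together with the tail event ``the stack symbol of $c_n$ is never popped''. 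The latter is the complement of a countable union of finite-prefix events and hence lies in $\mathcal{F}$, so $S(\tau')$ is measurable and $\sigma^{-1}$ extends to all of $\mathcal{F}'$. By Lemma~\ref{lemma:support-chain-zero} the pushforward $F' \mapsto P(\sigma^{-1}(F'))$ is a genuine probability measure on $\mathcal{F}'$; since it will agree with $P'$ on the generating $\pi$-system of cylinders, the uniqueness-of-measure theorem forces $P(\sigma^{-1}(F')) = P'(F')$ for every $F' \in \mathcal{F}'$, which is exactly the claim.

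The heart of the argument is therefore the cylinder identity $P(S(c_0 \dots c_n)) = P'(\mathrm{Cyl}(c_0 \dots c_n))$. I would prove the equivalent, telescoped form $P(S(c_0 \dots c_n)) = \nex{c_n} \prod_{i=0}^{n-1} w(c_i, c_{i+1})$, where $w(c_i, c_{i+1})$ denotes the ``raw'' weight $\delta_{\mathit{shift}}(u_i)(v_i)$ on a shift edge and $P_\mathit{push} + P_\mathit{supp}$ on a push/support edge, so that $\delta_{M_\mathcal{A}}(c_i)(c_{i+1}) = w(c_i,c_{i+1}) \nex{c_{i+1}} / \nex{c_i}$ by the definition of the support chain. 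Since $c_0 = (u_0, \bot)$ and $\bot$ is never popped, $\nex{c_0} = 1$, so the telescoped product equals $\prod_{i=0}^{n-1} \delta_{M_\mathcal{A}}(c_i)(c_{i+1}) = P'(\mathrm{Cyl}(c_0 \dots c_n))$, and the two identities are equivalent.

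To establish the telescoped form I would induct by peeling off the \emph{first} $\sigma$-jump. The base case $n = 0$ reads $P(S(c_n)) = \nex{c_n}$, which is the definition of the pending probability. For the inductive step I note that because every $\sigma$-jump keeps the stack symbol of $c_i$ at or below the current level, the event ``$c_n$ is pending'' already forces every earlier $c_i$ to be pending; hence $S(c_0 \dots c_n)$ factors through the event that the run performs the first $\sigma$-jump $c_0 \to c_1$ and then, from the resulting configuration, realizes $S(c_1 \dots c_n)$. By the strong Markov property of $\Delta(\mathcal{A})$ together with the visibly-pushdown structure---the deeper stack cannot be touched until the symbol of $c_1$ is popped, so the continuation is distributed as a fresh run from $c_1$ regardless of what lies below---this yields $P(S(c_0 \dots c_n)) = w(c_0, c_1)\, P(S(c_1 \dots c_n))$. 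The first-jump weight is $\delta_{\mathit{shift}}(u_0)(v_1)$ for a shift move, $\delta_{\mathit{push}}(u_0)(v_1) = P_\mathit{push}$ for a push recorded directly, and $\sum_{v'} \delta_{\mathit{push}}(u_0)(v')\, \pvar{v'}{[\Lambda(u_0), u_0]}{\State{c_1}} = P_\mathit{supp}$ for a support jump whose intermediate excursion closes according to the termination probabilities; in the overlapping push/support case these correspond to disjoint families of runs, so the weights add to $w(c_0, c_1)$. Iterating the recursion down to the base case gives the telescoped identity.

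The main obstacle I expect is the rigorous justification of the first-jump factorization $P(S(c_0 \dots c_n)) = w(c_0, c_1)\, P(S(c_1 \dots c_n))$. Two points need care: first, that the probability of a \emph{support} jump is exactly $P_\mathit{supp}$, which requires identifying the closed excursion with a terminating sub-run and invoking the definition of $\pvar{\cdot}{\cdot}{\cdot}$ summed over the intermediate push targets; and second, that conditioning on the never-popped tail event of $c_1$ does not distort the continuation law---this is handled by observing that pending-ness of $c_n$ subsumes that of $c_1$, so no explicit renormalization is introduced at intermediate steps and the single factor $\nex{c_n}$ collects all tail information at the end. The measurability of these never-popped tail events, needed both for the decomposition and for the generating-$\pi$-system argument, must also be verified, but follows from writing them as countable combinations of finite-prefix cylinders.
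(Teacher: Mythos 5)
Your proposal is correct, and its skeleton coincides with the paper's proof: both reduce the statement to cylinder sets, induct on the prefix length, invoke the Markov property of $\Delta(\mathcal{A})$ to factor each jump, and compute exactly the same per-edge quantities ($\delta_\mathit{shift}(u)(v)$ for shift edges, $\delta_\mathit{push}(u)(v)$ for push edges, and $\sum_{v'}\delta_\mathit{push}(u)(v')\pvar{v'}{[\Lambda(u),u]}{v}$ for support edges). Where you genuinely depart from the paper is in the bookkeeping of the pending probabilities and in the measure-theoretic scaffolding. The paper extends the prefix at its \emph{last} position and applies Bayes' theorem at every inductive step, rewriting $P(J_{n+1,c_{n+1}} \mid C(\tau'))$ as $P(J_{1,c_{n+1}})\,\nex{c_{n+1}}/\nex{c_n}$, so that each step reproduces one factor $\delta_{M_\mathcal{A}}(c_n)(c_{n+1})$ verbatim; the renormalization is performed locally, using the observation that, once the transition is taken for granted, $c_n$ is pending iff $c_{n+1}$ is. You instead peel off the \emph{first} jump, keep raw unnormalized weights, and defer all tail mass to the single factor $\nex{c_n}$, recovering the normalized product by telescoping together with $\nex{c_0}=1$; your global justification---that the never-popped condition at $c_n$ subsumes the never-popped conditions at all earlier $c_i$---is the same structural fact the paper uses locally, just exploited once instead of at every step. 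Your route buys a proof free of conditional-probability manipulations, and it is more complete where the paper is implicit: the paper stops at the cylinder-set identity, whereas the lemma is stated for all $F'\in\mathcal{F}'$, so the measurability of $\sigma^{-1}$ and the $\pi$-system/uniqueness-of-measure argument you spell out (together with Lemma~\ref{lemma:support-chain-zero} guaranteeing the pushforward has total mass $1$) are genuinely needed. One point to handle carefully in a full write-up: for edges in $E_\mathit{push}\cap E_\mathit{supp}$, once the never-popped condition is deferred, the ``structural'' push and support events are no longer disjoint (a run may push to $u_{i+1}$ and later pop that symbol, closing a support into the same state); disjointness---and hence the additivity of the two weights---is restored only after intersecting with the final tail event that the stack symbol of $c_n$ is never popped, since that event forces every intermediate pushed symbol to remain on the stack. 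The paper sidesteps this by keeping the popped/never-popped distinction attached to each case and simply summing the two probabilities; your sketch asserts the disjointness, which is true but deserves this one extra line of justification.
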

\begin{proof}
Let $\tau' = c_0 \dots c_n \in Q_{M_\mathcal{A}}^*$ be such that
$C'(\tau') = \{\tau \in \Omega' \mid \tau = \tau' \tau'', \tau'' \in Q_{M_\mathcal{A}}^\omega\} \in \mathcal{F}'$
is a cylinder set of $M_\mathcal{A}$.
We prove the claim for all such cylinder sets by induction on $n \geq 0$.

For the case $n = 0$ we have $\tau' = \varepsilon$ and $C'(\varepsilon) = \Omega'$.
Obviously, $\sigma^{-1}(\Omega') = \Omega \setminus (\Omega \setminus \sigma^{-1}(\Omega'))$,
and by Lemma~\ref{lemma:support-chain-zero} we have
$P(\sigma^{-1}(\Omega')) = P(\Omega) - P(\Omega \setminus \sigma^{-1}(\Omega')) = 1 - 0 = 1$.
Since $\Omega \setminus \sigma^{-1}(\Omega') \in \mathcal{F}$,
its complement $\sigma^{-1}(\Omega') \in \mathcal{F}$ is also in $\mathcal{F}$.

Let $C(\tau') = \sigma^{-1}(C'(\tau'))$.
We prove that $P(C(\tau' c_{n+1})) = P'(C'(\tau' c_{n+1}))$ starting from the inductive hypothesis
that the lemma claim holds for $\tau'$.

We define the event that the $i$-th semi-configuration of the $\sigma$-image of a run is $c$ as
$J_{i,c} = \{\rho \in \Omega \mid \sigma(\rho) = c_0 \dots c_i \dots \text{ with } c_i = c\}$.

Let $c_n = (u_n, \alpha_n)$ and $c_{n+1} = (u_{n+1}, \alpha_{n+1})$.
Note that, in general, if $\tau' c_{n+1}$ is a prefix of a trajectory in $M_\mathcal{A}$,
then any $\rho \in C(\tau')$ reaches two configurations
$(u_n, \alpha_nA), \allowbreak (u_{n+1}, \alpha_{n+1}A)$ either consecutive
(if $c_n$ and $c_{n+1}$ are linked by a push or a shift edge)
or separated by a chain support.
In any case, $\alpha_nA$ remains on the stack forever after $(u_n, \alpha_nA)$
(the label in $\alpha_n$ is at most updated by shift moves, but never popped).
Thus, $J_{n+1, c_{n+1}}$ only depends on $u_n$ and $\symb{\alpha_n}$,
and the probability of reaching $c_{n+1}$ through a push or a shift move, or a chain support.
Therefore, this process has the Markov property, and
$P(J_{n+1, c_{n+1}} \mid C(\tau')) = P(J_{1, c_{n+1}} \mid C(c_n))$.

Moreover, given $c \in Q_{M_\mathcal{A}}$ we define as $[c \uparrow]$
the event that $c$ is pending ($\nex{c}$ is the probability of this event).
Thanks to the Markov property of $\Delta(\mathcal{A})$,
this event is also independent of the stack contents and previous history of a run that
reaches a configuration corresponding to semi-configuration $c$.

In general, we can write the following:
\begin{align*}
&P(J_{n+1, c_{n+1}} \mid C(\tau')) & \\
&= P(J_{1, c_{n+1}} \mid C(c_n)) & \text{Markov property} \\
&= P(J_{1, c_{n+1}} \mid [c_n \uparrow]) & \text{we assume the run starts at $c_n$} \\
&= P(J_{1, c_{n+1}}) P([c_n \uparrow] \mid J_{1, c_{n+1}}) / P([c_n \uparrow]) & \text{Bayes' Theorem} \\
&= P(J_{1, c_{n+1}}) P([c_{n+1} \uparrow]) / P([c_n \uparrow]) & \\
&= P(J_{1, c_{n+1}}) \nex{c_{n+1}} / \nex{c_n} &
\end{align*}
where the last step occurs because,
once the transition from $c_n$ to $c_{n+1}$ is taken for granted,
$c_n$ is pending iff $c_{n+1}$ is.
To determine $P(J_{1, c_{n+1}})$, we need to study each type of edge
in the support graph separately.

\smallskip
\noindent $\symb{\alpha_n} \doteq \Lambda(u_n)$.
Then $c_n \sshift c_{n+1}$, and $P(J_{1, c_{n+1}}) = \delta_\mathit{shift}(u_n)(u_{n+1})$.
Thus,
\begin{align*}
P(C(\tau' c_{n+1}))
&= P(C(\tau')) \delta_\mathit{shift}(u_n)(u_{n+1}) \frac{\nex{c_{n+1}}}{\nex{c_n}} \\
&= P'(C'(\tau')) \delta_\mathit{shift}(u_n)(u_{n+1}) \frac{\nex{c_{n+1}}}{\nex{c_n}} \\
&= P'(C'(\tau' c_{n+1})),
\end{align*}
where the second equality is due to the inductive hypothesis,
and the third to the Markov property of $M_\mathcal{A}$
and the way $\delta_{M_\mathcal{A}}$ is defined.

\smallskip
\noindent $\symb{\alpha_n} \lessdot \Lambda(u_n)$.
Then $c_n \spush c_{n+1}$, $c_n \ssupp c_{n+1}$, or both.
The case $c_n \spush c_{n+1}$ is analogous to $c_n \sshift c_{n+1}$.

We analyze the case in which $c_n$ and $c_{n+1}$ are linked by a support edge.
Here $P(J_{1, c_{n+1}})$ is the probability that a closed chain support occurs
from $(u_n, \alpha_n\bot)$ to $(u_{n+1}, \alpha_{n+1}\bot)$.
Since the support edge may represent multiple supports,
this probability is a sum on all closed supports that can start from $u_n$,
depending on the push move that fires from $u_n$.
We denote as $S(u_n, v, u_{n+1})$ the event that a run contains a closed support
from $(u_n, \alpha_n\bot) \allowbreak (v, [\Lambda(u_n), u_n] \alpha_n\bot)$ to $(u_{n+1}, \alpha_{n+1}\bot)$.
$S(u_n, v, u_{n+1})$ is then the conjunction of the event that a push move
occurs from $u_n$ to $v$, and that the support is closed or,
equivalently, that $[\Lambda(u_n), u_n]$ is popped, denoted $[v, [\Lambda(u_n), u_n] \mid u_{n+1}]$.

Let $V = \{v \in Q \mid (u_n, \alpha_n) \spush (v, [\Lambda(u), u])\}$. We have:
\begin{align*}
P(J_{1, c_{n+1}})
&= P(\cup_{v \in V} S(u_n, v, u_{n+1})) \\
&= \sum_{v \in V} P([(u_n, \alpha_n\bot) \gedge (v, [\Lambda(u_n), u_n] \alpha_n\bot)] \cup [v, [\Lambda(u_n), u_n] \mid u_{n+1}]) \\
&= \sum_{v \in V} P([(u_n, \alpha_n\bot) \gedge (v, [\Lambda(u_n), u_n] \alpha_n\bot)]) P([v, [\Lambda(u_n), u_n] \mid u_{n+1}]) \\
&= \sum_{v \in V} \delta_\mathit{push}(u_n)(v) \pvar{v}{[\Lambda(u_n), u_n]}{u_{n+1}}.
\end{align*}
Thus,
$P(C(\tau' c_{n+1})) = P(C(\tau')) \sum_{v \in V} \delta_\mathit{push}(u_n)(v) \pvar{v}{[\Lambda(u_n), u_n]}{u_{n+1}}$,
and it is easy to see that, due to the definition of $\delta_{M_\mathcal{A}}$,
$P(C(\tau' c_{n+1})) = P'(C'(\tau' c_{n+1}))$.

If $c_n$ and $c_{n+1}$ are linked by both a push and a support edge,
$P(C(\tau' c_{n+1}))$ is the sum of the probabilities derived for each case.
\end{proof}

\subsection{Hardness of Model Checking}
\label{sec:hardness-proof}

\begin{proof}[Proof of Lemma~\ref{lemma:exptime-hardness}]
We reduce the acceptance problem of linear-space-bounded alternating Turing machines
to qualitative \acs{POTLF} model checking.
The construction uses ideas from the one given in the proof of \cite[Theorem 33]{EtessamiY12},
but differs due to the use of \acp{pOPA} instead of \acp{RMC},
and \ac{POTLF} instead of \acs{LTL}.

There exists a linear-space-bounded one-tape alternating Turing machine $\mathcal{M}$
for which the acceptance decision problem is \textsc{exptime}-complete w.r.t.\ a given input of length $n$.
Let $\mathcal{M} = (Q, \Gamma, \delta, q_0, g)$, where
$Q$ is a finite set of control states,
$\Gamma$ is a finite set of tape symbols, disjoint from $Q$,
$\delta : Q \times \Gamma \rightarrow \powset{Q \times \Gamma \times \{L, R\}}$ is the transition function
such that $|\delta(q, X)|$ is either 2 or 0 for each $(q, X) \in Q \times \Gamma$
(we assume w.l.o.g.\ that the machine has only two moves available in any state except halting states),
$q_0$ is the initial control state, and
$g : Q \rightarrow \{\exists, \forall, \textit{acc}, \textit{rej}\}$
is a function that classifies each state between existential, universal, accepting and rejecting
(the last two being always halting states).
Configurations are strings in $\Gamma^* (Q \times \Gamma) \Gamma^*$
such that the $i$-th symbol is $(q, X) \in Q \times \Gamma$ if
the machine's control is in state $q$ and its head is at position $i$ of the tape, which contains symbol $X$.
Remaining symbols only identify the tape content at the corresponding position.
A computation is a sequence of configurations such that the first one is
$(q_0, w_0) w_1 \dots w_n$, where $w = w_0 w_1 \dots w_n$ is the input word,
and the remaining configurations are obtained according to usual transition rules for Turing machines.
All computations end in a halting state.

For $\mathcal{M}$ to accept a word, the set of computations bifurcating from each universal configuration
must all end up in an accepting state.
We can see such a set as a \emph{computation tree} whose nodes are $\mathcal{M}$'s configurations,
and universal nodes have two children, existential nodes have one, and halting nodes are leaves.
If the input word is accepted, all leaves are accepting states.
For a given input $w$, we build a \ac{pOPA} $A_\mathcal{M}$ and a \acs{POTLF} formula $\varphi_\mathcal{M}$
s.t.\ if $\varphi_\mathcal{M}$ does not hold almost surely on $A_\mathcal{M}$,
then there exists an accepting computation of $\mathcal{M}$ on $w$.

\begin{figure}[tb]
\centering
\(
\begin{array}{r | c c c c c}
         & \lcall   & \lret   & \lstm \\
\hline
\lcall   & \lessdot & \doteq  & \doteq \\
\lret    & \gtrdot  & \gtrdot & \gtrdot \\
\lstm    & \lessdot  & \doteq & \doteq \\
\end{array}
\)
\caption{\ac{OPM} $M_A$.}
\label{fig:opm-ma}
\end{figure}

We define $A_\mathcal{M} = (\Sigma_A, \allowbreak M_A, \allowbreak U, \allowbreak u_0,
\allowbreak \delta^A, \allowbreak \Lambda)$ where
$\Sigma_A = \{\lcall, \lret, \lstm\}$, $M_A$ is reported in Fig.~\ref{fig:opm-ma},
and we describe states, transitions and labels in the following.
Note that, since we are interested in qualitative model checking,
the exact values of probabilities that we assign to transitions do not matter,
as long as they are positive and form well-defined distributions.
For each pair $(q, X) \in Q \times \Gamma$,
$U$ contains two states $u[q, X]$ and $v[q, X]$, both labeled with $\lstm$.
The distinguished initial state $u_0$ is labeled with $\lcall$,
and has a unique push transition targeting $u[q_0, w_0]$ with probability 1.
The rest of the \ac{pOPA} is made so that each computation will eventually reach
$v[q_0, w_0]$ with a topmost stack symbol containing $\lret$.
Then, a pop move links $v[q_0, w_0]$ to another distinguished state $v_0$ with probability 1,
and is a sink state $v_0$ labeled with $\lstm$ and has a shift self-loop.
The computation starting from $u[q_0, w_0]$ is intended to represent a depth-first traversal
of an accepting computation tree of $\mathcal{M}$.
We now describe the rest of $A_\mathcal{M}$ while describing such a computation.

Let $(q, X) \in Q \times \Gamma$ with $g(q) = \exists$.
$q$ represents the current state of $\mathcal{M}$ in the computation,
and $X$ the symbol in the current tape position, say $i$.
The computation reaches $u[q, X]$ from a $\lcall$ node $x$ with a push move.
The topmost stack symbol is $[\lcall, x]$ and $\Lambda(u[q, X]) = \lstm$,
so the next move will be a shift.
Let $(p_j, Y_j, D_j) \in \delta(q, X)$ for $j \in \{1,2\}$.
For each $j$ and $Z \in \Gamma$,
a shift move links $u[q, X]$ to a node $r[q, X, j, Z] \in U$ labeled with $\lstm$
that has a shift self-loop, and a shift move targeting another node $r'[q, X, j, Z] \in U$ labeled with $\lcall$.
By taking one of the shift moves starting from $u[q, X]$,
$A_\mathcal{M}$ will ``guess'' both the move $j$ chosen by $\mathcal{M}$ in the existential state $q$,
and the symbol $Z$ in the next tape position (hence, either $i+1$ if $D_i = R$ or $i-1$ if $D_i = L$). 
Consistency of such guesses will be enforced by $\varphi$,
including the fact that the tape initially contains the input word $w$.
Once it reaches $r[q, X, j, Z]$, $A_\mathcal{M}$ loops $i-1$ times around it,
and finally moves to $r'[q, X, j, Z]$, which is labeled with $\lcall$.
A push move links $r'[q, X, j, Z]$ to $u[p_j, Z]$.
The \ac{pOPA} is inductively built in such a way that any run eventually reaches
a new state $s[q, X, j, Z]$ with a move popping stack symbol $[\lcall, r'[q, X, j, Z]]$.
The topmost stack symbol is now again $[\lcall, x]$, and $s[q, X, j, Z]$ is labeled with $\lstm$,
so the next move is a shift.
State $s[q, X, j, Z]$ has indeed a shift self-loop,
and a shift move targeting another state $s'[q,X]$ labeled with $\lret$.
To simulate $\mathcal{M}$'s computation, the run in $A_\mathcal{M}$ keeps track, again,
of the current head position by looping $j-1$ times around $s[q, X, j, Z]$ (enforced by $\varphi$),
and then moves to $s'[q,X]$.
The latter is linked to $v[q, X]$ by a shift transition.
After this shift move, the topmost stack symbol is $[\lret, x]$,
ready to be popped (recall that $v[q, X]$ is labeled with $\lstm$ and $\lret \gtrdot \lstm$).
From $v[q, X]$ start many pop transitions, one for each possible state $x$ in the symbol $[\lcall, x]$
that was pushed onto the stack when reaching $u[q, X]$.
In particular, for each non-halting $q' \in Q$, $X' \in \Gamma$, $j' \in \{1,2\}$, and $Z' \in \Gamma$,
we have $(v[q, X], r'[q', X', j', Z'], s[q', X']) \in \delta^A_\textit{pop}$.

We now consider the case $g(q) = \forall$.
As in the $\exists$ case, the run reaches $u[q, X]$ from a $\lcall$ node $x$ with a push move,
so the topmost stack symbol is $[\lcall, x]$.
Let $(p_j, Y_j, D_j) \in \delta(q, X)$ for $j \in \{1,2\}$.
This time, the states of $A_\mathcal{M}$ representing the two moves of $\mathcal{M}$ are posed ``in series''.
This way, the \ac{pOPA} run will visit both of them, and an accepting run of $A_\mathcal{M}$
will encode both accepting computations spawning from a universal state of $\mathcal{M}$.
Hence, a shift move links $u[q, X]$ to a state $r[q, X, 1, Z]$ for each $Z \in \Gamma$
that has a shift self-loop and a shift move targeting $r'[q, X, 1, Z]$.
The run will loop around $r[q, X, 1, Z]$ for $j-1$ times (where $j$ is the current tape position)
before reaching $r'[q, X, 1, Z]$.
The latter state is labeled with $\lcall$ and a push move links it to $u[p_1, Z]$.
Once the run pops the pushed stack symbol $[\lcall, r'[q, X, 1, Z]]$,
it reaches a state $s[q, X, 1, Z]$ labeled with $\lstm$,
and it loops around it through a shift self-loop for $j-1$ times.
State $s[q, X, 1, Z]$ is linked by a shift move to a state $s'[q, X, 2, Z']$ for each $Z' \in \Gamma$,
so here the run guesses the tape symbol in the next tape position
and visits the computation tree spawned by the second universal move of $q$.
State $s'[q, X, 2, Z']$ is labeled with $\lcall$ and linked to $u[p_2, Z']$ by a push move.
When the run pops stack symbol $[\lcall, s'[q, X, 2, Z']]$,
it reaches a state $t[q, X, 2, Z'] \in U$ labeled with $\lstm$.
This state has a shift self-loop, that is meant to be visited for $j-1$ times,
and another shift move links it to another state $t[q, X]$, labeled with $\lret$.
A shift move then links $t[q, X]$ to $v[q,X]$ from which,
like in the existential case, starts a pop transition for each possible state $x$ in the stack symbol $[\lcall, x]$. 

Finally, let $g(q) \in \{\textit{acc}, \textit{rej}\}$.
Such a state would induce a leaf in $\mathcal{M}$'s computation tree.
Thus, the run in $A_\mathcal{M}$ reaches $u[q, X]$ from a $\lcall$ node $x$ with a push move,
and does not do any more push moves before popping $[\lcall, x]$.
A shift move links $u[q, X]$ to another state $r[q, X]$ labeled with $\lstm$,
that has a shift self loop, meant to be taken $j-1$ times,
and another shift move links it to node $s[q, x]$ labeled with $\lret$.
A shift move links $s[q, x]$ to $v[q, X]$, from which several pop moves start
(one for each possible $x$, as in the other cases).

Thus, at any given point of the run of $A_\mathcal{M}$,
the stack contains all moves taken by $\mathcal{M}$ in the path in its computation tree
going from the node encoded in the current configuration of $A_\mathcal{M}$ to the root.
Such moves, together with the number of times the run of $A_\mathcal{M}$ stays in self-loops,
can be used to reconstruct the whole computation tree of $\mathcal{M}$.

To make sure that $A_\mathcal{M}$'s run actually encodes such a computation tree, however,
We need to craft $\varphi$ so that a run of $A_\mathcal{M}$ that violates it with positive probability
encodes an accepting computation tree of $\mathcal{M}$.
$\varphi$ can be written as $\varphi = \neg \xi$, where $\xi$ describes a run of $A_\mathcal{M}$
that describes an accepting computation tree of $\mathcal{M}$.
The proof of \cite[Theorem 33]{EtessamiY12} builds $\xi$ as an \acs{LTL} formula
that has states in $A_\mathcal{M}$ as propositions,
and is a conjunction of several requirements.
So for instance, the fact that the run eventually reaches $v_0$ without ever touching any rejecting state
can be expressed as $\lluntil{(\land_{q,X \in Q \times \Gamma \mid g(q) = \textit{rej}} \neg u[q,X])}{v_0}$.
For any positive integer $i$, formula 
$\psi_i = \neg r \land (\land_{k=1}^i \lnext^k r) \land (\neg \lnext^{i+1} r)$
with $r = \lor_{q, X, j, Z \in Q \times \Gamma \times \{1,2\} \times \Gamma} r[q, X, j, Z]$
states that the run visits moves to a $r$ node in the next steps, and loops around it for $i$ times.
$\psi_i$ can be used to make sure that head positions are always correct according to the transitions rules
of the Turing machine, including the fact that its computation starts in position 1.
It can also be used to state that the first time each tape symbol is read,
it contains the appropriate character of the input word $w$.

Instead of rewriting each one of these formulas in \ac{POTLF}, we give a translation for \acs{LTL} operators,
that can be used to translate the formulas given in \cite[Theorem 33]{EtessamiY12}.
Such formulas then require minimal adaptations for the structure of the \ac{pOPA} that we build.
We have:
\begin{align*}
&\lnext \phi \equiv \ldnext \phi \lor \lunext \phi &
&\lluntil{\phi}{\psi} \equiv \lcuuntil{(\lglob{d} \phi)}{(\psi \lor (\lcduntil{(\phi \land \zeta)}{\psi})} \\
\shortintertext{where}
&\lglob{d} \phi \equiv \neg (\lcduntil{\top}{\neg \psi}) &
&\zeta \equiv \ldnext (\neg \lret \land \lglob{d} \neg \psi) \implies \lglob{d} \phi
\end{align*}
The translation for the $\lnext$ operator is trivial.
The one for the until operator works by concatenating an upward summary until which climbs up the $\chain$ relation
with a downward until, which climbs it down.
The left-hand-sides of both untils make sure that $\phi$ holds in all chain bodies skipped by the summary path.
Formula $\zeta$ states that is a chain body is not empty ($\ldnext \neg \lret$) and $\psi$ never holds in it,
then $\phi$ must hold in all of its positions.
We use it to distinguish chain bodies that are completely skipped by the downward summary path
from those that are entered (because $\psi$ holds in there).

Note that this translation only holds because \ac{OPM} $M_A$ from Fig.~\ref{fig:opm-ma}
induces a $\chain$ relation that is never one-to-many (but actually, always one-to-one).
Otherwise, the translation given in \cite[Section 3.4.1]{ChiariMP21b} is needed,
but it also contains past operators not included in \ac{POTLF}.
\end{proof}









  





  




\end{document}